\documentclass{article}%
\usepackage{amsmath}
\usepackage{amsfonts}
\usepackage{amssymb}
\usepackage{graphicx}%
\setcounter{MaxMatrixCols}{30}
\addtocounter{footnote}{1}
\makeatletter
\@addtoreset{figure}{section}
\def\thefigure{\thesection.\@arabic\c@figure}
\def\fps@figure{h,t}
\@addtoreset{table}{bsection}
\def\thetable{\thesection.\@arabic\c@table}
\def\fps@table{h, t}
\@addtoreset{equation}{section}

\makeatother
\pagestyle{myheadings}
\markboth{\sl  D\'iaz and L\'azaro-Cam\'i: Monopoles in arbitrary dimension} {\sl
D\'iaz and L\'azaro-Cam\'i: Monopoles in arbitrary dimension}
\newtheorem{theorem}{Theorem}

\newtheorem{example}[theorem]{Example}
\newtheorem{examples}[theorem]{Examples}

\newtheorem{lemma}[theorem]{Lemma}

\newtheorem{proposition}[theorem]{Proposition}
\newtheorem{remark}[theorem]{Remark}

\newenvironment{proof}[1][Proof]{\noindent\textbf{#1.} }{\ \rule{0.5em}{0.5em}}

\addtolength{\hoffset}{-1.5cm}
\addtolength{\textwidth}{3cm}
\addtolength{\textheight}{1cm}
\begin{document}

\title{\textbf{Monopoles in arbitrary dimension}}
\author{Pablo D\'{\i}az$^{1}$ and Joan-Andreu L\'{a}zaro-Cam\'{\i}$^{2}\bigskip$\\$^{1,2}${\small Departamento de F\'{\i}sica Te\'{o}rica. Universidad de
Zaragoza.}\\{\small Pedro Cerbuna, 12. E-50009 Zaragoza, Spain.\medskip}\\$^{2}${\small Centre de Recerca Matem\`{a}tica.}\\{\small Apartat 50. E-08193 Bellaterra (Barcelona), Spain.}}
\date{}
\maketitle

\begin{abstract}
A self-contained study of monopole configurations of pure Yang-Mills theories
and a discussion of their charges is carried out in the language of principal
bundles. A $n$-dimensional monopole over the sphere $\mathbb{S}^{n}$ is a
particular type of principal connection on a principal bundle over a symmetric
space $K/H$ which is $K$-invariant, where $K=SO(n+1)$ and $H=SO(n)$. It is
shown that principal bundles over symmetric spaces admit a unique
$K$-invariant principal connection called canonical, which also satisfy
Yang-Mills equations. The geometrical framework enables us to describe their
associated field strengths in purely algebraic terms and compute the charge of
relevant (Yang-type) monopoles avoiding the use of coordinates. Besides, two
corrections on known results are performed in this paper. First, it is proven
that the Yang monopole should be considered a connection invariant by
$Spin(5)$ instead of by $SO(5)$, as Yang did in his original article
\cite{yang}. Second, unlike the way suggested in \cite{townsend}, we give the
correct characteristic class to be used to calculate the charge of the
$SO(2n)$-monopoles considered by Gibbons and Townsend.

\end{abstract}

\makeatletter\addtocounter{footnote}{1} \footnotetext{e-mail: \texttt{pdiaz}%
{\texttt{@unizar.es}}} \addtocounter{footnote}{1} \footnotetext{e-mail:
\texttt{lazaro}{\texttt{@unizar.es}}} \makeatother

\bigskip

\noindent\textbf{Keywords:} Monopole, gauge theory, homogeneous space,
symmetric space, invariant connection, Yang-Mills connection, characteristic classes.

\bigskip

\section{Introduction}

Monopoles in gauge theories have deserved a lot of attention since Dirac
introduced his magnetic monopole \cite{Dirac}, mainly due to the fact that
monopoles carry an intrinsically associated charge which only takes discrete
values, something that could easily explain the observable quantization of the
charge in electromagnetic theory. Recall that the Dirac monopole can be seen
as a static singular solution on $\mathbb{R}^{3}$ of a field theory with gauge
group $U\left(  1\right)  $. In practice, monopoles have never been observed,
and their existence is only justified from a theoretical point of view in
order to build a bridge between classical and quantum field theories. After
Dirac and the explosion of the popularity of gauge theories, there have been
other attempts to generalize the concept of monopole to different
(non-abelian) gauge groups in higher dimensions. Among them, \cite{yang} is
one of the most celebrated generalizations.

One of the most remarkable aspects of monopoles is that their charge is
related to the topological properties of the underlying space and strongly
depends on the way the gauge potential is attached to it. In other words,
monopoles cannot be understood \textit{at a local level} but their properties
need to be described form a \textit{global point of view.} In particular,
unless additional boundary conditions are required, there cannot exist
monopoles in the Euclidean space $\mathbb{R}^{n}$, $n\in\mathbb{N}$, but, on
the contrary, monopoles exhibit a singularity at the origin $0\in
\mathbb{R}^{n}$, where the charge is supposed to be. Therefore, $0\in
\mathbb{R}^{n}$ needs to be removed. On the other hand, it is widely known
that the theory of principal bundles provides the most satisfactory framework
to study and develop gauge theories from a geometrical (global) point of view
(see \cite{viallet}, \cite{bleecker}, and \cite{Eguchi}). Although the reader
is supposed to be familiar with the geometrical framework of gauge theories,
we are going to recall in this paper the main features of principal bundles
for the sake of a clearer exposition.

If we restrict to pure Yang-Mills theories, the framework of principal bundles
over $\mathbb{R}^{n}\backslash\{0\}$ seems to be the main mathematical tool to
tackle monopoles. However, the classification theory of principal bundles over
paracompact manifolds (\cite{milnor-a} and \cite{milnor-b}) requires in
general a rather sophisticated topological machinery that we would like to
avoid as much as possible. Since $\mathbb{R}^{n}\backslash\{0\}$ is homotopic
to $\mathbb{S}^{n-1}$, we can study principal bundles either over
$\mathbb{R}^{n}\backslash\{0\}$ or $\mathbb{S}^{n-1}$ indistinguishably as far
as the global properties of monopoles is concerned; for a given gauge group
$G$, principal bundles over $\mathbb{R}^{n}\backslash\{0\}$ and $\mathbb{S}%
^{n-1}$ are homomorphic and their structure can be recovered from one to the
other. Remember that two principal bundles are called homomorphic is there
exits a smooth map between them equivariant with respect to the actions of the
gauge group. The key point is that $\mathbb{S}^{n}$ is a homogeneous space;
for example, $\mathbb{S}^{n}\cong\left.  SO(n+1)\right/  SO(n)$, where $SO(n)$
denotes the special orthogonal group. Since such spaces and their associated
structures have been extensively studied, a huge geometrical machinery is
consequently available to deal with them.

Using a geometrical language, gauge potentials and field strengths in gauge
field theories are described in terms of principal connections on principal
bundles and their curvature, respectively. On the other hand, the Chern-Weil
homomorphism provides a mechanism to associate to the curvature some
\textit{de Rham} cohomology classes $H^{2k}(\mathbb{S}^{n})$ of even order,
known as characteristic classes. Roughly speaking, the Chern-Weil homomorphism
allows us to remove the dependence of the field strength on the \textit{gauge
indices} (or the \textit{color}, in a physics language), which should not
appear in any observable physical quantity. In this context, a monopole
configuration on $\mathbb{S}^{n}$ is a principal bundle $\pi:P\rightarrow
\mathbb{S}^{n}$ with a principal connection such that:

\begin{enumerate}
\item[\textbf{(i)}] There exists a characteristic class in $H^{n}\left(
\mathbb{S}^{n}\right)  $ whose integral over $\mathbb{S}^{n}$ is different
from zero. This means that we can associate a non-vanishing charge to the
monopole. As we will discuss in Section \ref{section Chern classes}, there is
no general consensus on which topological invariant should represent the
charge of a monopole and some authors chose others. Observe that $n$ needs to
be even in order to $n/2$ be an integer. That is, there will be no monopoles
in even (spatial) dimensions.

\item[\textbf{(ii)}] The principal connection is $SO(n+1)$ invariant. This
property is usually referred to as {\bfseries\itshape spherical symmetry} of
the monopole in the literature. In particular, it implies that we need to be
able to define an action of the group of rotations of $\mathbb{R}^{n+1}$ on
our principal bundle so that the principal connection is invariant with
respect to it. This is \textit{not} always possible, as it actually happens
for the Yang's monopole, despite the explicit reference to the $SO(5)$
invariance Yang did in \cite{yang}. We will see that, in the Yang case,
spherical symmetry needs to be implemented through an action of $Spin(5)$
instead of $SO(5)$, contrary to what was usually thought.
\end{enumerate}

It is customary in gauge theories to give monopole configurations locally on
coordinate patches and then to impose some compatibility conditions where
these patches overlap. The use of coordinates is sometimes unavoidable in
computations, but it is often very tedious. Fortunately, there are many
features that can be seen intrinsically. The purpose of our paper is to convey
to the physics community some of the global tools from differential geometry
perfectly tailored to study monopoles. The main contributions of this paper
are the following:

\begin{enumerate}
\item[\textbf{1.}] We explicitly show that there exists a bijective
correspondence between principle bundles over the Euclidean space
$\mathbb{R}^{2n+1}\backslash\{0\}$ and principal bundles over the sphere
$\mathbb{S}^{2n}$ and their principal connections are Yang-Mills if and only
if they are Yang-Mills on the latter.

\item[\textbf{2.}] We will see that on $\mathbb{S}^{2n}$, seen as a symmetric
space, only the so-called canonical connections are $SO(2n+1)$-invariant.
Moreover, it is proved (see Proposition \ref{prop canonical is YM}) that they
automatically satisfy the Yang-Mills equations.

\item[\textbf{3.}] Despite the widely spread idea that the Yang monopole on
$\mathbb{S}^{4}$ is $SO(5)$ invariant, it is shown that the concept of
spherical symmetry needs to be implemented by its universal covering group
$Spin(5)$. This is because there does not exist any principal bundle with
structural group $SU(2)$ admitting a (left) $SO(5)$ action. When describing
the monopole on $\mathbb{S}^{4}$ by means of local sections as Yang did,
$Spin(5)$ acts through $SO(5)$, which explains why such a confusion arises.

\item[\textbf{4.}] We make precise some of the results about monopole
configurations found in the literature. Explicitly, in Section
\ref{seccion ejemplos} we discuss that the charge of the monopoles over
$\mathbb{S}^{2n}$ with gauge group $SO(2n)$, $n>2$, recently reviewed in
(\cite{townsend}) can only be implemented through the so-called Euler class.
Although, broadly speaking, the main ideas behind Gibbons and Townsend
$SO(2n)$-monopoles do not differ too much from ours, the way they introduce
the field strength and the charge of the monopole is imprecise and leads them
to assert wrong statements. We fix this point by clarifying the way to define
properly these concepts in geometrical terms.

\item[\textbf{5.}] We give a depiction of monopoles on homogeneous symmetric
spaces only in algebraic terms (Section \ref{section algebraic setting}). More
concretely, if $\pi:P\rightarrow$ $K/H$ is a principal bundle over a symmetric
Lie space related to a monopole with gauge group $G$, $K$ and $H\subset K$ two
Lie groups, then a monopole is completely described in terms of the Lie
algebras $\mathfrak{k}$, $\mathfrak{h}$, and $\mathfrak{g}$. This simplifies a
lot the amount of manipulations needed to compute any relevant quantity
associated to monopoles (no local coordinates are needed) and, what is more
important, allows us to go from a geometrical framework to an algebraic one
which, in practice, makes quantities computable. For example, we show in
Section \ref{section algebraic setting} and \ref{section Chern classes} that
field strengths and Chern classes can be easily computed for monopoles without
much effort.

\item[\textbf{6.}] We clarify the structure of monopole configurations from a
geometrical point of view. This means that our approach is global as we try to
emphasize the intrinsic nature of the structures involved in such
configurations and, consequently, avoid using local coordinates. As we said,
this approach seems to be suitable since the properties of monopoles are topological.

\item[\textbf{7.}] We gather some results on principal bundles over
homogeneous spaces which have appeared since the late 1950's and make them
available to physicists interested in monopoles. Although they are widely
known among geometers, there still exists surprisingly some confusion in the
community about the precise meaning of some concepts such as spherically
symmetric potentials, for instance, or the relationship between the charge of
a monopole and the topological invariants of a principal bundle expressed by
the Chern-Weil homomorphism.
\end{enumerate}

The paper is structured as follows: in Section 2, we recall on the one hand
the main geometric tools of principal bundles emphasizing their importance in
gauge theories and, on the other, we proof that principal bundles over
$\mathbb{S}^{2n}$ and $\mathbb{R}^{2n+1}\backslash\{0\}$ can be recovered ones
from the others. In Section 3, we introduce homogeneous principal bundles
$P_{\lambda}\rightarrow K/H$ over homogeneous spaces. These bundles, which
admit a left action by the Lie group $K$, are the geometric background for
monopole configurations. We characterize the principal connections (gauge
potentials) $\omega\in\Omega^{1}\left(  P;\mathfrak{g}\right)  $ which are
invariant by $K$ and show that, when $K/H$ is a symmetric space, there exists
a unique connection with these properties. We present in Section 4 an explicit
procedure to give the spherically symmetric field strengths $\Omega^{\omega}$
associated to monopole configurations in terms of the Lie algebras of the
groups involved. This procedure is implemented in some examples. In Section 5,
we recall the Chern-Weil homomorphism, a mechanism to associate some
\textit{de Rham} cohomology classes of $\mathbb{S}^{2n}$ to the field strength
$\Omega^{\omega}$ of $P_{\lambda}\rightarrow\mathbb{S}^{2n}$. We also show how
to define the charge of a monopole from these classes using the algebraic
description of $\Omega^{\omega}$ given in Section 4. Finally, in Section 6, we
apply the tools developed throughout the paper to revise the classical
examples by Dirac and Yang, the $SO(2n)$-monopoles widely studied by T.
Tchrakian and recently reviewed by Gibbons and Townsend (\cite{townsend}), and
the more recent $SU(2^{n-1})$-monopoles introduced by G. Meng (\cite{meng}%
).\bigskip

\noindent\textbf{Notation:} All manifolds $M$ in this paper will be of class
$C^{\infty}$. The set of smooth vector fields on $M$ will be denoted by
$\mathfrak{X}\left(  M\right)  $ and the set of differential forms by
$\Omega\left(  M\right)  $. If $M$ and $N$ are two manifolds, the tangent map
of a smooth function $F:M\rightarrow N$ at a point $m\in M$ between the
tangent spaces $T_{m}M$ and $T_{F(m)}N$ of $M$ and $N$ at $m\in N$ and $F(m)$
respectively will be denoted by $T_{m}F$. The symbol $d$ will be reserved for
the exterior differential $d:\Omega\left(  M\right)  \rightarrow\Omega\left(
M\right)  $. If $V$ is a real vector space, $\Lambda\left(  V\right)
=\oplus_{k\geq0}\Lambda^{k}\left(  V\right)  $ will be the space of
multilinear alternating maps from $V$ to $\mathbb{R}$. On the other hand,
$S_{n}$ will denote the symmetric group of order $n\in\mathbb{N}$ and
$\left\vert \sigma\right\vert =\pm1$ the parity of a permutation $\sigma\in
S_{n}$. The wedge product of two forms $\alpha\in\Omega^{k}(M)$ and $\beta
\in\Omega^{l}\left(  M\right)  $ is defined as%
\[
(\alpha\wedge\beta)\left(  X_{1},...,X_{k+l}\right)  =\frac{1}{k!l!}%
\sum_{\sigma\in S_{k+l}}\left(  -1\right)  ^{\left\vert \sigma\right\vert
}\alpha\left(  X_{\sigma(1)},...,X_{\sigma(k)}\right)  \beta\left(
X_{\sigma(k+1)},...,X_{\sigma(k+l)}\right)  ,
\]
$\{X_{1},...,X_{k+l}\}\subset\mathfrak{X}\left(  M\right)  $, and the
differential $d\alpha$ satisfies%
\begin{align*}
d\alpha\left(  X_{1},...,X_{k+1}\right)   &  =\sum_{i=1}^{k}\left(  -1\right)
^{i+1}\alpha\left(  X_{1},...,\widehat{X_{i}},...,X_{k+1}\right)  +\\
&  \sum_{i<j}\left(  -1\right)  ^{i+j}\alpha\left(  \lbrack X_{i},X_{j}%
],X_{1},...,\widehat{X_{i}},...,\widehat{X_{j}},...,X_{k+1}\right)  .
\end{align*}
It is worth noticing that, in the literature, some authors sometimes use
different factors in these expressions.

\section{Geometric preliminaries}

We recalled in the introduction that principal bundles over $\mathbb{R}%
^{2n+1}\backslash\{0\}$ and $\mathbb{S}^{2n}$ are in a bijective
correspondence. In this section, we are going to give more details about how
this bijection works. The idea is to use it in subsequent sections to switch
from $\mathbb{R}^{2n+1}\backslash\{0\}$ to $\mathbb{S}^{2n}$ and take
advantage of the geometric tools available when $\mathbb{S}^{2n}$ is
considered as a homogeneous space. Moreover, we want to see that, if a
principal connection on $\mathbb{S}^{2n}$ satisfies the Yang-Mills equations,
so does the corresponding induced connection on $\mathbb{R}^{2n+1}%
\backslash\{0\}$. The rest of this section is devoted to recalling the basics
of gauge theories such as principal connections (Subsection
\ref{subsection principal connections}) and the Hodge operator (Subsection
\ref{subsection Hodge operator}). After introducing Yang-Mills connections, we
will conclude the section by seeing that a principal connection is Yang Mills
on $\mathbb{S}^{2n}$ if and only if it is Yang-Mills on the corresponding
bundle over $\mathbb{R}^{2n+1}\backslash\{0\}$ (Proposition
\ref{proposition YM}).

\subsection{Correspondence between principal bundles over $\mathbb{R}%
^{2n+1}\backslash\{0\}$ and $\mathbb{S}^{2n}$}

Let $\pi:P\rightarrow M$ be a principal bundle with structural group $G$ over
a manifold $M$ and right action $R:G\times P\rightarrow P$. Let
$f:N\rightarrow M$ a smooth function from a manifold $N$ to $M$. The pull-back
of $\pi$ by $f$ is a fiber bundle over $N$ defined as%
\begin{align*}
f^{\ast}\left(  P\right)   &  =\left\{  \left(  p,x\right)  \in P\times
N~|~\pi\left(  p\right)  =f(x)\right\} \\
\overline{\pi}  &  :f^{\ast}\left(  P\right)  \rightarrow N\text{,
\ }\overline{\pi}\left(  \left(  p,x\right)  \right)  =x.
\end{align*}
With the natural right action $\left(  p,x\right)  \cdot g=\left(
R_{g}(p),x\right)  $, $g\in G$, inherited from $\pi:P\rightarrow M$, it is
easy to verify that $\overline{\pi}:f^{\ast}\left(  P\right)  \rightarrow N$
is indeed a principal bundle. An important result is that, if
$f,h:N\rightarrow M$ are two homotopic smooth maps, then the pull-backs
$f^{\ast}\left(  P\right)  $ and $h^{\ast}\left(  P\right)  $ are isomorphic
(see \cite[page 121]{isham},\cite{morrison}), that is, there exists a map
$F:f^{\ast}\left(  P\right)  \rightarrow h^{\ast}\left(  P\right)  $ over the
identity on $N$ such that $F\left(  z\cdot g\right)  =F\left(  z\right)  \cdot
g$ for any $z\in f^{\ast}\left(  P\right)  $. This rather simple result allows
us to explicitly draw the bijection between principal bundles over
$\mathbb{R}^{2n+1}\backslash\{0\}$ and $\mathbb{S}^{2n}$, respectively.
Indeed, let $\pi:P\rightarrow\mathbb{R}^{2n+1}\backslash\{0\}$ be a principal
bundle and let $\left.  P\right\vert _{\mathbb{S}^{2n}}$ be the restriction of
$P$ to $\mathbb{S}^{2n}$, which coincides with the pull-back of $P$ by the
inclusion of the sphere into $\mathbb{R}^{2n+1}\backslash\{0\}$ (\cite[page
120]{isham}). On the other hand, the map%
\begin{equation}%
\begin{array}
[c]{rcl}%
f:\mathbb{R}^{2n+1}\backslash\{0\} & \longrightarrow & \mathbb{S}^{2n}%
\subset\mathbb{R}^{2n+1}\backslash\{0\}\\
x & \longmapsto & \frac{x}{\left\Vert x\right\Vert },
\end{array}
\label{eq proyeccion homotopica}%
\end{equation}
is homotopic to the identity $\operatorname*{Id}:\mathbb{R}^{2n+1}%
\backslash\{0\}\rightarrow\mathbb{R}^{2n+1}\backslash\{0\}$, where $\left\Vert
x\right\Vert =\sqrt{\sum_{i=1}^{2n+1}(x^{i})^{2}}$ denotes the Euclidean norm.
Therefore, the principal bundles $f^{\ast}\left(  P\right)  $ and $P$ are
isomorphic. But clearly $f^{\ast}\left(  P\right)  =f^{\ast}\left(  \left.
P\right\vert _{\mathbb{S}^{2n}}\right)  $. So we conclude that principal
bundle structures on $\mathbb{S}^{2n}$ are induced by restriction from those
on $\mathbb{R}^{2n+1}\backslash\{0\}$ and, conversely, that principal bundles
over $\mathbb{S}^{2n}$ induce principal bundles over $\mathbb{R}%
^{2n+1}\backslash\{0\}$ by means of (\ref{eq proyeccion homotopica}), both
procedures being commutative. As a consequence, we can study monopole
configurations on the sphere $\mathbb{S}^{2n}$ and then pull them back onto
$\mathbb{R}^{2n+1}\backslash\{0\}$ using the projection
(\ref{eq proyeccion homotopica}). Before that, we will continue recalling more
geometric ingredients of gauge theories; concepts that are quite common for
physicists in the context of Riemannian geometry but less known in more
general principal bundle framework.

\subsection{Principal connections\label{subsection principal connections}}

Let $\pi:P\rightarrow M$ be a principal bundle with structural group $G$. A
{\bfseries\itshape principal connection} $\omega\in\Omega^{1}\left(
P;\mathfrak{g}\right)  $ is a one form on $P$ with values in the Lie algebra
$\mathfrak{g}$ of $G$ such that%
\begin{subequations}
\begin{gather}
R_{g}^{\ast}\omega=\operatorname*{Ad}\nolimits_{g^{-1}}\omega
,\label{monopolo 18 a}\\
\omega(p)\left(  \left.  \frac{d}{dt}\right\vert _{t=0}R_{\exp(t\eta
)}(p)\right)  =\eta\label{monopolo 18 b}%
\end{gather}
for any $g\in G$, $p\in P$, and $\eta\in\mathfrak{g}$. In this expression
$\operatorname*{Ad}$ denotes the adjoint representation of $G$ on
$\mathfrak{g}$ and $\exp:\mathfrak{g}\rightarrow G$ the usual exponential map.
We will denote the vector field $\left.  \frac{d}{dt}\right\vert _{t=0}%
R_{\exp(t\eta)}$ simply by $\eta_{P}$, $\eta\in\mathfrak{g}$. Any principal
connection $\omega\in\Omega^{1}\left(  P;\mathfrak{g}\right)  $ defines the
{\bfseries\itshape horizontal space} $\operatorname*{Hor}_{p}=\ker\omega$ at
any $p\in P$ such that $T_{p}P=\operatorname*{Hor}_{p}\oplus
\operatorname*{Ver}_{p}$, where $\operatorname*{Ver}_{p}\subset T_{p}P$ is the
{\bfseries\itshape vertical space} $\operatorname*{Ver}_{p}=\ker T_{p}\pi$. An
arbitrary form is called {\bfseries\itshape horizontal} if it vanishes when
contracted with vector fields in the vertical space.

Given a $\mathfrak{g}$ valued $r$-form $\varphi\in\Omega^{r}\left(
P,\mathfrak{g}\right)  $ on a principal bundle $\pi:P\rightarrow M$ and a
principal connection $\omega\in\Omega^{1}\left(  P,\mathfrak{g}\right)  $, the
{\bfseries\itshape covariant derivative} $D^{\omega}\varphi$ of $\varphi$ is
defined as%
\end{subequations}
\[
D^{\omega}\varphi(X_{1},...,X_{r}):=d\varphi\left(  X_{1}^{H},...,X_{r}%
^{H}\right)  ,~~X_{1},...,X_{r}\in\mathfrak{X}\left(  P\right)  ,
\]
where, at any point $p\in P$, $X_{i}^{H}(p)\in\operatorname*{Hor}_{p}$ is the
horizontal part of $X_{i}$, $i=1,...,r$. That is, we calculate the standard
exterior differential of $\varphi$ and then we restrict it to the horizontal
space. In particular, the {\bfseries\itshape curvature} of the connection is
$\Omega^{\omega}:=D^{\omega}\omega$. When regarded as a potential, we will
usually refer to the curvature as the {\bfseries\itshape field strength}. It
is customary to find the curvature in the literature written as $\Omega
^{\omega}=d\omega+\frac{1}{2}[\omega,\omega]$. This is the so-called
{\bfseries\itshape structural equation}. If $\varphi\in\Omega^{r}\left(
P,\mathfrak{g}\right)  $ and $\psi\in\Omega^{k}\left(  P,\mathfrak{g}\right)
$, the bracket $[\cdot,\cdot]$ is defined as%
\[
\lbrack\varphi,\psi]\left(  X_{1},...,X_{r+k}\right)  =\frac{1}{r!k!}%
\sum_{\sigma\in S_{r+k}}\left(  -1\right)  ^{\left\vert \sigma\right\vert
}[\varphi\left(  X_{\sigma(1)},...,X_{\sigma(r)}\right)  ,\psi\left(
X_{\sigma(r+1)},...,X_{\sigma(r+k)}\right)  ]_{\mathfrak{g}}.
\]
In this equation, the bracket $[\cdot,\cdot]_{\mathfrak{g}}$ is that of the
Lie algebra, $X_{1},...,X_{r+k}\in\mathfrak{X}\left(  P\right)  $ are
arbitrary vector fields on $P$, and $S_{r+k}$ denotes the permutation group of
$r+k$ elements.

\subsection{The Hodge operator and Yang-Mills
connections\label{subsection Hodge operator}}

Given a principal bundle $\pi:P\rightarrow M$ with structural Lie group $G$,
the {\bfseries\itshape adjoint bundle $\operatorname*{Ad}\left(  P\right)  $
}is the associated bundle $P\times_{\operatorname*{Ad}}\mathfrak{g}$. That is,
the space of equivalent classes of $P\times\mathfrak{g}$ under the equivalence
relation $(p,\xi)\sim(R_{g}(p),\operatorname*{Ad}_{g^{-1}}\xi)$, $p\in P$,
$\xi\in\mathfrak{g}$, and $g\in G$. It is a rather standard result in
differential geometry (see \cite[Theorem 19.14]{michor}) that the space
$\Omega_{equiv}(P;\mathfrak{g})^{\operatorname*{Hor}}$ of horizontal
$\mathfrak{g}$-valued forms on $P$ which are $G$-equivariant by
(\ref{monopolo 18 a}) can be identified with the space $\Omega\left(
M;\operatorname*{Ad}\left(  P\right)  \right)  $ of $\operatorname*{Ad}\left(
P\right)  $-valued differential forms on the base manifold $M$. This
identification works as follows: having a (principal)\ connection $\omega
\in\Omega^{1}\left(  P;\mathfrak{g}\right)  $ amounts to having a splitting of
the exact short sequence%
\[
0\longrightarrow V_{p}\longrightarrow T_{p}P\overset{\curvearrowleft\Gamma
_{p}}{\longrightarrow}T_{\pi(p)}M
\]
at any point $p\in P$ such that $X=\Gamma_{p}\left(  T_{p}\pi(X)\right)
\in\operatorname*{Hor}\nolimits_{p}$ for any $X\in T_{p}P$. Thus, we naturally
associate to any $\varphi\in\Omega_{equiv}^{r}(P;\mathfrak{g}%
)^{\operatorname*{Hor}}$ the $\operatorname*{Ad}\left(  P\right)  $-valued
form $\widetilde{\varphi}\in\Omega^{r}\left(  M;\operatorname*{Ad}\left(
P\right)  \right)  $ such that%
\begin{equation}
\widetilde{\varphi}(m)\left(  Y_{1},...,Y_{r}\right)  =\left[  p,\varphi
(p)\left(  \Gamma_{p}\left(  Y_{1}\right)  ,...,\Gamma_{p}(Y_{r})\right)
\right]  ^{\sim}\label{monopole 10}%
\end{equation}
for any $Y_{1},...,Y_{r}\in\mathfrak{X}\left(  M\right)  $. In
(\ref{monopole 10}), $p\in\pi^{-1}(m)$, and the bracket $[\cdot,\cdot]^{\sim}$
denotes the equivalent class of a point $\left(  p,\xi\right)  \in
P\times\mathfrak{g}$ into $P\times_{\operatorname*{Ad}}\mathfrak{g}$. It is
not difficult to check that (\ref{monopole 10}) does not depend on the choice
of the fiber point $p\in\pi^{-1}(m)$.

Suppose now that $M$ is a $n$-dimensional Riemann manifold with Riemannian
volume form $\mu\in\Omega^{n}\left(  M\right)  $ and we have a
$\operatorname*{Ad}$-invariant metric $\mathbf{h}$ on $\mathfrak{g}$. For
example, $\mathbf{h}$ could be taken to be (minus) the Killing-Cartan form if
$G$ was a semi-simple compact Lie group. Recall that the inverse of the
Riemann metric on $M$ can be used to define a $C^{\infty}\left(  M\right)
$-bilinear pairing%
\[
\left\langle \cdot,\cdot\right\rangle _{M}:\Omega^{q}\left(  M\right)
\times\Omega^{q}\left(  M\right)  \longrightarrow C^{\infty}\left(  M\right)
,~~~q\in\mathbb{N},
\]
(\cite[Chapter 0]{bleecker}). On the other hand, $\mathbf{h}$ induces a metric
on the fibers of the vector bundle $P\times_{\operatorname*{Ad}}%
\mathfrak{g}\rightarrow M$ in a standard way. We keep on denoting this metric
by $\mathbf{h}$. Both $\left\langle \cdot,\cdot\right\rangle _{M}$ and
$\mathbf{h}$ applied together define a $C^{\infty}\left(  M\right)  $-bilinear
product%
\[
\left\langle \cdot,\cdot\right\rangle :\Omega^{q}\left(  M;\operatorname*{Ad}%
\left(  P\right)  \right)  \times\Omega^{q}\left(  M;\operatorname*{Ad}\left(
P\right)  \right)  \longrightarrow C^{\infty}\left(  M\right)  .
\]
Additionally, the induced metric $\mathbf{h}$ allows us to define a wedge
pairing%
\[
\wedge:\Omega^{r}\left(  M;\operatorname*{Ad}\left(  P\right)  \right)
\times\Omega^{q}\left(  M;\operatorname*{Ad}\left(  P\right)  \right)
\longrightarrow\Omega^{r+q}\left(  M\right)
\]
via the equality%
\[
(\varphi\wedge\psi)(m)\left(  Y_{1},...,Y_{r+q}\right)  :=\frac{1}{r!q!}%
\sum_{\sigma\in S_{r+q}}(-1)^{\left\vert \sigma\right\vert }\mathbf{h}%
_{m}\left(  \varphi\left(  Y_{\sigma(1)},...,Y_{\sigma(r)}\right)
,\varphi\left(  Y_{\sigma(r+1)},...,Y_{\sigma(r+q)}\right)  \right)
\]
for any $\varphi\in\Omega^{r}\left(  M;\operatorname*{Ad}\left(  P\right)
\right)  $, $\psi\in\Omega^{q}\left(  M;\operatorname*{Ad}\left(  P\right)
\right)  $, and any $Y_{1},...,Y_{r+q}\in\mathfrak{X}\left(  M\right)  $. More
importantly, there is a natural operator called the {\bfseries\itshape Hodge
operator}%
\[
\ast:\Omega^{r}\left(  M;\operatorname*{Ad}\left(  P\right)  \right)
\longrightarrow\Omega^{n-r}\left(  M;\operatorname*{Ad}\left(  P\right)
\right)
\]
characterized by the relation%
\[
\theta\wedge\ast\varphi=\left\langle \theta,\varphi\right\rangle \mu\in
\Omega^{n}\left(  M\right)
\]
for any $\varphi\in\Omega^{r}\left(  M;\operatorname*{Ad}\left(  P\right)
\right)  $ and any $\theta\in\Omega^{n-r}\left(  M;\operatorname*{Ad}\left(
P\right)  \right)  $. The Hodge operator defines the inner product%
\[
\left(  \theta,\varphi\right)  :=\int_{M}\theta\wedge\ast\varphi=\int
_{M}\left\langle \theta,\varphi\right\rangle \mu
\]
provided this integral exists. Finally, given $\omega\in\Omega_{equiv}%
^{1}\left(  P;\mathfrak{g}\right)  $, the {\bfseries\itshape covariant
codifferential} $\delta_{\omega}$ is defined by%
\[
\delta_{\omega}\varphi=-(-1)^{n(r+1)}\ast\circ D^{\omega}\circ\ast\varphi
\in\Omega_{equiv}^{r-1}(P;\mathfrak{g})^{\operatorname*{Hor}},~~\varphi
\in\Omega_{equiv}^{r}(P;\mathfrak{g})^{\operatorname*{Hor}},
\]
where we have used the identification $\Omega\left(  M;\operatorname*{Ad}%
\left(  P\right)  \right)  =\Omega_{equiv}(P;\mathfrak{g}%
)^{\operatorname*{Hor}}$ in order to apply the Hodge operator to a
$\mathfrak{g}$-valued horizontal form on $P$.

In a pure Yang-Mills theory, the {\bfseries\itshape Yang-Mills functional}
$YM$ associates to any principal connection $\omega\in\Omega^{1}\left(
P;\mathfrak{g}\right)  $ the real number%
\[
YM\left(  \omega\right)  :=\left(  \Omega^{\omega},\Omega^{\omega}\right)
=\int_{M}\Omega^{\omega}\wedge\ast\Omega^{\omega}.
\]
Roughly speaking, the Yang-Mills functional gives a measure of the total
curvature of the principal connection $\omega$. Critical points of the
functional, the so called {\bfseries\itshape Yang-Mills connections}, are the
most important for physical purposes because their corresponding field
strengths model physical interactions in gauge theories. A classical result
shows that $\omega\in\Omega^{1}\left(  P;\mathfrak{g}\right)  $ is a
Yang-Mills connection if and only if%
\begin{equation}
\delta_{\omega}\Omega^{\omega}=0 \label{monopole 23}%
\end{equation}
(see \cite[Theorem 5.2.3]{bleecker} for a modification of (\ref{monopole 23})
in the presence of currents).

Now, suppose that $\omega\in\Omega^{1}\left(  P;\mathfrak{g}\right)  $ is a
Yang-Mills connection of some bundle $\pi:P\rightarrow\mathbb{S}^{2n}$. We
have already argued that the map (\ref{eq proyeccion homotopica}) can be used
to define principal bundle structures on $\mathbb{R}^{2n+1}\backslash\{0\}$
from those on $\mathbb{S}^{2n}$. Let $F:f^{\ast}\left(  P\right)  \rightarrow
P$ be the bundle homomorphism from the pull-back of $\pi$ by $f:\mathbb{R}%
^{2n+1}\backslash\{0\}\rightarrow\mathbb{S}^{2n}$ given in Equation
(\ref{eq proyeccion homotopica}). The next proposition, whose proof can be
found in the Appendix, shows that the principal connection $F^{\ast}\left(
\omega\right)  \in\Omega^{1}\left(  f^{\ast}\left(  P\right)  ;\mathfrak{g}%
\right)  $ on $f^{\ast}\left(  P\right)  $ is also Yang-Mills.

\begin{proposition}
\label{proposition YM}Let $\pi:P\rightarrow\mathbb{S}^{2n}$ be a principal
bundle with structural group $G$ and let $\omega\in\Omega^{1}\left(
P;\mathfrak{g}\right)  $ be a principal connection. Let $f:\mathbb{R}%
^{2n+1}\backslash\{0\}\rightarrow\mathbb{S}^{2n}$ be as in
(\ref{eq proyeccion homotopica}) and $F^{\ast}:f^{\ast}\left(  P\right)
\rightarrow P$ the corresponding principal bundle homomorphism. Then%
\[
\delta^{F^{\ast}(\omega)}\Omega^{F^{\ast}(\omega)}=-\frac{1}{\overline{\pi
}^{\ast}\left(  r^{2}\right)  }F^{\ast}\left(  \delta^{\omega}\Omega^{\omega
}\right)  ,
\]
where $r\in C^{\infty}\left(  \mathbb{R}^{2n+1}\backslash\{0\}\right)  $ is
the radius function $r\left(  x\right)  =\left\Vert x\right\Vert $,
$x\in\mathbb{R}^{2n+1}\backslash\{0\}$. In particular, $\omega$ is a
Yang-Mills connection if and only if $F^{\ast}\left(  \omega\right)  $ is a
Yang-Mills connection.
\end{proposition}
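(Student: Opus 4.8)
The plan is to push everything down to the base manifolds and exploit the warped-product (cone) structure of the Euclidean metric. The first observation is that the field strength is natural: since $F$ covers $f$ and the radial projection $f$ factors through $\mathbb{S}^{2n}$, we have $\Omega^{F^{\ast}(\omega)}=F^{\ast}\Omega^{\omega}$, and this $\operatorname{Ad}(P)$-valued two-form is \emph{purely angular}, i.e. it is pulled back from $\mathbb{S}^{2n}$, hence carries no $dr$-leg and is independent of the radius $r$. Writing the Euclidean metric on $\mathbb{R}^{2n+1}\backslash\{0\}$ as $dr^{2}+r^{2}g_{\mathbb{S}^{2n}}$, with $g_{\mathbb{S}^{2n}}$ the round metric of the unit sphere, reduces the whole statement to comparing the Hodge operators $\ast_{\mathbb{R}}$ on $\mathbb{R}^{2n+1}\backslash\{0\}$ and $\ast_{\mathbb{S}}$ on $\mathbb{S}^{2n}$ for such forms.

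The technical heart is two scaling identities for the Hodge operator in this cone metric, obtained from the adapted orthonormal coframe $\{dr,\ r\,\bar e^{1},\dots,r\,\bar e^{2n}\}$. For an angular $k$-form $\alpha$ I expect
\[
\ast_{\mathbb{R}}(F^{\ast}\alpha)=(-1)^{k}\,r^{\,2n-2k}\,dr\wedge F^{\ast}(\ast_{\mathbb{S}}\alpha),
\]
while for $dr$ wedged with an angular $l$-form $\gamma$,
\[
\ast_{\mathbb{R}}(dr\wedge F^{\ast}\gamma)=r^{\,2n-2l}\,F^{\ast}(\ast_{\mathbb{S}}\gamma).
\]
Both pass verbatim to $\operatorname{Ad}(P)$-valued forms, since the fiber metric $\mathbf{h}$ is untouched by the radial rescaling. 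A reassuring check is that composing them reproduces $\ast_{\mathbb{R}}\ast_{\mathbb{R}}=(-1)^{k(2n+1-k)}$ on $k$-forms, the correct value on an odd-dimensional Riemannian manifold.

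Then I would chain the three operations in $\delta^{F^{\ast}(\omega)}\Omega^{F^{\ast}(\omega)}=-(-1)^{(2n+1)\cdot 3}\,\ast_{\mathbb{R}}\,D^{F^{\ast}(\omega)}\,\ast_{\mathbb{R}}\,\Omega^{F^{\ast}(\omega)}$. The first identity with $k=2$ turns $\ast_{\mathbb{R}}\Omega^{F^{\ast}(\omega)}$ into $r^{\,2n-4}\,dr\wedge F^{\ast}(\ast_{\mathbb{S}}\Omega^{\omega})$. The decisive simplification is at the middle step: writing $\sigma=r^{\,2n-4}\,dr$ and applying the Leibniz rule $D^{F^{\ast}(\omega)}(\sigma\wedge\tau)=d\sigma\wedge\tau-\sigma\wedge D^{F^{\ast}(\omega)}\tau$, the radial factor drops because $d(r^{\,2n-4}\,dr)=0$, so no $r$-derivative survives. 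Naturality of the exterior covariant derivative under the bundle map, $D^{F^{\ast}(\omega)}\circ F^{\ast}=F^{\ast}\circ D^{\omega}$ on horizontal equivariant forms (valid because $F$ carries $F^{\ast}(\omega)$-horizontal vectors to $\omega$-horizontal ones), then produces $-r^{\,2n-4}\,dr\wedge F^{\ast}(D^{\omega}\ast_{\mathbb{S}}\Omega^{\omega})$, and the second identity with $l=2n-1$ strips the $dr$-leg and contributes the factor $r^{\,2-2n}$.

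The proof closes by collecting powers and signs. The exponents $2n-4$ and $2-2n$ sum to $-2$ \emph{independently of $n$}, which is exactly the announced $1/\overline{\pi}^{\ast}(r^{2})$; combining this with the $(-1)^{k}$ of the first identity, the two codifferential sign prefactors on the manifolds of dimensions $2n+1$ and $2n$, and the sphere relation $\delta^{\omega}\Omega^{\omega}=-\ast_{\mathbb{S}}D^{\omega}\ast_{\mathbb{S}}\Omega^{\omega}$ gives $\delta^{F^{\ast}(\omega)}\Omega^{F^{\ast}(\omega)}=-\overline{\pi}^{\ast}(r^{2})^{-1}F^{\ast}(\delta^{\omega}\Omega^{\omega})$. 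The final equivalence is then immediate: $\overline{\pi}^{\ast}(r^{2})$ is nowhere zero and $F^{\ast}$ is injective on $\operatorname{Ad}(P)$-valued forms (as $f$ is a surjective submersion), so the two codifferentials vanish simultaneously and $\omega$ is Yang--Mills iff $F^{\ast}(\omega)$ is. I expect the only genuinely delicate point to be the sign-and-power bookkeeping in the two Hodge-scaling identities; once those are pinned down, the vanishing $d(r^{\,2n-4}\,dr)=0$ and the naturality of $D$ make the remainder automatic.
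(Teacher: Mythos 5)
Your plan is, in substance, the paper's own Appendix proof: your first scaling identity is exactly Lemma \ref{lema operador hodge} (the paper writes $\ast f^{\ast}(\alpha)=r^{2(n-k)}f^{\ast}(\ast\alpha)\wedge dr$, and since $\ast\alpha$ has degree $2n-k$ this is your $(-1)^{k}r^{2n-2k}\,dr\wedge F^{\ast}(\ast_{\mathbb{S}}\alpha)$); the naturality statements $\Omega^{F^{\ast}(\omega)}=F^{\ast}(\Omega^{\omega})$ and $D^{F^{\ast}(\omega)}\circ F^{\ast}=F^{\ast}\circ D^{\omega}$ are proved there by the same horizontal-to-horizontal argument you invoke; and the death of the radial derivative (your $d(r^{2n-4}dr)=0$) happens in the paper as $dr\wedge\cdots\wedge dr=0$. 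The only structural variant is at the second Hodge step: you propose to prove $\ast_{\mathbb{R}}(dr\wedge F^{\ast}\gamma)=r^{2n-2l}F^{\ast}(\ast_{\mathbb{S}}\gamma)$ directly from the adapted coframe, whereas the paper gets the same effect indirectly, applying its lemma to the $1$-form $\ast_{\mathbb{S}}D^{\omega}\ast_{\mathbb{S}}\Omega^{\omega}$ and unwinding with the involution $\ast\ast=(-1)^{k(m-k)}$ in (\ref{monopole 34})--(\ref{monopole 35}). Your variant is correct (I checked both identities in the coframe $\{dr,r\bar e^{1},\dots,r\bar e^{2n}\}$ with $\mu=dr\wedge e^{1}\wedge\cdots\wedge e^{2n}$, which agrees with the paper's orientation since $dr$ slides past the even-degree angular volume) and is arguably cleaner than the double-dual manoeuvre.

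The one genuine gap is precisely the step you defer: the sign bookkeeping is not a formality here, and when you execute it your chain does \emph{not} produce the displayed minus sign. With the paper's definition $\delta_{\omega}\varphi=-(-1)^{m(r+1)}\ast D^{\omega}\ast\varphi$, the prefactor on $\mathbb{R}^{2n+1}\backslash\{0\}$ for the degree-$2$ curvature is $-(-1)^{(2n+1)\cdot3}=+1$ (as you yourself write), and on $\mathbb{S}^{2n}$ one has $\delta^{\omega}\Omega^{\omega}=-\ast_{\mathbb{S}}D^{\omega}\ast_{\mathbb{S}}\Omega^{\omega}$; combining these with your Leibniz minus sign and the powers $r^{2n-4}\cdot r^{2-2n}=r^{-2}$ gives $\delta^{F^{\ast}(\omega)}\Omega^{F^{\ast}(\omega)}=+\,\overline{\pi}^{\ast}(r^{2})^{-1}F^{\ast}(\delta^{\omega}\Omega^{\omega})$, the opposite of the stated identity. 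This is not a defect of your route alone: the paper's proof opens with $-\delta^{F^{\ast}(\omega)}\Omega^{F^{\ast}(\omega)}=\ast\circ D^{F^{\ast}(\omega)}\circ\ast(\Omega^{F^{\ast}(\omega)})$, i.e.\ with prefactor $-1$ on the odd-dimensional side, which conflicts with its own definition of $\delta_{\omega}$ and is what lands it on $-\overline{\pi}^{\ast}(r^{2})^{-1}$. So the overall sign of the proposition is convention-sensitive, and asserting that your signs "collect" to the printed one, without displaying the chain, papers over the only delicate point. The consequential claim is unharmed: since $\overline{\pi}^{\ast}(r^{2})$ is nowhere zero and $F^{\ast}$ is injective on these forms (as you correctly argue from surjectivity), either sign yields that $\omega$ is Yang--Mills if and only if $F^{\ast}(\omega)$ is. Fix a convention, write out the four sign contributions explicitly, and state the identity accordingly (or up to the convention-dependent overall sign).
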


\section{Principal bundles over homogeneous spaces}

The aim of this section is to introduce the main geometrical ingredients to
study gauge theories over homogeneous spaces. Since we are interested in gauge
theories over the $n$-dimensional sphere $\mathbb{S}^{n}=SO(n+1)/SO(n)$.
However, among all the possible principal bundle structures over
$\mathbb{S}^{n}$, we need to characterize those admitting a (left)
$SO(n+1)$-action in order to talk properly about spherically symmetric
quantities. This will be done in the first subsection. We will see that these
principal bundles can be labelled by a Lie group homomorphism $\lambda
:SO(n)\rightarrow G$ from the isotropy group to the gauge group. Moreover,
they can be understood as homogeneous spaces themselves, a perspective that
will be extremely fruitful. At the end we will give a characterization of the
four more relevant examples of our study, the principal bundles which will
correspond to Dirac, Yang, and $SO(2k)$-monopoles, $k\in\mathbb{N}$. Once we
have learnt how to build such principal bundles, we will characterize in
Subsection \ref{subsection invariant PC} the principal connections (gauge
potentials) which are invariant by the rotations group in terms of linear maps
$W:\mathfrak{so}(n)\rightarrow\mathfrak{g}$ satisfying some compatibility
conditions. Finally, in Subsection \ref{subsection symmetric spaces}, we
introduce symmetric spaces, a particular subclass of homogeneous spaces whose
Lie algebra can be suitably decomposed. For example, the sphere $\mathbb{S}%
^{n}$ is a symmetric space. Over them, we will show that there exists a unique
$SO(n+1)$-invariant principal connection; that is, a monopole potential
one-form. This means that requiring a principal bundle to admit a
$SO(n+1)$-action equals to having an essentially unique spherically symmetric
configuration on it.

\subsection{Homogeneous principal bundles}

Let $K$ and $G$ be two Lie groups and $H\subset K$ a closed subgroup. A
{\bfseries\itshape homogeneous principal bundle} $\pi:P\rightarrow K/H$ with
structural group $G$ is a principal bundle over a homogeneous space $K/H$
together with a left $K$-action on $P$ by automorphisms which projects to the
left multiplication of $K$ on the base manifold $K/H$. According to
\cite{harnard} and \cite{wang}, homogeneous principal bundles $\pi
:P\rightarrow K/H$ with structural group $G$ are (modulo isomorphisms) in
one-to-one correspondence with group homomorphisms $\lambda:H\rightarrow G$
(modulo conjugation) so that $\pi:P\rightarrow K/H$ is isomorphic to the
{\bfseries\itshape associated bundle} $P_{\lambda}:=K\times_{H}G$; that is,
the space of orbits of the right action%
\begin{equation}%
\begin{array}
[c]{rrl}%
\Psi_{\lambda}:\left(  K\times G\right)  \times H & \longrightarrow & K\times
G\\
\left(  \left(  k,g\right)  ,h\right)  & \longmapsto & \left(  kh,\lambda
(h)^{-1}g\right)  .
\end{array}
\label{monopole 4}%
\end{equation}
Denoting the elements $p$ of $P_{\lambda}$ as equivalent classes,
$p=[k,g]^{\sim}$ such that $k\in K$ and $g\in G$, the projection $\pi$ is
simply given by $[k,g]^{\sim}\longmapsto kH\in K/H$. If $p\in\pi^{-1}\left(
o\right)  $ is some point in the equivalence class $o\in K/H$ of $e\in K$, the
homomorphism $\lambda:H\rightarrow G$ can be understood by the relation%
\[
h\cdot p=p\cdot\lambda(h),~~h\in H,
\]
where the dot $\cdot$ denotes the left action of $K$ or the right action of
$G$ on $P_{\lambda}$ respectively. We encourage the reader to check with
\cite{michor} for a brief review on the basic facts about associated bundles.

Furthermore, $P_{\lambda}$ can be also seen as the {homogeneous space}
$\left.  \left(  K\times G\right)  \right/  \widetilde{H}$, where
$\widetilde{H}$ is the closed subgroup $\widetilde{H}=\{\left(  h,\lambda
(h)\right)  ~|~h\in H\}\subset K\times G$, clearly isomorphic to $H$: that is
why the principal bundles $P_{\lambda}$ are called homogeneous. The
isomorphism works as follows:%
\begin{equation}%
\begin{array}
[c]{rrl}%
\Upsilon:\left.  \left(  K\times G\right)  \right/  \widetilde{H} &
\longrightarrow & P_{\lambda}\\
\overline{(k,g)} & \longmapsto & [k,g^{-1}]^{\sim},
\end{array}
\label{monopole 8}%
\end{equation}
where $\overline{(k,g)}$ and $[k,g^{-1}]^{\sim}$ denote the equivalent class
of $(k,g)\in K\times G$ in $\left.  \left(  K\times G\right)  \right/
\widetilde{H}$ and $P_{\lambda}$ respectively.

Finally, we fix some notation for later convenience. The \textit{left} action
$L_{P_{\lambda}}:K\times P_{\lambda}\rightarrow P_{\lambda}$ and a
\textit{right} action $R_{\lambda}:G\times P_{\lambda}\rightarrow P_{\lambda}$
that we have on a homogeneous principal bundle are respectively given by%
\begin{equation}
(L_{\lambda})_{\bar{k}}\left(  [k,g]^{\sim}\right)  =[\bar{k}k,g]^{\sim}\text{
and }(R_{\lambda})_{\bar{g}}\left(  [k,g]^{\sim}\right)  =[k,g\bar{g}]^{\sim
}\text{, \ }g,\bar{g}\in G\text{, }k,\bar{k}\in K. \label{monopole 2}%
\end{equation}

\begin{remark}
\normalfont In the general classification theory of bundles, two principal
bundles with the same base manifold and the same structural group are called
equivalent if there exists a homomorphism between them which projects onto the
identity map on the basis. When the base manifold is the $n$-dimensional
sphere $\mathbb{S}^{n}$, such equivalence classes are in bijection with the
elements of the homotopy group $\pi_{n-1}(G)$ provided the gauge group $G$ is
connected (see \cite{steenrod}). Take for example $n=3$ and $G=SO(3)$. Since
$\pi_{2}\left(  SO(3)\right)  =0$, we know that, essentially, there exists a
unique principal bundle over $\mathbb{S}^{3}$ with structural group $SO(3)$.
Namely, $\pi:SO(4)\rightarrow\mathbb{S}^{3}=SO(4)/SO(3)$. Therefore,
$\pi:SO(4)\rightarrow\mathbb{S}^{3}$ is trivializable and $SO(4)$ is
diffeomorphic to $SO(3)\times\mathbb{S}^{3}$. However, they are not isomorphic
as Lie groups (see Proposition \ref{prop 2}). On the other hand, there exist
at least two homomorphisms $\lambda:SO(3)\rightarrow G=SO(3)$ which are not
conjugated: the trivial homomorphism $\lambda\left(  h\right)  =e\in SO(3)$
for any $h\in SO(3)$, and the identity homomorphism, $\lambda
=\operatorname*{Id}$. So, according to what we have said so far, there exist
two different principal bundles over $\mathbb{S}^{3}$ with gauge group $SO(3)$
admitting a left action of $SO(4)$. Is this a contradiction? The answer is no.
Everything relies on the notion of \textit{equivalence} of principal bundles
we use. In general, when we forget about the $SO(4)$-left action, there always
exists a fiber preserving diffeomorphism between two any principal bundles
over $\mathbb{S}^{3}$. But my notion of equivalence changes when $SO(4)$ acts
upon our principal bundles in the way we stated. Then, the previous
diffeomorphism needs to be also equivariant with respect to the two $SO(4)$
actions, a requirement that prevents some bundles from being equivalent. In
other words, we can define at least two different $SO(4)$-left actions on the
unique principal bundle over $\mathbb{S}^{3}$ with gauge group $SO(3)$ in a
non-equivalent way.
\end{remark}

\begin{remark}
\normalfont The theory of {\bfseries\itshape equivariant principal bundles}
tries to describe those principal bundles $\pi:P\rightarrow M$ with structural
Lie group $G$ such that both $P$ and $M$ are left acted upon another Lie group
$K$ such that the projection $\pi$ is $K$-equivariant and the actions of $K$
and $G$ commute. This is a much more general framework that reduces to ours
when $M=K/H$ is a homogeneous space, where $H\subseteq K$ is a closed Lie
subgroup. Under some general assumptions and in particular for the case
$\mathbb{S}^{n}=SO(n+1)/SO(n)$, $n\geq3$, it can be checked that the number of
isomorphic principal bundles $\pi:P\rightarrow M$ with structural group $G$
over a left $K$-manifold $M$ is finite provided that $G$ is compact and the
isotropy groups $K_{m}$ are semi-simple, $m\in M$ (\cite[Corollary
8.6]{hambelton}). In particular, the number of principal bundles over
$\mathbb{S}^{n}$, $n\geq3$, with structural group $G$ compact admitting a
$SO(n+1)$-left action is finite.
\end{remark}

\begin{examples}
\normalfont\label{examples 1}Let $\mathcal{R}(n,G)$ be the set of smooth
homomorphisms from $SO(n)$ to $G$ modulo conjugation by elements of $G$. We
will describe $\mathcal{R}(n,G)$ for some values of $n\in\mathbb{N}$ and some
Lie groups $G$ that will allow us to study later on some of the monopole
configurations found in the literature (see \cite{hambelton} and references therein).

\begin{enumerate}
\item[\textbf{(i)}] $n=2$ and $G=U(1)$. Given that $SO(2)=U(1)$, the set of
homomorphisms $\mathcal{R}(2,U(1))$ is $\lambda:U(1)\rightarrow U(1)$ modulo
conjugation. It is well known that such a set can be labelled by $\mathbb{Z}$,
the set of integers. Regarding $U(1)=\{\operatorname*{e}\nolimits^{iz}%
:z\in\lbrack0,2\pi)\}$, we can chose the homomorphisms%
\[%
\begin{array}
[c]{rrl}%
\lambda_{m}:U(1) & \longrightarrow & U(1)\\
\operatorname*{e}\nolimits^{iz} & \longmapsto & \left(  \operatorname*{e}%
\nolimits^{iz}\right)  ^{m}=\operatorname*{e}\nolimits^{izm},
\end{array}
~~m\in\mathbb{Z},
\]
as representatives of the equivalent classes of $\mathcal{R}(2,U(1))$.

\item[\textbf{(ii)}] $n=4$ and $G=SO(3)$. The {\bfseries\itshape algebra of
quaternions} $\mathbb{H}$ is usually defined abstractly as a $4$-dimensional
real vector space with a multiplication $\left(  x,y\right)  \mapsto xy$,
$x,y\in\mathbb{H}$, which satisfies the usual associative and distributive
laws and with a distinguished basis $\{\mathbf{1,i,j,k}\}$ satisfying the
following commutation relations%
\begin{align*}
\mathbf{i}^{2}  &  =\mathbf{j}^{2}=\mathbf{k}^{2}=-\mathbf{1}\\
\mathbf{ij}  &  =-\mathbf{ji}=\mathbf{k},~\mathbf{jk}=-\mathbf{kj}%
=\mathbf{i},~\mathbf{ki}=-\mathbf{ik}=\mathbf{j}.
\end{align*}
The {\bfseries\itshape modulus} of a quaternion $x=x_{0}\mathbf{1}%
+x_{1}\mathbf{i}+x_{2}\mathbf{j}+x_{3}\mathbf{k}$ is $\left\vert x\right\vert
=\left(  x_{0}^{2}+x_{1}^{2}+x_{2}^{2}+x_{3}^{2}\right)  ^{1/2}$. The set of
unit quaternions $S^{3}:=\{x\in\mathbb{H}~|~\left\vert x\right\vert =1\}$ is
isomorphic to $SU(2)$ and homeomorphic to the $3$-sphere $\mathbb{S}%
^{3}\subset\mathbb{R}^{4}$ (\cite[Theorem 1.1.4]{naber foundations}).
Moreover, $S^{3}\times S^{3}$ is the universal covering group of $SO(4)$
(\cite[Example 4.32]{michor}) so that $SO(4)\cong\left(  S^{3}\times
S^{3}\right)  /\{(\mathbf{1},\mathbf{1}),\left(  -\mathbf{1},-\mathbf{1}%
\right)  \}$. On the other hand, $S^{3}=SU(2)$ is the universal covering group
of $SO(3)\cong S^{3}/\{\pm\mathbf{1}\}$. The set $\mathcal{R}(4,SO(3))$
contains three elements: the trivial homomorphism and those induced from the
projections $S^{3}\times S^{3}\rightarrow S^{3}$ given by $\sigma_{1}\left(
x,y\right)  =x$ and $\sigma_{2}\left(  x,y\right)  =y$.

\item[\textbf{(iii)}] $n=4$ and $G=SO(4)$. Using the identification
$SO(4)\cong\left(  \mathbb{S}^{3}\times\mathbb{S}^{3}\right)  /\{(\mathbf{1}%
,\mathbf{1}),\left(  -\mathbf{1},-\mathbf{1}\right)  \}$ as in \textbf{(ii)},
the set $\mathcal{R}(4,SO(3))$ contains five elements: the trivial
homomorphism, the identity $\operatorname*{Id}:SO(4)\rightarrow SO(4)$, which
give rise to the principal bundle $SO(5)\rightarrow SO(5)/SO(4)$, and three
homomorphisms induced by the maps $\sigma_{3},\sigma_{4},\delta:\mathbb{S}%
^{3}\times\mathbb{S}^{3}\rightarrow\mathbb{S}^{3}\times\mathbb{S}^{3}$ given
by $\sigma_{3}(x,y)=\left(  x,x\right)  $, $\sigma_{4}\left(  x,y\right)
=\left(  y,y\right)  $, and $\delta\left(  x,y\right)  =\left(  y,x\right)  $.

\item[\textbf{(iv)}] $n=2k\geq6$ and $G=SO(2k)$, $k\in\mathbb{N}$. The set
$\mathcal{R}(2k,SO(2k))$ contains three elements: the trivial homomorphism,
the identity $\operatorname*{Id}:SO(2k)\rightarrow SO(2k)$, whose associated
principal bundle is $SO(2k+1)\rightarrow SO(2k+1)/SO(2k)$, and the conjugation
$\delta$ by the diagonal matrix $\left(  -1,...,-1,1\right)  $. Observe that
$\delta\in O(2k)$ but $\delta\notin SO(2k)$. \ \ \ \ $\blacksquare$
\end{enumerate}
\end{examples}

\subsection{Invariant principal connections\label{subsection invariant PC}}

Let $\pi:P\rightarrow K/H$ be a homogeneous principal bundle as in the
previous subsection. We say that a principal connection $\omega$ is
{\bfseries\itshape$K$-invariant} if $(L_{\lambda})_{k}^{\ast}\omega=\omega$
for any $k\in K$. One can prove that, if $\mathfrak{k}$, $\mathfrak{h}$, and
$\mathfrak{g}$ denote the Lie algebra of $K$, $H$, and the gauge group $G$
respectively, $K$-invariant principal connections on $\pi:P_{\lambda
}\rightarrow K/H$ are in one-to-one correspondence with linear maps
$W:\mathfrak{k}\rightarrow\mathfrak{g}$ such that

\begin{enumerate}
\item[\textbf{(i)}] $W(\xi)=T_{e}\lambda\left(  \xi\right)  $ for any $\xi
\in\mathfrak{h}$,

\item[\textbf{(ii)}] $W\left(  \operatorname*{Ad}\nolimits_{h}\xi\right)
=\operatorname*{Ad}\nolimits_{\lambda(h)}\left(  W(\xi)\right)  $ for any
$\xi\in\mathfrak{k}$ and any $h\in H$.
\end{enumerate}

\noindent(see \cite{wang}, \cite{kobayasi}). From now on, we are going to
refer to these linear maps $W$ as {\bfseries\itshape Wang maps}. Given a Wang
map $W:\mathfrak{k}\rightarrow\mathfrak{g}$, the principal connection
$\omega\in\Omega^{1}(P_{\lambda};\mathfrak{g})$ is given by%
\begin{equation}
\omega_{p_{o}}(\xi_{P_{\lambda}})=W(\xi) \label{monopole 1}%
\end{equation}
where $\xi\in\mathfrak{k}$, $o$ denotes the equivalent class of $e\in K$ in
$K/H$, $p_{o}\in\pi^{-1}(o)$ is any arbitrary point on the fiber of $o\in
K/H$, and $\xi_{P_{\lambda}}$ is the vector field induced on $P_{\lambda}$ by
the $K$-action (\cite[Theorem 11.5]{kobayasi}). Observe that, since $\omega$
is $K$-invariant and $G$ acts transitively on the fibers of $P_{\lambda}$,
(\ref{monopole 1}) and (\ref{monopolo 18 a}) characterizes $\omega$ completely.

One of the most important examples of homogeneous spaces are those called
reductive. Recall that a homogeneous space $K/H$ is called
{\bfseries\itshape
reductive} if the Lie algebra $\mathfrak{k}$ can be written as $\mathfrak{k}%
=\mathfrak{h}\oplus\mathfrak{m}$ and $\operatorname*{Ad}\nolimits_{h}\left(
\mathfrak{m}\right)  \subseteq\mathfrak{m}$. For a reductive homogeneous space
$K/H$, the linear map $\mathbf{W}:\mathfrak{k}\rightarrow\mathfrak{g}$ defined
as $\left.  \mathbf{W}\right\vert _{\mathfrak{h}}=T_{e}\lambda$ and $\left.
\mathbf{W}\right\vert _{\mathfrak{m}}=0$ is called the {\bfseries\itshape
canonical connection}.

\begin{example}
\label{example maurer-cartan}\normalfont It can be shown that the principal
$H$-bundle $K\rightarrow K/H$ admits a $K$-invariant connection if and only if
$K/H$ is reductive (\cite[Theorem 11.1]{kobayasi}). The canonical connection
$\boldsymbol{\omega}\in\Omega^{1}\left(  K,\mathfrak{h}\right)  $ on
$K\rightarrow K/H$ is given by the $\mathfrak{h}$-valued part of the
{\bfseries\itshape Maurer-Cartan form} $\omega_{MC}$ which is defined by
$\omega_{MC}(k)\left(  \xi_{K}(k)\right)  =\xi\in\mathfrak{k}$, $k\in K$. That
is, $\boldsymbol{\omega}(k)(\xi_{K}(k))=\operatorname*{proj}_{\mathfrak{h}%
}(\xi)$. \ \ \ \ $\blacksquare$
\end{example}

Principal connections can be used to induce connections on associated bundles
(see \cite[19.8]{michor} and subsequent sections for a general approach to
this subject). The details of this mechanism and the proof of the following
proposition, that we include here for the sake of a more complete exposition,
are postponed to the Appendix \ref{appendix prop 1}. The proposition claims
that the principal connection of $P_{\lambda}$ is induced from that of
$K\rightarrow K/H$ whenever $K/H$ is reductive. Although it can be found in
the literature, its proof is frequently omitted, so we decided to prove it
ourselves explicitly.

\begin{proposition}
\label{prop 1}Let $K/H$ be reductive. Then, the canonical connection on
$P_{\lambda}$ is induced from the canonical connection of $K\rightarrow K/H$.
\end{proposition}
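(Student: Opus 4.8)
The plan is to realize the connection on $P_{\lambda}$ induced from $\boldsymbol{\omega}$ explicitly and then read off its Wang map at the base point. First I would recall the induction mechanism set up in the Appendix following \cite[19.8]{michor}: since $P_{\lambda}=K\times_{H}G$ is the bundle associated to the principal $H$-bundle $K\rightarrow K/H$ through $\lambda$, the canonical connection $\boldsymbol{\omega}$ on $K$ extends along $\lambda$ to a unique principal connection $\omega$ on $P_{\lambda}$. The object tying the two bundles together is the canonical map $j:K\rightarrow P_{\lambda}$, $j(k)=[k,e]^{\sim}$. A one-line check using (\ref{monopole 4}) shows $j(kh)=[kh,e]^{\sim}=[k,\lambda(h)]^{\sim}=j(k)\cdot\lambda(h)$, so $j$ is a homomorphism of principal bundles over $\operatorname{Id}_{K/H}$ along $\lambda$, and the defining property of the induced connection is $j^{\ast}\omega=T_{e}\lambda\circ\boldsymbol{\omega}$.

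Next I would exploit the characterization of $K$-invariant connections by Wang maps. Both the canonical connection of $P_{\lambda}$ (the one attached to $\mathbf{W}$) and the induced connection $\omega$ are $K$-invariant, so by the remark following (\ref{monopole 1}) it suffices to compare their values on the fundamental vector fields $\xi_{P_{\lambda}}$ at the single point $p_{o}=[e,e]^{\sim}=j(e)$. The key identity is $\xi_{P_{\lambda}}(p_{o})=T_{e}j(\xi)$ for every $\xi\in\mathfrak{k}$: indeed (\ref{monopole 2}) gives $(L_{\lambda})_{\bar{k}}(p_{o})=[\bar{k},e]^{\sim}=j(\bar{k})$, and differentiating at $\bar{k}=e$ yields the claim. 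Combining this with the defining property of $\omega$ gives
\[
\omega_{p_{o}}(\xi_{P_{\lambda}})=\omega_{p_{o}}(T_{e}j(\xi))=(j^{\ast}\omega)_{e}(\xi)=T_{e}\lambda(\boldsymbol{\omega}_{e}(\xi))=T_{e}\lambda(\operatorname{proj}_{\mathfrak{h}}\xi)=\mathbf{W}(\xi),
\]
where the fourth equality uses that $\boldsymbol{\omega}$ is the $\mathfrak{h}$-part of the Maurer-Cartan form (here reductivity of $K/H$ enters, supplying the splitting $\mathfrak{k}=\mathfrak{h}\oplus\mathfrak{m}$) and the last one is the definition of $\mathbf{W}$. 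Since $\omega$ is $K$-invariant, this value at $p_{o}$ identifies its Wang map as $\mathbf{W}$; that is, $\omega$ is the canonical connection on $P_{\lambda}$.

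The part that really needs care, and where I would locate the main obstacle, is the first step: making precise what ``induced'' means and checking that the extension of $\boldsymbol{\omega}$ along $\lambda$ is a genuine $K$-invariant principal connection with $j^{\ast}\omega=T_{e}\lambda\circ\boldsymbol{\omega}$. Concretely I would build it as the descent to $P_{\lambda}=(K\times G)/\widetilde{H}$ of the $\mathfrak{g}$-valued one-form $\alpha=\operatorname{Ad}_{g^{-1}}(T_{e}\lambda\circ\boldsymbol{\omega})+\theta_{G}$ on $K\times G$, where $\theta_{G}$ is the left Maurer-Cartan form of $G$ and the first summand acts on the $K$-directions, the second on the $G$-directions. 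One must then verify that $\alpha$ is $G$-equivariant and reproduces the fundamental fields (so it is a connection form upstairs), that it annihilates the fundamental vector fields of the $\widetilde{H}$-action of (\ref{monopole 4}) and is $\widetilde{H}$-invariant (so that it descends to $P_{\lambda}$), and that its descent is left-$K$-invariant because $\boldsymbol{\omega}$ is left-invariant while the $\theta_{G}$-term ignores the $K$-factor. These are all bookkeeping with the $\operatorname{Ad}$- and Maurer-Cartan conventions rather than conceptual difficulties; once they are in place, the computation of the previous paragraph concludes the proof.
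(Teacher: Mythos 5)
Your proof is correct, but it realizes the induced connection by a mechanism genuinely different from the paper's. The paper never writes a global connection one-form: it works with the vertical-valued form $\Phi\in\Omega^{1}(K;VK)$, invokes the identification $TP_{\lambda}=TK\times_{TH}TG$ (with the $TH$-actions (\ref{monopole 5})--(\ref{monopole 6})), defines the induced $\Phi_{\lambda}$ by the commutative diagram (\ref{monopole 3}), and then manipulates equivalence classes: the relation $[\xi,0]_{p}^{\sim}=[\eta+\xi,-T_{e}\lambda(\eta)]_{p}^{\sim}$ of (\ref{monopole 7}), specialized to $\eta=-\operatorname*{proj}_{\mathfrak{h}}(\xi)$, converts $[\operatorname*{proj}_{\mathfrak{h}}(\xi),0]^{\sim}$ into the purely vertical class $[0,T_{e}\lambda(\operatorname*{proj}_{\mathfrak{h}}(\xi))]^{\sim}$, whence $\omega_{p}(\xi_{P_{\lambda}})=\mathbf{W}(\xi)$. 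You instead bypass the tangent-level associated bundle entirely: you use the morphism $j:K\rightarrow P_{\lambda}$ along $\lambda$, characterize the induced connection by $j^{\ast}\omega=T_{e}\lambda\circ\boldsymbol{\omega}$, and construct it in closed form as the descent of $\alpha=\operatorname*{Ad}_{g^{-1}}(T_{e}\lambda\circ\boldsymbol{\omega})+\theta_{G}$ from $K\times G$ to $(K\times G)/\widetilde{H}$. Your basicity check (vanishing of $\alpha$ on the $\widetilde{H}$-fundamental fields, which indeed holds since the two summands contribute $\operatorname*{Ad}_{g^{-1}}T_{e}\lambda(\eta)$ and $-\operatorname*{Ad}_{g^{-1}}T_{e}\lambda(\eta)$) is precisely the infinitesimal counterpart of the paper's relation (\ref{monopole 7}), so the two computations are the same cancellation in different clothing. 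What your route buys is an explicit global formula for the induced connection and a manifest verification of its left $K$-invariance, which you correctly recognize as necessary before the Wang correspondence (\ref{monopole 1}) lets you identify the connection from its values on $\xi_{P_{\lambda}}$ at the single point $p_{o}=[e,e]^{\sim}$ via $\xi_{P_{\lambda}}(p_{o})=T_{e}j(\xi)$; what the paper's route buys is economy, since it reuses the associated-bundle machinery already set up and needs no uniqueness argument at all, reading the Wang map directly off the vertical projection. You also locate the role of reductivity correctly: it is what makes $\boldsymbol{\omega}=\operatorname*{proj}_{\mathfrak{h}}\circ\,\omega_{MC}$ a connection in the first place, since $H$-equivariance of $\operatorname*{proj}_{\mathfrak{h}}$ requires $\operatorname*{Ad}_{h}(\mathfrak{m})\subseteq\mathfrak{m}$.
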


\subsection{Symmetric spaces\label{subsection symmetric spaces}}

We are now going to describe invariant connections over a particular class of
homogeneous spaces: symmetric spaces. Symmetric spaces are usually presented
in the context of Riemannian geometry. Most of the content of this subsection
is extracted from \cite{kobayashi 2}, which the reader is encourage to check
with. We will see that, over a symmetric space $K/H$, the canonical connection
is the unique principal connection which is $K$-invariant. Since the sphere
$\mathbb{S}^{n}$ is a symmetric space, it means that there will exist a unique
monopole configuration on any homogeneous principal bundle over $\mathbb{S}%
^{n}$.

Let $M$ be a $n$-dimensional Riemann manifold with an
{\bfseries\itshape affine connection} $\nabla$, that is, a connection in the
frame bundle. Let $U\subseteq M$ be an open neighborhood, $x\in U$ a fixed
point, and $X_{x}\in T_{x}M$. Denote by $\exp\left(  X_{x}\right)  $ the value
of the geodesic $\gamma\left(  t\right)  $ at time $t=1$ which satisfies
$\gamma\left(  0\right)  =x$, $\dot{\gamma}\left(  0\right)  =X_{x}$. This
value exists for $X_{x}$ in a suitable small neighborhood of $0\in T_{x}M$. A
diffeomorphism $\varphi:M\rightarrow M$ is called an {\bfseries\itshape affine
transformation} if it is a diffeomorphism and $T\varphi:TM\rightarrow TM$ maps
each parallel vector field along a curve $\tau:(-\varepsilon,\varepsilon
)\rightarrow M$, $\varepsilon>0$, into a parallel vector field along the curve
$\varphi(\tau)$. A {\bfseries\itshape symmetry $s_{x}$} at a point $x\in U$ is
a diffeomorphism of $U$ onto itself which sends $\exp\left(  X_{x}\right)  $
into $\exp\left(  -X_{x}\right)  $. Observe that a symmetry $s_{x}$ is
involutive: $s_{x}\circ s_{x}=\operatorname*{Id}$. If there exists an affine
transformation $s_{x}$ for any $x\in M$, then $M$ is said to be
{\bfseries\itshape affine locally symmetric}. $M$ is said {\bfseries\itshape
affine symmetric} if the symmetry $s_{x}$ can be extended to a global affine
transformation of $M$ for any $x\in M$.

The group of affine transformations of an affine symmetric manifold $M$ is a
Lie group which acts transitively on it (\cite[Chapter XI, Theorem
1.4]{kobayashi 2}). If $K$ denotes the identity component of such group, then
$M=K/H$, where $H$ denotes the subgroup of those affine transformations in $K$
leaving a point $o\in M$ fixed (\cite[Chapter IV, Theorem 3.3]{helgason}).
Taking this remark into account, we say that a triple $\left(  K,H,\sigma
\right)  $ is a {\bfseries\itshape symmetric space} if $K,H$ are Lie groups,
$H\subset K$, $\sigma:K\rightarrow K$ is an involutive automorphism, and
$K_{\sigma}^{e}\subseteq H\subseteq K_{\sigma}$. Here $K_{\sigma}$ denotes the
set of elements of $K$ which are invariant by $\sigma$ and $K_{\sigma}^{e}$
the identity component of $K_{\sigma}$. In the case of an affine symmetric
manifold $M$, the automorphism $\sigma$ is given by $\sigma\left(  k\right)
=s_{o}\circ k\circ s_{o}^{-1}$ where $s_{o}$ is a symmetry at $o$. On the
contrary, each symmetry $s_{x}$ can be recovered from $\sigma$ as
$s_{x}=k\circ s_{o}\circ k^{-1}$, $x\in M$. In general, $s_{o}$ is defined to
be the involutive diffeomorphism of $K/H$ onto itself induced by the
automorphism $\sigma$.

\begin{example}
\normalfont The $n$-dimensional sphere $\mathbb{S}^{n}$ is a symmetric space.
Indeed, if $K=SO\left(  n+1\right)  $ and $o=(1,0,\overset{n-2)}{...}%
,0)\in\mathbb{S}^{n}\subset\mathbb{R}^{n+1}$, then%
\[
H=\left(
\begin{array}
[c]{cc}%
1 & 0\\
0 & SO(n)
\end{array}
\right)  \cong SO(n)
\]
and $\mathbb{S}^{n}=SO(n+1)/SO(n)$. \ \ \ \ $\blacksquare$
\end{example}

In terms of the Lie algebras $\mathfrak{k}$ and $\mathfrak{h}$ of $K$ and $H$,
respectively, a symmetric space $\left(  K,H,\sigma\right)  $ is described as
follows. To start with, we see from the involutivity of $\sigma$ that
$T_{e}\sigma:\mathfrak{k}\rightarrow\mathfrak{k}$ has eigenvalues $+1$ and
$-1$. Then, the Lie algebra $\mathfrak{k}$ can be written as $\mathfrak{h}%
\oplus\mathfrak{m}$, where $\mathfrak{h}$ is the eigenspace associated to the
eigenvalue $1$ and $\mathfrak{m}$ is the eigenspace associated to $-1$.
Moreover,%
\[
\lbrack\mathfrak{h},\mathfrak{h}]\subset\mathfrak{h,~\ }[\mathfrak{h}%
,\mathfrak{m}]\subset\mathfrak{m,~\ }[\mathfrak{m},\mathfrak{m}]\subset
\mathfrak{h}%
\]
and $\operatorname*{Ad}_{H}(\mathfrak{m})\subset\mathfrak{m}$ (\cite[Chapter
XI, Proposition 2.1 and 2.2]{kobayashi 2}). That is, symmetric spaces are reductive.

When the gauge group $G$ is a subgroup of $GL\left(  n;\mathbb{R}\right)  $,
homogeneous principal bundles $P_{\lambda}\rightarrow K/H$ can be regarded as
subbundles of the frame bundle. This is the case in our examples. Then, any
$K$-invariant principal connection on $P_{\lambda}$ (i.e., a Wang map
$W:\mathfrak{k}\rightarrow\mathfrak{g\subset gl}\left(  n;\mathbb{R}\right)
$) can be consequently considered as $K$-invariant affine connection (i.e., a
Wang map $W:\mathfrak{k}\rightarrow\mathfrak{gl}\left(  n;\mathbb{R}\right)
$). The next theorem is the most important as far as characterizing invariant
affine connections on symmetric spaces is concerned.

\begin{theorem}
[{\cite[Theorem 3.1 and 3.3]{kobayashi 2}}]Let $\left(  K,H,\sigma\right)  $
be a symmetric space. The canonical connection is the only affine connection
on $K/H$ which is invariant by the symmetries $s_{x}$ of $M$, $x\in M$.
Furthermore, a $K$-invariant (indefinite) Riemannian metric on $K/H$, if there
exists any, induces the canonical connection on $M$.
\end{theorem}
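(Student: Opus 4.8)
The plan is to prove existence and uniqueness of the invariant connection separately and then deduce the metric statement from uniqueness. Throughout I identify $T_{o}(K/H)$ at the origin $o$ with $\mathfrak{m}$ and use the two facts recalled above: that $s_{o}$ is the diffeomorphism of $K/H$ induced by the involutive automorphism $\sigma$, so that $T_{o}s_{o}=\left. T_{e}\sigma\right\vert _{\mathfrak{m}}=-\operatorname*{Id}\nolimits_{\mathfrak{m}}$, and that every symmetry is a $K$-conjugate $s_{x}=k\circ s_{o}\circ k^{-1}$ of $s_{o}$, with $k\cdot o=x$. For existence, I would recall that the canonical connection $\nabla^{0}$ is $K$-invariant by construction and then check that it is invariant under $s_{o}$. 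Since $\sigma$ preserves the reductive splitting $\mathfrak{k}=\mathfrak{h}\oplus\mathfrak{m}$ (it is $+1$ on $\mathfrak{h}$ and $-1$ on $\mathfrak{m}$), it sends the canonical geodesics $t\mapsto\exp(tX)\cdot o$, $X\in\mathfrak{m}$, to $t\mapsto\exp(-tX)\cdot o$ and carries $\nabla^{0}$-parallel fields to $\nabla^{0}$-parallel fields; hence $s_{o}$ is an affine transformation for $\nabla^{0}$. Combined with $K$-invariance and the conjugation relation, $\nabla^{0}$ is then invariant under every $s_{x}$.

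For uniqueness, let $\nabla$ be any affine connection invariant under all the symmetries. The difference $D:=\nabla-\nabla^{0}$ is a tensor field of type $(1,2)$, and it is invariant under each $s_{x}$. Evaluating $s_{o}^{\ast}D=D$ at $o$ and using $T_{o}s_{o}=-\operatorname*{Id}$ together with the bilinearity of $D_{o}$, one obtains $D_{o}(X,Y)=-D_{o}(X,Y)$, so $D_{o}=0$. Because the symmetries generate a group of affine transformations acting transitively on $K/H$ (a transitive subgroup of the affine transformation group $K$), invariance of $D$ propagates this vanishing to every point: for $\phi$ in that group with $\phi(o)=x$, the identity $\phi^{\ast}D=D$ and $D_{o}=0$ force $D_{x}=0$. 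Hence $D\equiv0$ and $\nabla=\nabla^{0}$.

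For the metric statement, let $g$ be a $K$-invariant (indefinite) Riemannian metric and $\nabla^{g}$ its Levi-Civita connection. Since $\nabla^{g}$ is canonically associated to $g$ and $K$ acts by isometries, $\nabla^{g}$ is $K$-invariant. I would next show that $s_{o}$ is an isometry: the pullback $s_{o}^{\ast}g$ is again $K$-invariant, because $s_{o}$ intertwines the $K$-action with itself through $\sigma$ and $\sigma(K)=K$, and it agrees with $g$ at $o$ since $T_{o}s_{o}=-\operatorname*{Id}$ preserves the bilinear form $g_{o}$; as two $K$-invariant metrics coinciding at $o$ coincide everywhere, $s_{o}^{\ast}g=g$. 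Being an isometry, $s_{o}$ is affine for $\nabla^{g}$, so $\nabla^{g}$ is $s_{o}$-invariant and, by the conjugation relation, invariant under all symmetries. The uniqueness part then yields $\nabla^{g}=\nabla^{0}$.

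The tensor and bundle bookkeeping is routine; the conceptual crux, and the step I would be most careful about, is the propagation of $D_{o}=0$ to all of $M$, which rests on the transitivity of the group generated by the symmetries. This is the one place where the genuinely \emph{global} affine-symmetric hypothesis is used, rather than merely the infinitesimal symmetric-pair data $(\mathfrak{k},\mathfrak{h},\sigma)$. For the metric assertion, the analogous delicate point is verifying that the geodesic symmetry $s_{o}$ extends to a global isometry, which I reduce above to the sign computation $T_{o}s_{o}=-\operatorname*{Id}$ plus transitivity.
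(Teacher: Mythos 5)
Your proof is correct, but note first that there is nothing in the paper to compare it against: the paper states this theorem as a quotation from Kobayashi--Nomizu (Vol.\ II, Ch.\ XI, Theorems 3.1 and 3.3) and gives no proof of its own. Your argument is essentially the textbook one — the difference tensor $D=\nabla-\nabla^{0}$ killed by the sign of the symmetry's differential, plus the observation that an isometry is affine for its Levi-Civita connection. Two remarks. First, the step you single out as the conceptual crux — propagating $D_{o}=0$ by transitivity of the group generated by the symmetries — is dispensable, and leaning on it quietly imports a connectedness hypothesis (transvections $s_{x}\circ s_{y}$ act transitively only on connected $M$); also your parenthetical is off, since the generated group is not a subgroup of $K$ ($s_{o}$ need not lie in the identity component of the affine group). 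The hypothesis already hands you a symmetry at \emph{every} point with $T_{x}s_{x}=T k\circ\left(-\operatorname{Id}\right)\circ Tk^{-1}=-\operatorname{Id}$ on $T_{x}M$, so the parity argument gives $D_{x}=-D_{x}=0$ pointwise, with no propagation and no connectedness needed; this is how the cited source argues.

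Second, the existence step is the one place where you assert rather than prove: ``carries $\nabla^{0}$-parallel fields to $\nabla^{0}$-parallel fields'' is precisely the statement that $s_{o}$ is affine, so as written the claim is close to circular (and checking it only along canonical geodesics through $o$ would not suffice). The clean justification: since $H\subseteq K_{\sigma}$, the automorphism $\sigma$ fixes $H$ pointwise and is therefore an $H$-equivariant automorphism of the principal bundle $K\rightarrow K/H$ covering $s_{o}$; it preserves the left-translated $\mathfrak{m}$-horizontal distribution because $T_{e}\sigma(\mathfrak{m})=\mathfrak{m}$, hence it preserves the canonical principal connection and the induced affine connection. Equivalently, in the Wang-map language of the paper, $s_{o}$ acts on the bilinear map $\alpha:\mathfrak{m}\times\mathfrak{m}\rightarrow\mathfrak{m}$ classifying a $K$-invariant connection by $\alpha\mapsto-\alpha(-\cdot,-\cdot)=-\alpha$, and the canonical connection is exactly $\alpha=0$. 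Your metric argument — $s_{o}^{\ast}g$ is again $K$-invariant via the intertwining $s_{o}\circ\tau(k)=\tau(\sigma(k))\circ s_{o}$, where $\tau(k)$ is left translation on $K/H$, and it agrees with $g$ at $o$ since $T_{o}s_{o}=-\operatorname{Id}$, so $s_{o}$ is an isometry, hence affine for $\nabla^{g}$, and uniqueness applies — is complete and matches the source.
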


The previous theorem is important for the following reason. We defined
monopoles as those configurations invariant by $SO(n)$ because elements of
$SO(n)$ are physically relevant symmetries of our base space-time
$\mathbb{R}^{n}\backslash\{0\}$. However, in more general models, there may
not exist any natural action of $SO(n)$ onto the base manifold $M$, which is
supposed to be a Riemann manifold according to General Relativity. In this
case, the group of symmetries $s_{x}$ seems to be the natural candidate to
replace $SO(n)$ in the definition of spherical symmetry. In other words, we
should require monopoles to be invariant by the symmetries $s_{x}$, $x\in M$,
instead of by $SO(n)$.

Nevertheless, we are interested so far in connections which are invariant not
by the symmetries but by the action of $K$. As Laquer shows in \cite{laquer},
except for very concrete cases, the canonical connection is the unique affine
connection on a symmetric space $\left(  K,H,\sigma\right)  $ which is
$K$-invariant. Therefore, the unique connection available to construct monopoles.

\begin{theorem}
[{\cite[Theorem 2.1]{laquer}}]\label{teorema laquer}Let $K$ be a simple Lie
group and $\left(  K,H,\sigma\right)  $ a symmetric space. The set of
$K$-invariant affine connections on $K/H$ consists of just the canonical
connection in all cases except for the following:%
\begin{equation}%
\begin{array}
[c]{rr}%
SU\left(  n\right)  /SO(n) & ~~n\geq3,\\
SU(2n)/SP(n) & ~~n\geq3,\\
E_{6}/F_{4}. &
\end{array}
\label{monopole 14}%
\end{equation}
Each of these spaces has a one-dimensional family of invariant affine connections.
\end{theorem}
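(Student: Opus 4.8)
The plan is to recognize that the statement is, after the standard geometric reduction, purely a problem in the representation theory of the isotropy group, and to settle it by a multiplicity computation run over Cartan's classification of symmetric spaces. First I would invoke Nomizu's description of invariant affine connections on a reductive homogeneous space: a symmetric space is reductive with $\mathfrak{k}=\mathfrak{h}\oplus\mathfrak{m}$ and $\operatorname{Ad}_{H}(\mathfrak{m})\subseteq\mathfrak{m}$, and the $K$-invariant affine connections on $K/H$ are in bijection with $\operatorname{Ad}_{H}$-equivariant bilinear maps $\alpha:\mathfrak{m}\times\mathfrak{m}\to\mathfrak{m}$, the canonical connection corresponding to $\alpha=0$. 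Hence the set of invariant connections is an affine space modeled on the real vector space $W:=\operatorname{Hom}_{H}(\mathfrak{m}\otimes\mathfrak{m},\mathfrak{m})$, and the theorem becomes the assertion that $\dim_{\mathbb{R}}W=0$ in general while $\dim_{\mathbb{R}}W=1$ for the three listed families.

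Next I would split $\mathfrak{m}\otimes\mathfrak{m}=S^{2}\mathfrak{m}\oplus\Lambda^{2}\mathfrak{m}$, so that $W=W_{\mathrm{sym}}\oplus W_{\mathrm{skew}}$ separates the difference tensors of torsion-free connections from the torsion tensors. Because $K$ is simple and $(K,H,\sigma)$ is symmetric, $K/H$ is an irreducible symmetric space and the isotropy representation of $H$ on $\mathfrak{m}$ is $\mathbb{R}$-irreducible; using (minus) the Killing form to identify $\mathfrak{m}\cong\mathfrak{m}^{*}$ as $H$-modules turns the computation of $\dim_{\mathbb{R}}W$ into counting the multiplicity of $\mathfrak{m}$ inside $S^{2}\mathfrak{m}$ and inside $\Lambda^{2}\mathfrak{m}$.

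A short necessary condition disposes of the majority of cases: if $H$ contains an element $h_{0}$ with $\operatorname{Ad}_{h_{0}}|_{\mathfrak{m}}=-\operatorname{Id}_{\mathfrak{m}}$ — equivalently, if the geodesic symmetry $s_{o}$ is realized inside $K$ — then every $\alpha\in W$ satisfies $\alpha(X,Y)=\operatorname{Ad}_{h_{0}}\alpha(\operatorname{Ad}_{h_{0}}^{-1}X,\operatorname{Ad}_{h_{0}}^{-1}Y)=-\alpha(X,Y)$, forcing $\alpha=0$; this is exactly the uniqueness under the symmetries $s_{x}$ recorded in the earlier theorem. The remaining work is to run through the symmetric spaces with $-\operatorname{Id}_{\mathfrak{m}}\notin\operatorname{Ad}(H)$: complexify $\mathfrak{m}$, describe $\mathfrak{m}_{\mathbb{C}}$ by its highest weight as an $H_{\mathbb{C}}$-module, and decompose $\mathfrak{m}_{\mathbb{C}}\otimes\mathfrak{m}_{\mathbb{C}}$ case by case across Cartan's list, keeping track of whether $\mathfrak{m}$ is of real, complex, or quaternionic type when passing back to real dimensions. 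The conclusion is that the multiplicity vanishes outside the list, while on the three families precisely one invariant symmetric product survives and no skew invariant appears.

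Finally, for the three exceptional families I would exhibit the surviving product explicitly to confirm $\dim_{\mathbb{R}}W=1$: on $SU(n)/SO(n)$, $\mathfrak{m}$ is the space of trace-free symmetric matrices and $\alpha(X,Y)=XY+YX-\tfrac{2}{n}\operatorname{tr}(XY)\,I$ is $\operatorname{Ad}_{SO(n)}$-invariant; on $SU(2n)/Sp(n)$ the analogous trace-free Jordan product does the job; and on $E_{6}/F_{4}$ the product comes from the trace-free part of the exceptional $27$-dimensional Jordan algebra, $F_{4}$ being its automorphism group. The main obstacle is the representation-theoretic heart of the third and fourth steps: showing the multiplicity is exactly $0$ for every symmetric space off the list, including those with $K$ exceptional, while it is exactly $1$ on the three families. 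This rests on systematic highest-weight bookkeeping and on handling the real-form subtleties, rather than on any single clever identity, and it is also where low-rank coincidences (such as the cross product on $SO(4)/SO(3)$, excluded precisely because $SO(4)$ is not simple) must be eliminated by the simplicity hypothesis.
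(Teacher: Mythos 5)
The paper offers no proof of this theorem to compare against: it is imported verbatim from Laquer \cite{laquer}, and your outline is in substance a reconstruction of Laquer's own argument. The reduction via Nomizu's bijection between $K$-invariant affine connections and $\operatorname{Ad}_{H}$-equivariant bilinear maps $\alpha:\mathfrak{m}\times\mathfrak{m}\rightarrow\mathfrak{m}$ (canonical connection at $\alpha=0$) is the correct and standard first step, and the individual steps you actually write out are sound: the $-\operatorname{Id}_{\mathfrak{m}}$ criterion is a valid two-line computation (equivariance gives $\operatorname{Ad}_{h_{0}}\alpha(X,Y)=\alpha(-X,-Y)=\alpha(X,Y)$, while $\operatorname{Ad}_{h_{0}}$ acts as $-\operatorname{Id}$ on the target, forcing $\alpha=0$); the trace-free Jordan products on $SU(n)/SO(n)$, $SU(2n)/Sp(n)$, and $E_{6}/F_{4}$ (the last from the $27$-dimensional exceptional Jordan algebra, whose trace-free part is the $26$-dimensional isotropy module of $F_{4}$) are the right witnesses for $\dim W\geq1$; and your observation that the cross product on $SO(4)/SO(3)$ is excluded precisely because $SO(4)$ is not simple correctly identifies the role of the simplicity hypothesis.

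That said, as a proof the proposal is incomplete at exactly the point you yourself flag. The assertion that the multiplicity of $\mathfrak{m}$ in $\mathfrak{m}\otimes\mathfrak{m}$ is zero for every symmetric space off the list, and exactly one (purely symmetric, with no skew invariant) on the three families, \emph{is} the theorem once Nomizu's reduction is made, and it is announced rather than executed. The $-\operatorname{Id}$ trick does not dispose of anything verifiable until you identify which Cartan types actually contain such an $h_{0}$ in $\operatorname{Ad}(H)|_{\mathfrak{m}}$ (it does handle, e.g., the Hermitian spaces via the central circle of $H$, but this must be said case by case), and the descent from $\mathfrak{m}_{\mathbb{C}}$ back to $\mathfrak{m}$ is where errors would hide: a complex multiplicity one can produce a real $\operatorname{Hom}$-space of dimension two when $\mathfrak{m}$ is of complex type, so the real/complex/quaternionic bookkeeping you mention is load-bearing, not routine. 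In short: right route --- the one the cited source takes --- with upper bound $\dim W\leq1$ off-list and the vanishing of the skew part on the three families left as a promissory note.
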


\section{The algebraic setting\label{section algebraic setting}}

In this section we are going to describe algebraically the space
$\Omega_{equiv}\left(  P_{\lambda};\mathfrak{g}\right)  ^{K}$ of
$\mathfrak{g}$-valued forms which are $G$-equivariant in the sense of
(\ref{monopolo 18 a}) and $K$-invariant by the left action $L_{\lambda}$
(\ref{monopole 2}). The field strength $\Omega^{\omega}$ will be then a
multilinear map from $\mathfrak{k}$ to $\mathfrak{g}$ easily expressed in
terms of the corresponding Wang map. Carrying out such identification is quite
simple. Since two arbitrary points in $P_{\lambda}$ are always linked by the
composition of the actions of $K$ and $G$ on $P_{\lambda}$, any $\alpha
\in\Omega_{equiv}\left(  P_{\lambda};\mathfrak{g}\right)  ^{K}$ is fully
characterized by its values on a fixed point $p\in P_{\lambda}$. Suppose that
$p\in\pi^{-1}(o)$ is $p=[e,e]^{\sim}$ as in the proof of Proposition
\ref{prop 1}. Since the isomorphism (\ref{monopole 8}) allows us to identify
$T_{p}P_{\lambda}$ with $(\mathfrak{k}\times\mathfrak{g})/\widetilde
{\mathfrak{h}}$, it seems reasonable to express $\Omega_{equiv}\left(
P_{\lambda};\mathfrak{g}\right)  ^{K}$ as a suitable set of forms defined on
$\mathfrak{k}\times\mathfrak{g}$ satisfying some restrictions. We will
particularize in Subsection \ref{examples field strength} the canonical field
strengths of the homogeneous principal bundles introduced in Examples
\ref{examples 1}, which will correspond to the field strengths of Dirac, Yang,
and $SO(2n)$-monopoles, $n\in\mathbb{N}$. Moreover, we will also prove that
they satisfy the Yang-Mills connections (Proposition
\ref{prop canonical is YM}) and, therefore, give rise to monopole
configurations indeed.

First of all, observe that $\Omega_{equiv}\left(  P_{\lambda};\mathfrak{g}%
\right)  ^{K}$ coincides with the space of $\mathfrak{g}$-valued forms forms
on $P_{\lambda}$ $\left(  K\times G\right)  $-equivariant with respect to the
left $\left(  K\times G\right)  $-actions%
\begin{equation}%
\begin{array}
[c]{rrl}%
\Psi:\left(  K\times G\right)  \times P_{\lambda} & \longrightarrow &
P_{\lambda}\\
\left(  \left(  k,g\right)  ,[k_{2},g_{2}]^{\sim}\right)  & \longmapsto &
[kk_{2},g_{2}g^{-1}]^{\sim}%
\end{array}
\label{monopole 19}%
\end{equation}
and%
\begin{equation}%
\begin{array}
[c]{rrl}%
\rho:\left(  K\times G\right)  \times\mathfrak{g} & \longrightarrow &
\mathfrak{g}\\
\left(  \left(  k,g\right)  ,\xi\right)  & \longmapsto & \operatorname*{Ad}%
_{g}\xi.
\end{array}
\label{monopole 20}%
\end{equation}
That is, $\varphi\in\Omega_{equiv}\left(  P_{\lambda};\mathfrak{g}\right)
^{K}$ if and only if $\Psi_{(k,g)}^{\ast}\left(  \varphi\right)  =\rho
_{(k,g)}\circ\varphi$ for any $\left(  k,g\right)  \in K\times G$. On the
other hand, if
\begin{equation}%
\begin{array}
[c]{rrl}%
\widetilde{\Psi}:\left(  K\times G\right)  \times\left.  (K\times G)\right/
\widetilde{H} & \longrightarrow & \left.  (K\times G)\right/  \widetilde{H}\\
\left(  \left(  k,g\right)  ,\overline{\left(  k_{2},g_{2}\right)  }\right)  &
\longmapsto & \overline{\left(  kk_{2},gg_{2}\right)  }%
\end{array}
\label{monopole 25}%
\end{equation}
is the natural left action of $K\times G$ on the quotient space $\left.
(K\times G)\right/  \widetilde{H}$, the isomorphism $\Upsilon:\left.  (K\times
G)\right/  \widetilde{H}\rightarrow P_{\lambda}$ introduced in
(\ref{monopole 8}) is such that the following diagram commutes%
\[%
\begin{array}
[c]{rcl}%
P_{\lambda} & \overset{\Psi_{\left(  k,g\right)  }}{\longrightarrow} &
P_{\lambda}\\
{\small \Upsilon}\uparrow & \# & \uparrow~{\small \Upsilon}\\
\left.  (K\times G)\right/  \widetilde{H} & \underset{\widetilde{\Psi
}_{\left(  k,g\right)  }}{\longrightarrow} & \left.  (K\times G)\right/
\widetilde{H}%
\end{array}
\]
for any $\left(  k,g\right)  \in K\times G$. Therefore, $\varphi\in
\Omega_{equiv}\left(  P_{\lambda};\mathfrak{g}\right)  ^{K}$ if and only if%
\[
\widetilde{\Psi}_{\left(  k,g\right)  }^{\ast}\left(  \Upsilon^{\ast}%
\varphi\right)  =\rho_{\left(  k,g\right)  }\circ\Upsilon^{\ast}%
(\varphi)=\operatorname*{Ad}\nolimits_{g}\circ\Upsilon^{\ast}(\varphi).
\]
In this situation, provided that $K\times G$ is connected, one of the
consequences of \cite[Theorem 13.1]{chevalley cohomology} is that the space
$\Omega_{equiv}(\left.  (K\times G)\right/  \widetilde{H};\mathfrak{g)}$\ of
$\mathfrak{g}$-valued forms on $\left.  (K\times G)\right/  \widetilde{H}$
which are $(K\times G)$-invariant with respect to the actions
(\ref{monopole 25}) and (\ref{monopole 20}) is isomorphic to the graded
differential algebra $\Lambda_{\widetilde{\mathfrak{h}}}\left(  \mathfrak{k}%
\times\mathfrak{g};\mathfrak{g}\right)  $, the space of $\mathfrak{g}$-valued
chains on $\mathfrak{k}\times\mathfrak{g}$ such that

\begin{enumerate}
\item[\textbf{(i)}] vanish on $\widetilde{\mathfrak{h}}=\{\xi\in
\mathfrak{h}~|~(\xi,T_{e}\lambda(\xi))\in\mathfrak{k}\times\mathfrak{g}\}$ and

\item[\textbf{(ii)}] if $\varphi\in\Lambda^{n}\left(  \mathfrak{k}%
\times\mathfrak{g};\mathfrak{g}\right)  $, $z,z_{1},...,z_{n}\in
\mathfrak{k}\times\mathfrak{g}$, $z=\left(  \xi,\eta\right)  $, $z_{i}=\left(
\xi_{i},\eta_{i}\right)  $ with $\xi,\xi_{i}\in\mathfrak{k}$ and $\eta
,\eta_{i}\in\mathfrak{g}$ for any $i=1,...,n$, then%
\begin{equation}
\left[  T_{e}\lambda(\xi),\varphi\left(  z_{1},...,z_{n}\right)  \right]
=\sum_{i=1}^{n}\varphi\left(  z_{1},...,[z,z_{i}],...,z_{n}\right)  ,
\label{monopole 9}%
\end{equation}
where $[z,z_{i}]=\left(  \left[  \xi,\xi_{i}\right]  ,\left[  T_{e}\lambda
(\xi),\eta_{i}\right]  \right)  \in\mathfrak{k}\times\mathfrak{g}$.
\end{enumerate}

Let%
\[
\Phi:\Omega_{equiv}(\left.  (K\times G)\right/  \widetilde{H};\mathfrak{g)}%
\cong\Lambda_{\widetilde{\mathfrak{h}}}\left(  \mathfrak{k}\times
\mathfrak{g};\mathfrak{g}\right)
\]
be the isomorphism between $\Omega_{equiv}(\left.  (K\times G)\right/
\widetilde{H};\mathfrak{g)}$ and $\Lambda_{\widetilde{\mathfrak{h}}}\left(
\mathfrak{k}\times\mathfrak{g};\mathfrak{g}\right)  $. For example, $\Phi$
sends a principal connection $\omega\in\Omega_{equiv}^{1}\left(  P_{\lambda
};\mathfrak{g}\right)  $ associated to a Wang map $W:\mathfrak{k}%
\rightarrow\mathfrak{g}$ to the one chain $\widetilde{W}:\mathfrak{k}%
\times\mathfrak{g}\rightarrow\mathfrak{g}$ given by $\widetilde{W}(\xi
,\eta)=W(\xi)-\eta$, $\xi\in\mathfrak{k}$, $\eta\in\mathfrak{g}$. We define
the {\bfseries\itshape horizontal projector} $\operatorname*{Hor}%
\nolimits_{\widetilde{W}}:\mathfrak{k}\times\mathfrak{g}\rightarrow
\mathfrak{k}\times\mathfrak{g}$ as $\operatorname*{Hor}\nolimits_{\widetilde
{W}}(\xi,\eta)=(\xi,W(\xi))$ and the {\bfseries\itshape vertical projector}
$\operatorname*{Ver}\nolimits_{\widetilde{W}}:\mathfrak{k}\times
\mathfrak{g}\rightarrow\mathfrak{k}\times\mathfrak{g}$ $\operatorname*{Ver}%
\nolimits_{\widetilde{W}}(\xi,\eta)=(0,\eta-W(\xi))$. We made the dependence
on the Wang map $W$ explicit in order to distinguish these vertical and
horizontal projectors from those associated to $TP_{\lambda}$ and $\omega$. In
this context, the {\bfseries\itshape exterior differential} operator
$\mathbf{d}:\Lambda^{n}\left(  \mathfrak{k}\times\mathfrak{g};\mathfrak{g}%
\right)  \rightarrow\Lambda^{n+1}\left(  \mathfrak{k}\times\mathfrak{g}%
;\mathfrak{g}\right)  $ is defined as
\begin{align*}
\mathbf{d}\varphi\left(  z_{1},...,z_{n+1}\right)   &  =\sum_{i=1}%
^{n+1}(-1)^{i-1}[\eta_{i},\varphi\left(  z_{1},...,\widehat{z}_{i}%
,...,z_{n+1}\right)  ]\\
&  +\sum_{i<j}\left(  -1\right)  ^{i+j}\varphi\left(  \left(  \lbrack\xi
_{i},\xi_{j}],[\eta_{i},\eta_{j}]\right)  ,z_{1},...,\widehat{z}%
_{i},...,\widehat{z}_{j},...,z_{n+1}\right)  ,
\end{align*}
where $z_{i}=\left(  \xi_{i},\eta_{i}\right)  $ with $\xi_{i}\in\mathfrak{k}$
and $\eta_{i}\in\mathfrak{g}$ for any $i=1,...,n+1$. In the same way that we
introduced the covariant derivative $D^{\omega}$ on $\Omega\left(  P_{\lambda
};\mathfrak{g}\right)  $ from a principal connection $\omega\in\Omega
^{1}\left(  P_{\lambda};\mathfrak{g}\right)  $, we consider the
{\bfseries\itshape exterior covariant derivative} $D^{\widetilde{W}%
}:=\mathbf{d}\circ\operatorname*{Hor}\nolimits_{\widetilde{W}}$ which
satisfies%
\[
D^{\widetilde{W}}\circ\Phi=\Phi\circ D^{\omega}%
\]
(\cite[Proposition 2]{the}). In particular, the field strength $\Phi
\circ\Omega^{\omega}$ equals $\Omega^{\widetilde{W}}:=D^{\widetilde{W}}%
\circ\widetilde{W}=\mathbf{d}\widetilde{W}+\frac{1}{2}[\widetilde
{W},\widetilde{W}]$ and%
\[
\Omega^{\widetilde{W}}(z_{1},z_{2})=[W(\xi_{1}),W(\xi_{2})]-W([\xi_{1},\xi
_{2}]),
\]
where $z_{i}=(\xi_{i},\eta_{i})\in\mathfrak{k}\times\mathfrak{g}$, $i=1,2$.

The field strength $\Omega^{\omega}$ is a $K$-invariant {\bfseries\itshape
horizontal form}, that is, it vanishes when contracted with any vector field
taking values on the vertical space. It can also be checked that the image of
horizontals forms $\Omega_{equiv}\left(  P_{\lambda};\mathfrak{g}\right)
^{\operatorname*{Hor}}$ under $\Phi$ are those chains in $\Lambda
_{\widetilde{\mathfrak{h}}}\left(  \mathfrak{k}\times\mathfrak{g}%
;\mathfrak{g}\right)  $ which only depend on elements in the horizontal space
$\operatorname*{Hor}\nolimits_{\widetilde{W}}\left(  \mathfrak{k}%
\times\mathfrak{g}\right)  $. Suppose that $K/H$ is a symmetric space such
that $\mathfrak{k}=\mathfrak{h}\oplus\mathfrak{m}$, $\left[  \mathfrak{h}%
,\mathfrak{h}\right]  \subseteq\mathfrak{h}$, $\left[  \mathfrak{h}%
,\mathfrak{m}\right]  \subseteq\mathfrak{m}$, and $\left[  \mathfrak{m}%
,\mathfrak{m}\right]  \subseteq\mathfrak{h}$. Observe that
$\operatorname*{Hor}\nolimits_{\widetilde{W}}\left(  \xi,\eta\right)  =\left(
\xi,T_{e}\lambda\left(  \xi\right)  \right)  \in\widetilde{\mathfrak{h}}$ if
$\xi\in\mathfrak{h}$. Then, since the chains in $\Lambda_{\widetilde
{\mathfrak{h}}}\left(  \mathfrak{k}\times\mathfrak{g};\mathfrak{g}\right)  $
vanish on $\widetilde{\mathfrak{h}}$, we can therefore identify $\Lambda
_{\widetilde{\mathfrak{h}}}\left(  \mathfrak{k}\times\mathfrak{g}%
;\mathfrak{g}\right)  $ with the space $\Lambda_{\mathfrak{h}}\left(
\mathfrak{m};\mathfrak{g}\right)  $ of $\mathfrak{g}$-valued chains on
$\mathfrak{m}$ such that, if $\varphi\in\Lambda_{\mathfrak{h}}^{r}\left(
\mathfrak{m};\mathfrak{g}\right)  $,%
\[
\left[  T_{e}\lambda(\xi),\varphi\left(  \upsilon_{1},...,\upsilon_{r}\right)
\right]  =\sum_{i=1}^{n}\varphi\left(  \upsilon_{1},...,[\xi,\upsilon
_{i}],...,\upsilon_{n}\right)
\]
where $\xi\in\mathfrak{h}$ and $\left\{  \upsilon_{1},...,\upsilon
_{r}\right\}  \subset\mathfrak{m}$ (see (\ref{monopole 9})).

\begin{example}
\label{ejemplo reductivo}\normalfont If $K/H$ is reductive, then
$\mathfrak{k}=\mathfrak{h}\oplus\mathfrak{m}$ and $\operatorname*{Ad}%
\nolimits_{h}(\mathfrak{m})\subseteq\mathfrak{m}$ for any $h\in H$. The field
strength $\Omega^{\mathbf{\widetilde{W}}}$ associated to the Wang map
(canonical connection)%
\[
\mathbf{W}(\xi)=\left\{
\begin{array}
[c]{l}%
T_{e}\lambda(\xi)\text{ if }\xi\in\mathfrak{h}\medskip\\
0\text{ if }\xi\in\mathfrak{m}.
\end{array}
\right.
\]
is given by $\Omega^{\mathbf{\widetilde{W}}}\left(  \upsilon_{1},\upsilon
_{2}\right)  =-T_{e}\lambda\left(  \operatorname*{proj}_{\mathfrak{h}%
}([\upsilon_{1},\upsilon_{2}])\right)  $, $\upsilon_{1},\upsilon_{2}%
\in\mathfrak{m}$. \ \ \ \ $\blacksquare$
\end{example}

\subsection{Yang-Mills equations on symmetric
spaces\label{subsection YM equations symmetric spaces}}

We are going to show that the curvature associated to the canonical connection
on a symmetric space satisfies the Yang-Mills equations (Proposition
\ref{prop canonical is YM}). Thus, let $M=K/H$ be a homogeneous symmetric
space, $\mathfrak{k}=\mathfrak{h}\oplus\mathfrak{m}$, and let $P_{\lambda}$ be
a homogeneous principal bundle given by the Lie group homomorphism
$\lambda:H\rightarrow G$. The left $K$-action $L_{P_{\lambda}}$
(\ref{monopole 2}) on $P_{\lambda}$ induces a natural $K$-action on
$\Omega_{equiv}(P_{\lambda};\mathfrak{g})^{\operatorname*{Hor}}$ by means of
the pull-backs $\left(  L_{P_{\lambda}}\right)  _{k}^{\ast}$, $k\in K$, and
hence on $\Omega\left(  M;\operatorname*{Ad}\left(  P_{\lambda}\right)
\right)  $ by the identification $\Omega\left(  M;\operatorname*{Ad}\left(
P_{\lambda}\right)  \right)  =\Omega_{equiv}(P_{\lambda};\mathfrak{g}%
)^{\operatorname*{Hor}}$. If the Riemann metric on $K/H$ and its associated
volume form are $K$-invariant, so is the product $\left\langle \cdot
,\cdot\right\rangle $ and the Hodge operator $\ast$ commutes with the
$K$-action. That is,%
\[
\ast\left(  k\cdot\varphi\right)  =k\cdot\left(  \ast\varphi\right)
\]
for any $k\in K$ and any $\varphi\in\Omega\left(  K/H;\operatorname*{Ad}%
\left(  P_{\lambda}\right)  \right)  $ (see \cite[Subsection 2.6]{the}).
Consequently, $\ast$ preserves the space of $K$-invariant forms $\Omega\left(
K/H;\operatorname*{Ad}\left(  P_{\lambda}\right)  \right)  ^{K}$. Since
$P_{\lambda}\cong\left.  (K\times G)\right/  \widetilde{H}$ and $\Phi
:\Omega_{equiv}(\left.  (K\times G)\right/  \widetilde{H};\mathfrak{g)}%
^{\operatorname*{Hor}}\cong\Lambda_{\mathfrak{h}}\left(  \mathfrak{m}%
;\mathfrak{g}\right)  $, this implies that the Hodge operator $\ast$ can be
carried to $\Lambda_{\mathfrak{h}}\left(  \mathfrak{m};\mathfrak{g}\right)  $
simply imposing that%
\begin{equation}
\Phi\circ\ast=\ast\circ\Phi, \label{monopole 11}%
\end{equation}
where we also denote the new Hodge operator in the right hand side of
(\ref{monopole 11}) by $\ast$. Additionally, the covariant codifferential
$\delta^{\omega}$ is $K$-invariant as well for any $\omega\in\Omega
_{equiv}^{1}\left(  P_{\lambda};\mathfrak{g}\right)  $ and, since both the
Hodge operator and the covariant derivative commute with $\Phi$, the operator%
\[%
\begin{array}
[c]{rrl}%
\delta_{\widetilde{W}}:\Lambda_{\mathfrak{h}}\left(  \mathfrak{m}%
;\mathfrak{g}\right)  & \longrightarrow & \Lambda_{\mathfrak{h}}\left(
\mathfrak{m};\mathfrak{g}\right) \\
\varphi & \longmapsto & -(-1)^{n(\left\vert \varphi\right\vert +1)}\ast\circ
D^{\widetilde{W}}\circ\ast\varphi,
\end{array}
\]
where $\widetilde{W}=\Phi\left(  \omega\right)  $, is such that%
\[
\Phi\circ\delta^{\omega}=\delta_{\widetilde{W}}\circ\Phi.
\]
Then, $\widetilde{W}=\Phi\left(  \omega\right)  $ is a Yang-Mills connection
if and only if%
\begin{equation}
\delta_{\widetilde{W}}\Omega^{\widetilde{W}}=0. \label{monopole 12}%
\end{equation}

In the following paragraphs, we are going to introduce some notation and carry
out a few computations that will be useful later when working out some
examples. In particular, we will justify (\ref{monopole 12}) explicitly for
the canonical connection on symmetric spaces and will explicitly exhibit the
field strength of Example \ref{ejemplo reductivo} for the homogeneous
principal bundles given in Examples \ref{examples 1}.

Recall that being $M=K/H$ symmetric, the Lie algebra $\mathfrak{k}$ can be
decomposed as $\mathfrak{k}=\mathfrak{h}\oplus\mathfrak{m}$ such that $\left[
\mathfrak{h},\mathfrak{h}\right]  \subseteq\mathfrak{h}$, $\left[
\mathfrak{h},\mathfrak{m}\right]  \subseteq\mathfrak{m}$, and $\left[
\mathfrak{m},\mathfrak{m}\right]  \subseteq\mathfrak{h}$. Let $n=\dim\left(
K/H\right)  =\dim\left(  \mathfrak{m}\right)  $. Let $\left\{  \xi_{1}%
,...,\xi_{\dim\left(  \mathfrak{h}\right)  }\right\}  $ be a basis of
$\mathfrak{h}$ and $\left\{  \upsilon_{1},...,\upsilon_{n}\right\}  $ be a
basis of $\mathfrak{m}$. The commutation relations between the elements of the
basis of $\mathfrak{h}$ and $\mathfrak{m}$ can be written as%
\[
\left[  \xi_{\alpha},\xi_{\beta}\right]  =\sum_{\gamma=1}^{\dim(\mathfrak{h}%
)}c_{~\alpha\beta}^{\gamma}\xi_{\gamma},~~~\left[  \xi_{\alpha},\upsilon
_{i}\right]  =\sum_{j=1}^{n}d_{~\alpha i}^{j}\upsilon_{j},~~~\left[
\upsilon_{i},\upsilon_{j}\right]  =\sum_{\alpha=1}^{\dim(\mathfrak{h})}%
e_{~ij}^{\alpha}\xi_{\alpha}.
\]
On the other hand, let $\left\{  \eta_{1},...,\eta_{\dim\left(  \mathfrak{g}%
\right)  }\right\}  $ be a basis of $\mathfrak{g}$, the Lie algebra of the
structural group of a homogeneous principal bundle $\pi:P_{\lambda}\rightarrow
K/H$ and suppose that%
\[
\left[  \eta_{a},\eta_{b}\right]  =\sum_{c=1}^{\dim(\mathfrak{g})}r_{~ab}%
^{c}\eta_{c}.
\]
The dual basis associated to $\left\{  \xi_{1},...,\xi_{\dim\left(
\mathfrak{h}\right)  }\right\}  $, $\left\{  \upsilon_{1},...,\upsilon
_{n}\right\}  $, and $\left\{  \eta_{1},...,\eta_{\dim\left(  \mathfrak{g}%
\right)  }\right\}  $ will be denoted with the same greek letters with upper
indices, that is, $\left\{  \xi^{1},...,\xi^{\dim\left(  \mathfrak{h}\right)
}\right\}  $, $\left\{  \upsilon^{1},...,\upsilon^{n}\right\}  $, and
$\left\{  \eta^{1},...,\eta^{\dim\left(  \mathfrak{g}\right)  }\right\}  $
respectively. The field strength $\Omega^{\mathbf{\widetilde{W}}}$ associated
to the canonical connection (see Example \ref{ejemplo reductivo}) can be
written as%
\begin{equation}
\Omega_{\lambda}^{\mathbf{\widetilde{W}}}\left(  \upsilon^{i},\upsilon
^{j}\right)  =-T_{e}\lambda\left(  \sum_{\alpha=1}^{\dim(\mathfrak{h})}%
e_{~ij}^{\alpha}\xi_{\alpha}\right)  =-\sum_{\alpha=1}^{\dim(\mathfrak{h}%
)}\sum_{a=1}^{\dim(\mathfrak{g})}e_{~ij}^{\alpha}\lambda_{\alpha}^{a}\eta_{a},
\label{monopole 17}%
\end{equation}
$\upsilon^{i},\upsilon^{j}\in\mathfrak{m}$, where $\left(  \lambda_{\alpha
}^{a}\right)  _{\alpha=1,...,\dim(\mathfrak{h})}^{a=1,...,\dim(\mathfrak{g})}$
denotes the matrix of $T_{e}\lambda$ in the basis $\{\xi_{1},...,\xi
_{\dim(\mathfrak{h})}\}$ and $\{\eta_{1},...,\eta_{\dim(\mathfrak{g})}\}$. In
(\ref{monopole 17}) we have made the dependence of $\Omega_{\lambda
}^{\mathbf{\widetilde{W}}}$ with the homomorphism $\lambda$ explicit.

$K$-invariant metrics on $K/H$ are in one-to-one correspondence with
$\operatorname*{Ad}(H)$-invariant scalar products on $\mathfrak{m}$
(\cite[Chapter X Proposition 3.1]{kobayashi 2}). Similarly, $K$-invariant
volume forms on $K/H$ correspond to $\operatorname*{Ad}(H)$-invariant volume
forms on $\mathfrak{m}$. So let $\mathbf{h}_{\mathfrak{m}}$ be the scalar
product on $\mathfrak{m}$ inducing our Riemann structure on $K/H$ and let
$\mu$ be its corresponding volume element (we are not going to differentiate
between the volume element on $K/H$ and $\mathfrak{m}$). The metric
$\mathbf{h}_{\mathfrak{m}}$ yields the musical isomorphism%
\[%
\begin{array}
[c]{rrl}%
\flat:\mathfrak{m} & \longrightarrow & \mathfrak{m}^{\ast}\\
\upsilon & \longmapsto & \mathbf{h}_{\mathfrak{m}}(\upsilon,\cdot),
\end{array}
\]
whose inverse will be denoted by $\#:\mathfrak{m}^{\ast}\rightarrow
\mathfrak{m}$. The musical isomorphisms will be used to lower and raise
indices as it is customary in physics. For example, if $\{\varphi_{i}%
^{a}\}_{i=1,...,n}^{a=1,...,\dim(\mathfrak{g})}$ are the components of the
$\mathfrak{g}$-valued one form $\varphi\in\Lambda^{1}\left(  \mathfrak{m}%
;\mathfrak{g}\right)  $, $\varphi=\sum_{i=1}^{n}\sum_{a=1}^{\dim
(\mathfrak{g})}\varphi_{~i}^{a}\upsilon^{i}\otimes\eta_{a}$, then
$\{\varphi^{ai}\}_{i=1,...,n}^{a=1,...,\dim(\mathfrak{g})}$ will be the
components of $\varphi^{\#}\in\Lambda^{1}\left(  \mathfrak{m}^{\ast
};\mathfrak{g}\right)  $, $\varphi^{\#}=\sum_{i=1}^{n}\sum_{a=1}%
^{\dim(\mathfrak{g})}\varphi^{ai}\upsilon_{i}\otimes\eta_{a}$. That is,
$\varphi^{ai}=\sum_{j=1}^{n}h^{ij}\varphi_{~j}^{a}$, where $\left(
h^{ij}\right)  _{i,j=1,...,n}$ is the inverse matrix of $\left(
h_{ij}\right)  _{i,j=1,...,n}$, $h_{ij}=\mathbf{h}_{\mathfrak{m}}\left(
\upsilon_{i},\upsilon_{j}\right)  $. It is worth noticing that, in principle,
the elements of the dual basis $\{\upsilon^{1},...,\upsilon^{n}\}$ do not
correspond to $\{\mathbf{h}_{\mathfrak{m}}(\upsilon_{1},\cdot),...,\mathbf{h}%
_{\mathfrak{m}}(\upsilon_{n},\cdot)\}$. In other words, $\upsilon^{i}$ needs
not be $\mathbf{h}_{\mathfrak{m}}(\upsilon_{i},\cdot)$, $i=1,...,n$. In order
to solve this situation and avoid a confusing notation, we may suppose that
$\{\upsilon^{1},...,\upsilon^{n}\}$ is an orthonormal basis with respect to
$\mathbf{h}_{\mathfrak{m}}$. Then, $\left(  h_{ij}\right)  _{i,j=1,...,n}$
equals the identity matrix.

Finally, let $\varphi\in\Lambda_{\mathfrak{h}}\left(  \mathfrak{m}%
;\mathfrak{g}\right)  $ be expressed in the form%
\[
\varphi=\sum_{a=1}^{\dim(\mathfrak{g})}\sum_{i_{1},...,i_{r}}^{n}%
\varphi_{~i_{1}...i_{r}}^{a}(\upsilon^{i_{1}}\wedge...\wedge\upsilon^{i_{r}%
})\otimes\eta_{a}.
\]
It is shown in \cite{the} that
\[
\left(  \ast\varphi\right)  _{~j_{1}...j_{n-r}}^{b}=\frac{1}{r!}\left\vert
\mu\right\vert ^{1/2}\sum_{i_{1},...,i_{r}=1}^{n}\varphi^{bi_{1}...i_{r}%
}\epsilon_{i_{1}...i_{r}j_{1}...j_{n-r}}%
\]
where $\left\vert \mu\right\vert =\det(\mu)\neq0$, $\epsilon$ is the
completely antisymmetric Levi-Civita symbol, and the indices $i_{1},...,i_{r}$
have been raised with $\#$. Moreover, if $W$ is the Wang map associated to the
principal connection $\omega\in\Omega_{equiv}^{1}\left(  P_{\lambda
};\mathfrak{g}\right)  $ then, for any $\varphi\in\Lambda_{\mathfrak{h}%
}\left(  \mathfrak{m};\mathfrak{g}\right)  $,%
\begin{equation}
\left(  \delta_{\widetilde{W}}\varphi\right)  \left(  \zeta_{1},...,\zeta
_{r}\right)  =-\sum_{i=1}^{n}\left[  \left(  \left.  W\right\vert
_{\mathfrak{m}}\right)  ^{\#}(\upsilon^{i}),\varphi\left(  \upsilon_{i}%
,\zeta_{1},...,\zeta_{r}\right)  \right]  , \label{monopole 13}%
\end{equation}
$\zeta_{1},...,\zeta_{r}\in\mathfrak{m}$ (\cite[Example 2.13]{the}).

\begin{proposition}
\label{prop canonical is YM}The canonical connection on a symmetric space is a
Yang-Mills connection.
\end{proposition}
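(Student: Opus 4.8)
The plan is to reduce the statement, via the Yang-Mills criterion (\ref{monopole 12}), to verifying that $\delta_{\mathbf{\widetilde{W}}}\Omega^{\mathbf{\widetilde{W}}}=0$, where $\mathbf{\widetilde{W}}=\Phi(\omega)$ is the chain attached to the canonical connection $\omega$, and then to read this vanishing off directly from the codifferential formula (\ref{monopole 13}). First I would recall that the canonical connection is the Wang map $\mathbf{W}:\mathfrak{k}\rightarrow\mathfrak{g}$ with $\left.\mathbf{W}\right\vert_{\mathfrak{h}}=T_{e}\lambda$ and, crucially, $\left.\mathbf{W}\right\vert_{\mathfrak{m}}=0$. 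Its field strength, computed in Example \ref{ejemplo reductivo}, is the horizontal $2$-form $\Omega^{\mathbf{\widetilde{W}}}\left(\upsilon_{1},\upsilon_{2}\right)=-T_{e}\lambda\left([\upsilon_{1},\upsilon_{2}]\right)$ on $\mathfrak{m}$, where I have used $[\mathfrak{m},\mathfrak{m}]\subseteq\mathfrak{h}$ so that $\operatorname*{proj}_{\mathfrak{h}}$ acts as the identity on $[\upsilon_{1},\upsilon_{2}]$. In particular $\Omega^{\mathbf{\widetilde{W}}}\in\Lambda_{\mathfrak{h}}^{2}\left(\mathfrak{m};\mathfrak{g}\right)$, so that formula (\ref{monopole 13}) is legitimately applicable to it.

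The heart of the argument is then the single observation that (\ref{monopole 13}) expresses the codifferential of any horizontal chain \emph{solely} through the raised restriction $\left(\left.\mathbf{W}\right\vert_{\mathfrak{m}}\right)^{\#}$. Since the canonical connection satisfies $\left.\mathbf{W}\right\vert_{\mathfrak{m}}=0$, we have $\left(\left.\mathbf{W}\right\vert_{\mathfrak{m}}\right)^{\#}=0$, and hence every summand of
\[
\left(\delta_{\mathbf{\widetilde{W}}}\Omega^{\mathbf{\widetilde{W}}}\right)\left(\zeta_{1}\right)=-\sum_{i=1}^{n}\left[\left(\left.\mathbf{W}\right\vert_{\mathfrak{m}}\right)^{\#}(\upsilon^{i}),\Omega^{\mathbf{\widetilde{W}}}\left(\upsilon_{i},\zeta_{1}\right)\right]
\]
vanishes for every $\zeta_{1}\in\mathfrak{m}$. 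Therefore $\delta_{\mathbf{\widetilde{W}}}\Omega^{\mathbf{\widetilde{W}}}=0$ and, by (\ref{monopole 12}), the canonical connection is Yang-Mills. (In fact the same computation shows the stronger fact that $\delta_{\mathbf{\widetilde{W}}}$ annihilates \emph{every} horizontal chain, of which our statement is the special case $\varphi=\Omega^{\mathbf{\widetilde{W}}}$.)

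I do not expect a genuine obstacle in the proof itself: once the algebraic codifferential formula (\ref{monopole 13}) and the identifications of Section \ref{section algebraic setting} are in place, the vanishing is immediate precisely because the canonical connection is the one whose Wang map kills $\mathfrak{m}$. The substance lies upstream, in establishing (\ref{monopole 13}) and the compatibilities $\Phi\circ\delta^{\omega}=\delta_{\widetilde{W}}\circ\Phi$ and $\Phi\circ\ast=\ast\circ\Phi$; the only point I would pause to double-check is that $\Omega^{\mathbf{\widetilde{W}}}$ really lands in $\Lambda_{\mathfrak{h}}\left(\mathfrak{m};\mathfrak{g}\right)$ — which it does, being the curvature of a $K$-invariant connection, hence a $K$-invariant horizontal form — so that the hypotheses under which (\ref{monopole 13}) was derived are met.
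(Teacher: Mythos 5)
Your proposal is correct and coincides with the paper's own proof: both reduce the claim via (\ref{monopole 12}) to the algebraic codifferential formula (\ref{monopole 13}) and observe that $\left.\mathbf{W}\right\vert_{\mathfrak{m}}=0$ forces $\delta_{\mathbf{\widetilde{W}}}$ to vanish identically, hence $\delta_{\mathbf{\widetilde{W}}}\Omega^{\mathbf{\widetilde{W}}}=0$. Your parenthetical remark that $\delta_{\mathbf{\widetilde{W}}}$ annihilates every horizontal chain is exactly the observation the paper makes (``$\delta_{\mathbf{\widetilde{W}}}=0$''), so the two arguments are the same.
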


\begin{proof}
The canonical connection satisfies $\left.  \mathbf{W}\right\vert
_{\mathfrak{m}}=0$. We see from (\ref{monopole 13}) that $\delta
_{\mathbf{\widetilde{W}}}=0$. Consequently $\delta_{\mathbf{\widetilde{W}}%
}\Omega^{\mathbf{\widetilde{W}}}=0$ and $\boldsymbol{\omega}=\Phi
^{-1}(\widetilde{\mathbf{W}})$ is Yang-Mills.
\end{proof}

\subsection{Examples: invariant field strengths on the
sphere\label{examples field strength}}

We want to compute in this subsection the curvature associated to the
canonical connection for the principal bundles described in Examples
\ref{examples 1}. Recall that they were principal bundles over the sphere
$\mathbb{S}^{n}$ for some values of $n\in\mathbb{N}$ and several gauge groups
$G$. These curvatures will be useful later on in order to calculate the charge
of the monopole for some of the classical examples found in the literature
(Section \ref{seccion ejemplos}).

Let $\mathfrak{k}=\mathfrak{so}(n+1)=\mathfrak{h}\oplus\mathfrak{m}$, where
$\mathfrak{h}=\mathfrak{so}(n)$ are the $\left(  n+1\right)  \times\left(
n+1\right)  $ matrices of the form%
\[%
\begin{pmatrix}
0 & 0\\
0 & B
\end{pmatrix}
,~~B\text{ skew-symmetric of degree }n,
\]
and $\mathfrak{m}$ is the subspace of all matrices of the form%
\begin{equation}%
\begin{pmatrix}
0 & -\upsilon^{\top}\\
\upsilon & 0
\end{pmatrix}
, \label{monopole 21}%
\end{equation}
where $\upsilon$ is a (column) vector in $\mathbb{R}^{n}$. Let $\left\{
\xi_{\alpha,\beta}\right\}  _{\alpha>\beta}$, $\alpha,\beta\in\{1,...,n\}$, be
the basis of $\mathfrak{so}(n)$ such that $\xi_{\alpha,\beta}$ is the matrix
whose entries are $1$ in the position $\left(  \alpha,\beta\right)  $, $-1$ in
the position $\left(  \beta,\alpha\right)  $, and $0$ elsewhere. Observe that,
for the sake of a clearer notation, we label the basis of $\mathfrak{h}$ with
two indices instead of a single one. Let $\{\upsilon_{1},...,\upsilon_{n}\}$
be the canonical basis of $\mathbb{R}^{n}$, $\upsilon_{i}=(0,\overset
{i-1)}{...},1,0,...,0)$, which is also a basis of $\mathfrak{m}$ using the
correspondence given by (\ref{monopole 21}). Then,%
\[
\lbrack\upsilon_{i},\upsilon_{j}]=\upsilon_{i}\upsilon_{j}-\upsilon
_{j}\upsilon_{i}=\xi_{j,i},~~i<j,
\]
Therefore, $[\upsilon_{i},\upsilon_{j}]=\sum_{\alpha,\beta}e_{~ij}%
^{\alpha\beta}\xi_{\alpha,\beta}$ implies $e_{~ij}^{\alpha\beta}=1$ if
$\alpha=j$ and $\beta=i$ and $0$ otherwise. In order to be coherent with our
notation, we set $\xi_{i,j}=-\xi_{j,i}$ whenever $i>j$. Thus%
\begin{equation}
\Omega_{\lambda}^{\mathbf{\widetilde{W}}}\left(  \upsilon_{i},\upsilon
_{j}\right)  =-\sum_{a=1}^{\dim(\mathfrak{g})}\lambda_{ji}^{a}\eta_{a}.
\label{monopole 22}%
\end{equation}
Let us particularize the field strength (\ref{monopole 22}) for those gauge
groups $G$ given in Examples \ref{examples 1}.

\begin{examples}
\normalfont \ \ 

\begin{enumerate}
\item[\textbf{(i)}] $n=2$ and $G=U(1)$. Here $\mathfrak{g}=\mathfrak{h}%
=\mathfrak{u}(1)$ are both isomorphic to $i\mathbb{R}$, so $\dim\left(
\mathfrak{g}\right)  =\dim(\mathfrak{h})=1$. On the other hand, $\mathfrak{m}%
=\mathbb{R}^{2}$. Let $\lambda_{m}:U(1)\rightarrow U(1)$ given by $\lambda
_{m}(\operatorname*{e}\nolimits^{iz})=\operatorname*{e}\nolimits^{izm}$. Then
$T_{e}\lambda_{m}:i\mathbb{R}\rightarrow i\mathbb{R}$ equals multiplying by
$m$ and $\Omega_{\lambda_{m}}^{\mathbf{\widetilde{W}}}\left(  \upsilon
_{1},\upsilon_{2}\right)  =-im\in i\mathbb{R}\cong\mathfrak{u}(1)$.

\item[\textbf{(ii)}] $n=4$ and $G=SO(3)$. As we have already seen, the set
$\mathcal{R}\left(  4,SO(3)\right)  $ contains three elements. The trivial
homomorphism $\lambda_{trivial}:SO(4)\rightarrow SO(3)$ sends any $h\in SO(4)$
to $e=\operatorname*{Id}\in SO(3)$, $T_{e}\lambda_{trivial}=0$, and
consequently the corresponding fields strength $\Omega_{\lambda_{trivial}%
}^{\mathbf{\widetilde{W}}}=0$ vanishes identically.

Let $\lambda_{l}:SO(4)\rightarrow SO(3)$, $l=1,2$, be the homomorphism induced
by $\sigma_{l}:S^{3}\times S^{3}\rightarrow S^{3}$ respectively such that
$\sigma_{1}\left(  x,y\right)  =x$ and $\sigma_{2}\left(  x,y\right)  =y$. One
can prove that $\mathfrak{so}(4)=\mathfrak{so}(3)^{(1)}\times\mathfrak{so}%
(3)^{(2)}$, where $\mathfrak{so}(3)^{(l)}$ is the subalgebra spanned by
$\{A^{l},B^{l},C^{l}\}$, $l=1,2$, such that $A^{l}=-\xi_{2,1}+\left(
-1\right)  ^{l}\xi_{4,3}$, $B^{l}=-\xi_{3,2}+\left(  -1\right)  ^{l}\xi_{4,1}%
$, and $C^{l}=-\xi_{3,1}+\left(  -1\right)  ^{l+1}\xi_{4,2}$ (see
\cite[Section 3]{Itoh}. The different sign in our expressions is due to a
different choice of the basis $\{\xi_{\alpha,\beta}\}_{\alpha>\beta}$ of
$\mathfrak{so}(4)$). Our initial basis can be written in terms of
$\{A^{l},B^{l},C^{l}\}$, $l=1,2$, as%
\begin{align}
\xi_{2,1}  &  =-\frac{1}{2}\left(  A^{1}+A^{2}\right)  ,~~\xi_{3,1}=-\frac
{1}{2}\left(  C^{1}+C^{2}\right)  ,~~\xi_{3,2}=-\frac{1}{2}\left(  B^{1}%
+B^{2}\right) \nonumber\\
\xi_{4,1}  &  =-\frac{1}{2}\left(  B^{1}-B^{2}\right)  ,~~\xi_{4,2}=\frac
{1}{2}\left(  C^{1}-C^{2}\right)  ,~~\xi_{4,3}=-\frac{1}{2}\left(  A^{1}%
-A^{2}\right)  . \label{monopole 36}%
\end{align}
Both $\left\{  A^{1},B^{1},C^{1}\right\}  $ and $\{A^{2},B^{2},C^{2}\}$ can be
regarded as basis of $\mathfrak{so}(3)$. The field strengths $\Omega
_{\lambda_{l}}^{\mathbf{\widetilde{W}}}:\mathbb{R}^{4}\times\mathbb{R}%
^{4}\rightarrow\mathfrak{so}\left(  3\right)  $ satisfy%
\begin{align*}
\Omega_{\lambda_{l}}^{\mathbf{\widetilde{W}}}\left(  \upsilon_{1},\upsilon
_{2}\right)   &  =\frac{1}{2}A^{l},~\Omega_{\lambda_{l}}^{\mathbf{\widetilde
{W}}}\left(  \upsilon_{1},\upsilon_{3}\right)  =\frac{1}{2}C^{l}%
,~\Omega_{\lambda_{l}}^{\mathbf{\widetilde{W}}}\left(  \upsilon_{1}%
,\upsilon_{4}\right)  =\frac{(-1)^{l+1}}{2}B^{l}\\
\Omega_{\lambda_{l}}^{\mathbf{\widetilde{W}}}\left(  \upsilon_{2},\upsilon
_{3}\right)   &  =\frac{1}{2}B^{l},~\Omega_{\lambda_{l}}^{\mathbf{\widetilde
{W}}}\left(  \upsilon_{2},\upsilon_{4}\right)  =\frac{(-1)^{l}}{2}%
C^{l},~\Omega_{\lambda_{l}}^{\mathbf{\widetilde{W}}}\left(  \upsilon
_{3},\upsilon_{4}\right)  =\frac{(-1)^{l+1}}{2}A^{l},
\end{align*}
$l=1,2$.

\item[\textbf{(iii)}] $n=4$ and $G=SO(4)$. In this example, $\mathcal{R}%
\left(  4,SO(4)\right)  $ contains $5$ elements. The trivial homomorphism has
associated a zero field strength. The identity $\lambda_{\operatorname*{Id}%
}:SO(4)\rightarrow SO(4)$ has tangent map $T_{e}\lambda_{\operatorname*{Id}%
}=\left.  \operatorname*{Id}\right\vert _{\mathfrak{so}(4)}$. Thus,
$\Omega_{\lambda_{\operatorname*{Id}}}^{\mathbf{\widetilde{W}}}\left(
\upsilon_{i},\upsilon_{j}\right)  =-\xi_{ji}$, $\upsilon_{i},\upsilon_{j}%
\in\mathbb{R}^{4}$, $i<j$.

Let $\lambda_{i}:SO(4)\rightarrow SO(4)$, $i=3,4$, be the homomorphism induced
by $\sigma_{i}:S^{3}\times S^{3}\rightarrow S^{3}\times S^{3}$ respectively
such that $\sigma_{3}\left(  x,y\right)  =(x,x)$ and $\sigma_{4}\left(
x,y\right)  =(y,y)$. Let $\{A^{1},B^{1},C^{1},A^{2},B^{2},C^{2}\}$ be the
basis of $\mathfrak{so}(4)$ introduced in \textbf{(ii)}. We are going to
consider $\mathfrak{so}(4)$ as $\mathfrak{so}(3)\times\mathfrak{so}(3)$ and
both $\{A^{1},B^{1},C^{1}\}$ and $\{A^{2},B^{2},C^{2}\}$ indistinguishably as
bases of $\mathfrak{so}(3)$. Then, using (\ref{monopole 36}),%
\begin{align*}
\Omega_{\lambda_{i}}^{\mathbf{\widetilde{W}}}\left(  \upsilon_{1},\upsilon
_{2}\right)   &  =\frac{1}{2}(A^{i},A^{i}),~~\Omega_{\lambda_{i}%
}^{\mathbf{\widetilde{W}}}\left(  \upsilon_{1},\upsilon_{3}\right)  =\frac
{1}{2}(C^{i},C^{i}),\\
\Omega_{\lambda_{i}}^{\mathbf{\widetilde{W}}}\left(  \upsilon_{1},\upsilon
_{4}\right)   &  =\frac{(-1)^{i+1}}{2}(B^{i},B^{i}),~~\Omega_{\lambda_{i}%
}^{\mathbf{\widetilde{W}}}\left(  \upsilon_{2},\upsilon_{3}\right)  =\frac
{1}{2}(B^{i},B^{i}),\\
\Omega_{\lambda_{i}}^{\mathbf{\widetilde{W}}}\left(  \upsilon_{2},\upsilon
_{4}\right)   &  =\frac{\left(  -1\right)  ^{i}}{2}(C^{i},C^{i}),~~\Omega
_{\lambda_{i}}^{\mathbf{\widetilde{W}}}\left(  \upsilon_{3},\upsilon
_{4}\right)  =\frac{(-1)^{i+1}}{2}(A^{i},A^{i}),
\end{align*}
$i=1,2$. Finally, let $\lambda_{\delta}:SO(4)\rightarrow SO(4)$ be the
homomorphism induced by $\delta:S^{3}\times S^{3}\rightarrow S^{3}\times
S^{3}$, $\delta\left(  x,y\right)  =\left(  y,x\right)  $. In this case,%
\begin{align*}
\Omega_{\lambda_{\delta}}^{\mathbf{\widetilde{W}}}\left(  \upsilon
_{1},\upsilon_{2}\right)   &  =\frac{1}{2}(A^{2},A^{1}),~~\Omega
_{\lambda_{\delta}}^{\mathbf{\widetilde{W}}}\left(  \upsilon_{1},\upsilon
_{3}\right)  =\frac{1}{2}(C^{2},C^{1}),\\
\Omega_{\lambda_{\delta}}^{\mathbf{\widetilde{W}}}\left(  \upsilon
_{1},\upsilon_{4}\right)   &  =\frac{1}{2}(-B^{2},B^{1}),~~\Omega
_{\lambda_{\delta}}^{\mathbf{\widetilde{W}}}\left(  \upsilon_{2},\upsilon
_{3}\right)  =\frac{1}{2}(B^{2},B^{1}),\\
\Omega_{\lambda_{\delta}}^{\mathbf{\widetilde{W}}}\left(  \upsilon
_{2},\upsilon_{4}\right)   &  =\frac{1}{2}(C^{2},-C^{1}),~~\Omega
_{\lambda_{\delta}}^{\mathbf{\widetilde{W}}}\left(  \upsilon_{3},\upsilon
_{4}\right)  =\frac{1}{2}(-A^{2},A^{1}).
\end{align*}

\item[\textbf{(iv)}] $n=2k\geq6$ and $G=SO(2k)$, $k\in\mathbb{N}$. As in the
item \textbf{(ii)}, $\mathcal{R}\left(  2k,SO(2k)\right)  $ contains three
elements. The trivial homomorphism and the identity $\lambda
_{\operatorname*{Id}}:SO(2k)\rightarrow SO(2k)$ are similar to \textbf{(iii)}.
The other element in $\mathcal{R}\left(  2k,SO(2k)\right)  $ is the
conjugation $\delta:SO(2k)\rightarrow SO(2k)$ by the diagonal matrix with
entries $\left(  -1,...,-1,1\right)  $. The tangent map $T_{e}\delta
:\mathfrak{so}(2k)\rightarrow\mathfrak{so}(2k)$ acts on the basis
$\{\xi_{\alpha,\beta}\}_{\alpha>\beta}$, $\alpha,\beta\in\{1,...,2k\}$, as
follows%
\[
T_{e}\delta\left(  \xi_{\alpha,\beta}\right)  =\left\{
\begin{array}
[c]{c}%
\xi_{\alpha,\beta}\text{ if }\alpha\neq2k\medskip\\
-\xi_{\alpha,\beta}\text{ if }\alpha=2k.
\end{array}
\right.
\]
Thus, $\Omega_{\delta}^{\mathbf{\widetilde{W}}}\left(  \upsilon_{i}%
,\upsilon_{j}\right)  =-\xi_{j,i}$ if $i<j$ and $j\neq2k$ and $\Omega_{\delta
}^{\mathbf{\widetilde{W}}}\left(  \upsilon_{i},\upsilon_{2k}\right)
=\xi_{2k,i}$. \ \ \ \ $\blacksquare$
\end{enumerate}
\end{examples}

\section{The Chern-Weil homomorphism. Characteristic
classes\label{section Chern classes}}

This section aims at recalling the concept of characteristic class and how the
Chern-Weil homomorphism works. Roughly speaking, given a principal bundle
$\pi:P\rightarrow M$, the Chern-Weil homomorphism associates an even
differential form on $M$ to the curvature $\Omega^{\omega}$. In order to do
that, a symmetric $\operatorname*{Ad}$-invariant polynomial on $\mathfrak{g}$
is required so that the dependence of $\Omega^{\omega}$ on the gauge indices
can be removed. The most remarkable point is that the differential form on $M$
defines a \textit{de Rham} cohomology class which is independent of the
principal connection $\omega\in\Omega^{1}\left(  P;\mathfrak{g}\right)  $
under consideration. When its degree matches the dimension of $M$, the
integral of such form over $M$ defines a topological quantity that is
interpreted as the charge of the configuration described by $\pi:P\rightarrow
M$.

Let $G$ be a Lie group with Lie algebra $\mathfrak{g}$. Let $S^{k}\left(
\mathfrak{g}^{\ast}\right)  $ be the set of maps $f:\mathfrak{g}\times
\overset{k)}{...}\times\mathfrak{g}\rightarrow\mathbb{R}$ (or $\mathbb{C}$)
which are multilinear and symmetric. That is, $f\left(  \eta_{\sigma
(1)},...,\eta_{\sigma(k)}\right)  =f\left(  \eta_{1},...,\eta_{k}\right)  $
for any permutation $\sigma\in S_{k}$ of $k$ elements. Let $S\left(
\mathfrak{g}^{\ast}\right)  ^{G}=\oplus_{k\geq0}S^{k}\left(  \mathfrak{g}%
^{\ast}\right)  ^{G}$ be the symmetric algebra of multilinear functions on
$\mathfrak{g}$ which are $\operatorname*{Ad}$-invariant. Explicitly, $f\in
S^{k}\left(  \mathfrak{g}^{\ast}\right)  ^{G}$ if $f\in S^{k}\left(
\mathfrak{g}^{\ast}\right)  $ and%
\[
f\left(  \operatorname*{Ad}\nolimits_{g}(\eta_{1}),...,\operatorname*{Ad}%
\nolimits_{g}(\eta_{k})\right)  =f\left(  \eta_{1},...,\eta_{k}\right)
\]
for any $g\in G$, any $\eta_{1},...,\eta_{k}\in\mathfrak{g}$. For later
convenience, we remark that the algebra $S\left(  \mathfrak{g}^{\ast}\right)
^{G}$ is isomorphic to the algebra $P\left(  \mathfrak{g}^{\ast}\right)
^{G}=\oplus_{k\geq0}P^{k}\left(  \mathfrak{g}^{\ast}\right)  ^{G}$ of
$\operatorname*{Ad}$-invariant homogeneous polynomials on $\mathfrak{g}$
(\cite[Section 6.2]{naber}). This isomorphism works through \textit{the
polarization formula.} Indeed, if $f\in P^{k}\left(  \mathfrak{g}^{\ast
}\right)  ^{G}$ is an homogeneous polynomial of degree $k$, we define
$\operatorname*{Sym}(f)\in S^{k}\left(  \mathfrak{g}^{\ast}\right)  ^{G}$ as
\begin{equation}
\operatorname*{Sym}(f)\left(  \eta_{1},...,\eta_{k}\right)  =\frac{1}{k!}%
\sum_{i=0}^{k-1}\left(  -1\right)  ^{i}\sum_{j_{r}\neq j_{s}}f\left(
\eta_{j_{1}}+...+\eta_{j_{k-i}}\right)  . \label{monopole 24}%
\end{equation}
For example, if $k=3$, then%
\begin{align*}
\operatorname*{Sym}(f)\left(  \eta_{1},\eta_{2},\eta_{3}\right)   &  =\frac
{1}{6}\left[  f\left(  \eta+\eta_{2}+\eta_{3}\right)  -f\left(  \eta_{1}%
+\eta_{2}\right)  -f\left(  \eta_{1}+\eta_{3}\right)  -f\left(  \eta_{2}%
+\eta_{3}\right)  \right. \\
&  \left.  +f\left(  \eta_{1}\right)  +f\left(  \eta_{2}\right)  +f\left(
\eta_{3}\right)  \right]  .
\end{align*}

Let $\pi:P\rightarrow M$ be a principal fiber bundle with structural Lie group
$G$. Let $\omega\in\Omega_{equiv}^{1}\left(  P;\mathfrak{g}\right)  $ be a
principal connection and let $\Omega^{\omega}\in\Omega_{equiv}^{2}\left(
P;\mathfrak{g}\right)  ^{\operatorname*{Hor}}$ its curvature. If $f\in
S^{k}\left(  \mathfrak{g}^{\ast}\right)  ^{G}$, then the $2k$-form%
\[
\bar{f}\left(  \Omega^{\omega}\right)  (p)\left(  X_{1},...,X_{2k}\right)
=\frac{1}{2^{k}}\sum_{\sigma\in S_{2k}}\left(  -1\right)  ^{\left\vert
\sigma\right\vert }f\left(  \Omega^{\omega}(p)(X_{\sigma(1)},X_{\sigma
(2)}),...,\Omega^{\omega}(p)(X_{\sigma(2k-1)},X_{\sigma(2k)})\right)
\]
is $G$-invariant and horizontal. Therefore, there exists a uniquely defined
$2k$-form $cw\left(  f,P,\omega\right)  \in\Omega^{2k}\left(  M\right)  $ such
that%
\[
\pi^{\ast}(cw\left(  f,P,\omega\right)  )=\bar{f}\left(  \Omega^{\omega
}\right)  .
\]
The form $cw\left(  f,P,\omega\right)  $ is called the {\bfseries\itshape
Chern-Weil form} of $f$. What is more important, $cw\left(  f,P,\omega\right)
$ is closed, so there is a well defined \textit{de Rham} cohomology class
$[cw\left(  f,P,\omega\right)  ]\in H^{2k}\left(  M\right)  $ called the
{\bfseries\itshape characteristic class} of the invariant polynomial $f$
(\cite[Theorem 10.4.3]{bleecker}, \cite[Theorem 20.3]{michor}), which is
independent of the particular choice of $\omega\in\Omega_{equiv}%
^{1}(P;\mathfrak{g})$ (\cite[Theorem 10.4.11]{bleecker}). That is, it only
depends on the fiber bundle structure of $P$. It is worth noticing that the
proof of this fact uses that the polynomial $f$ is \textit{symmetric}. For
example, the characteristic classes of a trivial principal bundle all vanish.
In addition, the mapping%
\[%
\begin{array}
[c]{rrl}%
Cw_{P}:S\left(  \mathfrak{g}^{\ast}\right)  ^{G} & \longrightarrow & H^{\ast
}\left(  M\right) \\
f & \longmapsto & [cw\left(  f,P,\omega\right)  ]
\end{array}
\]
is a homomorphism of commutative algebras, known as the {\bfseries\itshape
Chern-Weil homomorphism}. If two principal bundles $P$ and $P^{\prime}$ over
$M$ are isomorphic, they give rise to the same Chern-Weil homomorphism
(\cite[Theorem 10.4.8]{bleecker}).

We are going to assume from now on that our Lie group $G$ is contained in
$GL\left(  m,\mathbb{R}\right)  $ for some $m\in\mathbb{N}$. Roughly speaking,
$G$ may be thought as a classical matrix Lie group. For such groups, the
adjoint action of $G$ on $\mathfrak{g}$ has a simple expression. That is,
\[
\operatorname*{Ad}\nolimits_{g}(\xi)=g\xi g^{-1},~g\in G,~\xi\in\mathfrak{g},
\]
where $g\xi g^{-1}$ is a product of matrices. For a matrix $A\in
\mathfrak{gl}(m,\mathbb{R})$, the {\bfseries\itshape characteristic
coefficient} $c_{k}^{m}(A)$ are implicitly given by the equation%
\[
\det\left(  t\operatorname*{Id}+\frac{i}{2\pi}A\right)  =\sum_{k=0}^{m}%
t^{m-k}c_{k}^{m}(A),~~t\in\mathbb{R}.
\]
The characteristic coefficients are homogeneous polynomials of degree $k$
which are $\operatorname*{Ad}$-invariant. Furthermore, they satisfy the
recursive formula (\cite[Lemma 20.9]{michor})%
\begin{equation}
c_{k}^{m}(A)=\frac{1}{k}\sum_{j=0}^{k-1}\left(  -1\right)  ^{k-j-1}\left(
\frac{i}{2\pi}\right)  ^{k-j}c_{j}^{m}(A)\operatorname*{trace}(A^{k-j}%
),~~A\in\mathfrak{gl}(m,\mathbb{R}). \label{monopole 15}%
\end{equation}
For example, it is easy to show from (\ref{monopole 15}) that%
\begin{gather*}
c_{0}^{m}\left(  A\right)  =1,~~c_{1}\left(  A\right)  =\frac{i}{2\pi
}\operatorname*{trace}A,~~c_{2}^{m}\left(  A\right)  =-\frac{1}{8\pi^{2}%
}\left[  \left(  \operatorname*{trace}A\right)  ^{2}-\operatorname*{trace}%
\left(  A^{2}\right)  \right] \\
c_{3}^{m}\left(  A\right)  =-\frac{i}{48\pi^{3}}\left[  \left(
\operatorname*{trace}A\right)  ^{3}-3\operatorname*{trace}\left(
A^{2}\right)  \operatorname*{trace}A+2\operatorname*{trace}\left(
A^{3}\right)  \right]  .
\end{gather*}
The $k$-th {\bfseries\itshape Chern class} is defined as%
\[
c_{k}\left(  P\right)  :=Cw_{P}(\operatorname*{Sym}(c_{k}^{m}))\in H^{2k}(M).
\]
Among other characteristic classes, we choose the Chern classes because,
despite the presence of the imaginary unit $i\in\mathbb{C}$ in their
definition, they are actually real cohomology classes provided that $G$ is a
subgroup of the unitary group $U\left(  m\right)  $ as in our examples
(\cite[20.13]{michor}). If we write $\Omega^{\omega}$ as a matrix valued two
form $\left(  (\Omega^{\omega})_{j}^{i}\right)  _{i,j=1,...,m}$, then%
\begin{equation}
\pi^{\ast}\left(  cw\left(  \operatorname*{Sym}(c_{k}^{m}),P,\omega\right)
\right)  =\frac{\left(  -1\right)  ^{k}}{\left(  2\pi i\right)  ^{k}k!}%
\sum_{i_{1}<...<i_{k}}\sum_{\sigma\in S_{k}}\left(  -1\right)  ^{\left\vert
\sigma\right\vert }\left(  \Omega^{\omega}\right)  _{\sigma(i_{1})}^{i_{1}%
}\wedge...\wedge\left(  \Omega^{\omega}\right)  _{\sigma(i_{k})}^{i_{k}}
\label{monopole 30}%
\end{equation}
(\cite[page 309]{kobayashi 2}).

\begin{example}
\normalfont The characteristic coefficients are usually algebraically
independent and generate the algebra of polynomial functions on $\mathfrak{g}$
invariant by $\operatorname*{Ad}\nolimits_{G}$, at least for some of the
classical matrix groups such as $U\left(  m\right)  $ (\cite[Chapter
XII]{kobayashi 2}). However, we are going to deal in this example with
$G=SO(m)$ whose Lie algebra $\mathfrak{g}$ is the algebra of skew-symmetric
matrices or order $m\in\mathbb{N}$. The characteristic coefficients $c_{k}%
^{m}$ are then equal to zero if $k$ is odd, as it can be inductively checked
from (\ref{monopole 15}). Moreover, if $m=2q+1$ is odd, then $\{c_{2}%
^{m},...,c_{2q}^{m}\}$ are indeed algebraically independent and generate
$P\left(  \mathfrak{so}(m)^{\ast}\right)  ^{SO(m)}$ (\cite[Chapter XII Theorem
2.7]{kobayashi 2}). If $m=2q$ is even, however, there exists a polynomial
function $\operatorname*{Pf}$ (unique up to a sign) such that $c_{2q}%
^{m}=\left(  -1\right)  ^{q}\left(  2\pi\right)  ^{-2q}\operatorname*{Pf}^{2}$
and the functions $\{c_{2}^{m},...,c_{2(q-1)}^{m},\operatorname*{Pf}\}$ are
algebraically independent and generate $P(\mathfrak{so}(m)^{\ast})^{SO(m)}$.
The polynomial $\operatorname*{Pf}$ is called {\bfseries\itshape the Pfaffian}
and, up to a factor, equals the square root of the determinant of a matrix. If
the matrix $A\in\mathfrak{so}(m)$ is written as $A=(A_{j}^{i})_{i,j=1,...,2q}%
$, then%
\[
\operatorname*{Pf}\left(  A\right)  =\frac{1}{2^{q}q!}\sum_{\eta\in S_{2q}%
}\left(  -1\right)  ^{\left\vert \eta\right\vert }A_{\eta(2)}^{\eta(1)}\cdots
A_{\eta(2q)}^{\eta(2q-1)}.
\]
The {\bfseries\itshape Euler class} $\chi\left(  P\right)  $ is defined as
$\frac{1}{\pi^{q}}Cw_{P}\left(  \operatorname*{Sym}(\operatorname*{Pf}%
)\right)  $. If the curvature $\Omega^{\omega}\in\Omega_{equiv}^{2}\left(
P;\mathfrak{g}\right)  ^{\operatorname*{Hor}}$ of some principal connection
$\omega\in\Omega_{equiv}^{1}\left(  P;\mathfrak{g}\right)  $ on $\pi
:P\rightarrow M$ is written as a matrix valued two form $\left(
(\Omega^{\omega})_{j}^{i}\right)  _{i,j=1,...,2q}$, then%
\begin{equation}
\frac{1}{\pi^{q}}Cw_{P}\left(  \operatorname*{Sym}(\operatorname*{Pf})\right)
=\frac{1}{2^{q}\pi^{q}q!}\sum_{\eta\in S_{2q}}\left(  -1\right)  ^{\left\vert
\eta\right\vert }\left(  \Omega^{\omega}\right)  _{\eta(2)}^{\eta(1)}%
\wedge\cdots\wedge\left(  \Omega^{\omega}\right)  _{\eta(2q)}^{\eta
(2q-1)}\label{monopole 28}%
\end{equation}
(\cite[Chapter XII Theorem 5.1]{kobayashi 2}). \ \ \ \ $\blacksquare$
\end{example}

Finally, we are going to introduce the charge of a monopole. So let
$\pi:P_{\lambda}\rightarrow\mathbb{S}^{n}$ be a homogeneous principal bundle
over the $n$-dimensional sphere. The sphere equals the symmetric space
$SO(n+1)/SO(n)$. It is a Riemann manifold with the Riemannian structure
inherited from $\mathbb{R}^{n+1}$. We know by Theorem \ref{teorema laquer}
that the canonical connection is the unique which is invariant by $SO(n+1)$.
Let $\boldsymbol{\omega}\in\Omega_{equiv}^{1}\left(  P;\mathfrak{g}\right)  $
denote the canonical connection and $\Omega^{\boldsymbol{\omega}}\in
\Omega_{equiv}^{2}\left(  P;\mathfrak{g}\right)  ^{\operatorname*{Hor}}$ its
curvature, which is $SO(n+1)$-invariant by the left translations $L_{\lambda}$
(Eq. (\ref{monopole 2})). Then $cw\left(  \operatorname*{Sym}(f),P_{\lambda
},\boldsymbol{\omega}\right)  \in\Omega^{2k}\left(  \mathbb{S}^{n}\right)  $,
$f\in P^{k}(\mathfrak{g}^{\ast})^{G}$, is also invariant by the natural left
$SO(n+1)$-action we have on $\mathbb{S}^{n}$. Suppose that $n=2q$ is even. In
that case, $cw(\operatorname*{Sym}(f),P_{\lambda},\boldsymbol{\omega})$, $f\in
P^{q}(\mathfrak{g}^{\ast})^{G}$, is proportional to the volume element $\mu$
of $\mathbb{S}^{n}$ induced from the standard metric, i.e.,%
\[
cw(\operatorname*{Sym}(f),P_{\lambda},\boldsymbol{\omega})=\mathbf{d}\mu
\]
for some function $\mathbf{d}\in C^{\infty}(\mathbb{S}^{n})$. Since $\mu$ is
also $SO(n+1)$ invariant, so is $\mathbf{d}\in C^{\infty}(\mathbb{S}^{n})$.
But the only functions on $\mathbb{S}^{n}$ which are invariant by the special
orthogonal group are the constants, so $\mathbf{d}\in\mathbb{R}$. Regarding
$\mathbb{S}^{n}$ as an imbedded submanifold of $\mathbb{R}^{n+1}$, we may
consider $\mathbf{d}$ as a function of the radius. It is worth observing that,
once $f\in P^{q}(\mathfrak{g}^{\ast})^{G}$ is given, $\mathbf{d}$ can be
easily computed from the expression of $\Omega^{\boldsymbol{\omega}}$ given in
(\ref{monopole 22}) (see examples in Section \ref{seccion ejemplos}). The
quantity%
\begin{equation}
Q:=\int_{\mathbb{S}^{n}}cw\left(  \operatorname*{Sym}(f),P_{\lambda
},\boldsymbol{\omega}\right)  =\mathbf{d}\operatorname{vol}(\mathbb{S}^{n})
\label{monopole 16}%
\end{equation}
will be called the {\bfseries\itshape charge of the monopole}. Up to a factor,
it can be interpreted as the flow of the field strength $\Omega
^{\boldsymbol{\omega}}$ trough the surface of the sphere $\mathbb{S}^{n}$.
However, in order to match the order of $\Omega^{\boldsymbol{\omega}}$ with
the dimension of $\mathbb{S}^{n}$ we need some characteristic class $cw\left(
\operatorname*{Sym}(f),P_{\lambda},\boldsymbol{\omega}\right)  $, for example
the Chern class (of suitable order). It is worth noticing that the charge of
the monopole does not depend on the fact that we have worked with the
canonical connection $\boldsymbol{\omega}$ because, as we already said, the
Chern-Weil homomorphism does not depend on $\boldsymbol{\omega}$. In other
words, it is a topological invariant. Obviously, the charge depends strongly
on the choice of the invariant polynomial $f\in P^{q}(\mathfrak{g}^{\ast}%
)^{G}$ or, equivalently, on the characteristic class $cw\left(
\operatorname*{Sym}(f),P_{\lambda},\boldsymbol{\omega}\right)  $ and, for some
$f\in P^{q}(\mathfrak{g}^{\ast})^{G}$, \textit{it could be zero even for
non-trivial bundles}. As we will discuss in the examples, we will define the
charge integrating on $\mathbb{S}^{n}$ either the Chern class $c_{q}\left(
P\right)  $, $n=2q$, or the Euler class $\chi\left(  P\right)  $ in order to
label all the non-isomorphic principal bundles over $\mathbb{S}^{n}$ with a
different value of their charge. These two classes are, up to a constant
factor, essentially the unique characteristic classes we can use to define the
charge in most classical matrix Lie groups.

\section{Examples\label{seccion ejemplos}}

\subsection{The Dirac monopole}

The first one in introducing the concept of monopole was Dirac in the context
of electromagnetic field theory \cite{Dirac}. Dirac showed that there exist
static singular solutions of the Maxwell equations on $\mathbb{R}%
^{3}\backslash\{0\}$ with a pointwise magnetic source placed at the origin
$0\in\mathbb{R}^{3}$. In order to be gauge invariant, the magnetic charge
needed to be an integer in appropriate units. Since there is no evidence of
the existence of such magnetic charge (despite the efforts carried out to find
it since then), Dirac monopoles might have seemed useless at first sight.
Nevertheless, and more importantly, the fact that the magnetic charge can only
take discrete values implies in turn that the electric charge needs do so, as
we experimentally observe. In other words, both the magnetic and the electric
charge are \textit{quantized}. Thus the relevance of such magnetic monopoles.

A free electromagnetic field is a Yang-Mills theory with gauge group $U\left(
1\right)  $. In particular, Dirac's monopoles are described as principal
bundles over $\mathbb{S}^{2}$ (that is, principal bundles over $\mathbb{R}%
^{3}\backslash\{0\}$) with structural group $U\left(  1\right)  $. Since we
require the potential vector field, and its corresponding field strength, to
be $SO\left(  3\right)  $-invariant, such principal bundles $\pi
:P_{\lambda_{m}}\rightarrow\mathbb{S}^{2}$ are in one-to-one correspondence
with the homomorphisms $\lambda_{m}:U\left(  1\right)  \rightarrow U\left(
1\right)  $, $m\in\mathbb{Z}$, introduced in Examples \ref{examples 1}
\textbf{(i)}. The $SO\left(  3\right)  $-invariant field strength
$\Omega^{\boldsymbol{\omega}}$ is built from the canonical connection and
computed at $o\in SO\left(  3\right)  /U(1)\cong\mathbb{S}^{2}$ in Subsection
\ref{examples field strength} \textbf{(i)}. Recall that $\Omega_{\lambda_{m}%
}^{\widetilde{\mathbf{W}}}\left(  \upsilon_{1},\upsilon_{2}\right)  =-im$,
where $\{\upsilon_{1},\upsilon_{2}\}$ is the canonical basis of $\mathbb{R}%
^{2}=T_{o}\mathbb{S}^{2}$. The charge associated to these configurations is
given by integrating the first Chern class $[\frac{i}{2\pi}%
\operatorname*{trace}\Omega^{\boldsymbol{\omega}}]\in H^{2}\left(
\mathbb{S}^{2}\right)  $ over $\mathbb{S}^{2}$. As we already pointed out,
$\frac{i}{2\pi}\operatorname*{trace}\Omega^{\boldsymbol{\omega}}=\mathbf{d}%
\mu$ for some constant $\mathbf{d}\in\mathbb{R}$ and where $\mu$ is the volume
$2$-form of $\mathbb{S}^{2}$. The constant $\mathbf{d}$ can be calculated as
follows,%
\[
\mathbf{d}=\frac{i}{2\pi}\operatorname*{trace}\left(  \Omega
^{\boldsymbol{\omega}}(o)\left(  \upsilon_{1},\upsilon_{2}\right)  \right)
=\frac{i}{2\pi}\operatorname*{trace}\left(  \Omega_{\lambda_{m}}%
^{\widetilde{\mathbf{W}}}\left(  \upsilon_{1},\upsilon_{2}\right)  \right)
=\frac{m}{2\pi},
\]
and the charge%
\[
Q_{m}:=\frac{i}{2\pi}\int_{\mathbb{S}^{2}}\operatorname*{trace}\Omega
^{\boldsymbol{\omega}}=\frac{m}{2\pi}\int_{\mathbb{S}^{2}}\mu=2m.
\]

\subsection{The Yang monopole}

Yang monopoles are non-trivial solutions of Yang-Mills theories on
$\mathbb{R}^{4}\backslash\{0\}$ (equivalently on $\mathbb{S}^{4}$) with gauge
group $G=SU(2)$. Unlike the general approach throughout this paper, where we
considered the sphere as a quotient of orthogonal groups, we are now going to
regard $\mathbb{S}^{4}$ as a quotient of spin groups, i.e., $\mathbb{S}%
^{4}=Spin(5)/Spin(4)$. That is, we are going to describe principal bundles
$\pi:P\rightarrow\mathbb{S}^{4}$ with gauge group $SU(2)$ and a left $Spin(5)$
action projecting onto the $Spin(5)$ action on $Spin(5)/Spin(4)$, which
obviously coincides with the standard $SO(5)$ action on $\mathbb{S}^{4}$. In
his paper \cite{yang}, Yang describes monopole configurations on
$\mathbb{S}^{4}$ which are invariant by the standard action of $SO(5)$. In our
opinion, his description is imprecise and he should have talked about
$Spin(5)$ invariant monopoles. Indeed, as the next proposition shows, there
does not exist any non-trivial principal bundle over $\mathbb{S}^{4}$ with
gauge group $SU(2)$ supporting a $SO(5)$ left action. However, since Yang
worked with potentials and field strengths on $\mathbb{S}^{4}$ using local
sections, he did not realize that his $SO(5)$ action actually came from a
$Spin(5)$ action on the whole bundle.

\begin{proposition}
\label{prop 2}The unique homomorphism of Lie groups $\lambda:SO(4)\rightarrow
SU(2)$ from $SO(4)$ to $SU(2)$ is the trivial homomorphism, $\lambda(h)=e\in
SU(2)$ for any $h\in SO(4)$.
\end{proposition}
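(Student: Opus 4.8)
The plan is to reduce the statement, which concerns the connected but \emph{not} simply connected group $SO(4)$, to a purely infinitesimal classification on the universal cover and then to isolate a single global descent condition. First I would recall from Examples \ref{examples 1}(ii) that $S^{3}\times S^{3}=SU(2)\times SU(2)$ is the universal covering group of $SO(4)$, with covering projection $p:SU(2)\times SU(2)\to SO(4)$ whose kernel is the two-element central subgroup $\{(\mathbf{1},\mathbf{1}),(-\mathbf{1},-\mathbf{1})\}$. Any homomorphism $\lambda:SO(4)\to SU(2)$ is continuous, hence smooth, and precomposing with $p$ yields $\widetilde{\lambda}:=\lambda\circ p:SU(2)\times SU(2)\to SU(2)$. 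Since the source is simply connected, $\widetilde{\lambda}$ is in bijection with its differential $\phi:=T_{e}\widetilde{\lambda}:\mathfrak{su}(2)\oplus\mathfrak{su}(2)\to\mathfrak{su}(2)$, where I use $\mathfrak{so}(4)\cong\mathfrak{su}(2)\oplus\mathfrak{su}(2)$, the integrated form of the decomposition $\mathfrak{so}(4)=\mathfrak{so}(3)^{(1)}\times\mathfrak{so}(3)^{(2)}$. The key observation is that $\lambda$ exists if and only if $\widetilde{\lambda}$ kills $\ker p$, i.e. $\widetilde{\lambda}(-\mathbf{1},-\mathbf{1})=\mathbf{1}$; in that case $\widetilde{\lambda}$ factors through $p$ and recovers $\lambda$ uniquely, since $p$ is surjective. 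So the whole problem becomes: which Lie algebra homomorphisms $\phi$ integrate to a $\widetilde{\lambda}$ killing $(-\mathbf{1},-\mathbf{1})$?

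Second, I would classify $\phi$. Its kernel is an ideal of the semisimple algebra $\mathfrak{su}(2)\oplus\mathfrak{su}(2)$, and since $\mathfrak{su}(2)$ is simple the only ideals are $0$, the two factors, and the whole algebra. The case $\ker\phi=0$ is impossible by dimension ($6>3$), so $\phi$ is either zero or, killing exactly one factor, an isomorphism on the other. In the latter case $\widetilde{\lambda}$ vanishes on the corresponding $SU(2)$ subgroup (a connected group with trivial differential), so $\widetilde{\lambda}(x,y)=\psi(x)$ or $\widetilde{\lambda}(x,y)=\psi(y)$ for a group automorphism $\psi$ of $SU(2)$ (a Lie algebra automorphism of $\mathfrak{su}(2)$ integrates to a group automorphism because $SU(2)$ is simply connected).

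The decisive global step is then to evaluate on $(-\mathbf{1},-\mathbf{1})$. If $\phi=0$ then $\widetilde{\lambda}\equiv\mathbf{1}$, the descent condition holds trivially, and we obtain exactly the trivial homomorphism $\lambda(h)=e$. If $\phi\neq 0$, then $\widetilde{\lambda}(-\mathbf{1},-\mathbf{1})=\psi(-\mathbf{1})$. Now $-\mathbf{1}$ is the unique nontrivial central element of $SU(2)$, so every automorphism fixes it: $\psi(-\mathbf{1})=-\mathbf{1}\neq\mathbf{1}$. Hence the descent condition fails, no nontrivial $\lambda$ survives, and the proposition follows.

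I expect the main obstacle to be conceptual rather than computational: one must resist concluding at the infinitesimal level, where nontrivial homomorphisms \emph{do} exist (the two projections). The subtlety is precisely that these infinitesimal maps fail to descend from $SU(2)\times SU(2)$ to $SO(4)$ because $-\mathbf{1}\in SU(2)$ stays nontrivial in the target. This is exactly why the analogous maps to $SO(3)$ survive, as recorded in Examples \ref{examples 1}(ii): there $-\mathbf{1}$ becomes the identity of $SO(3)=SU(2)/\{\pm\mathbf{1}\}$, whereas here the target is $SU(2)$ itself. Getting this contrast right is the heart of the argument.
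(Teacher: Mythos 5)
Your proof is correct and takes essentially the same route as the paper's: both lift $\lambda$ through the covering $Spin(4)=S^{3}\times S^{3}\rightarrow SO(4)$ and show that the only nontrivial homomorphisms from the cover to $SU(2)=S^{3}$ (the two projections, up to conjugation/automorphism) fail to kill the kernel $\{(\mathbf{1},\mathbf{1}),(-\mathbf{1},-\mathbf{1})\}$ of the covering map. The only differences are ones of completeness and phrasing: the paper merely recalls the classification of homomorphisms $Spin(4)\rightarrow SU(2)$ whereas you derive it from the ideal structure of $\mathfrak{su}(2)\oplus\mathfrak{su}(2)$ together with simple connectedness, and you verify the failure of descent by evaluating at the central element (using $\psi(-\mathbf{1})=-\mathbf{1}$) while the paper reaches the same contradiction by noting $\tau((x,y))=\tau((-x,-y))$ but $\sigma_{1}((x,y))\neq\sigma_{1}((-x,-y))$.
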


\begin{proof}
As we saw in Examples \ref{examples 1} \textbf{(ii)}, $Spin(4)=S^{3}\times
S^{3}$ and $SO(4)=\left(  S^{3}\times S^{3}\right)  /\{\left(  1,1\right)
,\left(  -1,-1\right)  \}$, where $S^{3}$ is the quaternionic sphere. Let
$\tau:Spin(4)\rightarrow SO(4)$ be the covering homomorphism. On the other
hand, we already argued that $SU(2)=S^{3}$. Recall that, modulo conjugation,
the unique homomorphisms between $Spin(4)$ and $SU(2)$ are the trivial one and
the projections $\sigma_{l}:S^{3}\times S^{3}\rightarrow S^{3}$, $l=1,2$, such
that $\sigma_{1}\left(  x,y\right)  =x$ and $\sigma_{2}\left(  x,y\right)
=y$, $\left(  x,y\right)  \in S^{3}\times S^{3}$.

Suppose that there exists a homomorphism $\lambda:SO(4)\rightarrow SU(2)$
different from the trivial one. Then, $\lambda\circ\tau:Spin(4)\rightarrow
SU(2)$ is conjugated to $\sigma_{1}$ or $\sigma_{2}$. Assume that it is
conjugated to $\sigma_{1}$. Therefore, there exists some $g\in SU(2)$ such
that%
\[
\lambda_{1}=g\left(  \lambda\circ\tau\right)  g^{-1}=g\lambda g^{-1}\circ
\tau.
\]
Replacing $\lambda$ with $g\lambda g^{-1}$ if necessary, we may suppose that
$\sigma_{1}=\lambda\circ\tau$, where $\lambda$ is different from the trivial
homomorphism. But this is clearly a contradiction, since $\tau(\left(
x,y\right)  )=\tau(\left(  -x,-y\right)  )\in SO(4)$ and $\sigma_{1}(\left(
x,y\right)  )\neq\sigma_{1}(\left(  -x,-y\right)  )$.\smallskip
\end{proof}

In conclusion, we have two non-trivial homogeneous principal bundles
$\pi_{\sigma_{l}}:P_{\sigma_{l}}\rightarrow\mathbb{S}^{4}$ associated to the
homomorphisms $\sigma_{l}:Spin(4)=S^{3}\times S^{3}\rightarrow SU(2)=S^{3}$,
$l=1,2$. It is worth noting that $\pi_{\sigma_{1}}:P_{\sigma_{1}}%
\rightarrow\mathbb{S}^{4}$ was already identified in \cite[Subsection
4.3]{asorey} as the principal bundle behind the BPST instanton.

We want to compute the charge $Q$ (Eq. (\ref{monopole 16})) of $\pi
_{\sigma_{l}}:P_{\sigma_{l}}\rightarrow\mathbb{S}^{4}$, $l=1,2$, by means of
the second Chern class. According to \cite{naber}, two principal bundles over
$\mathbb{S}^{4}$ and gauge group $SU(2)$ are isomorphic if and only if they
have the same Chern number. Therefore, the charge provided by the second Chern
class seems a good topological invariant to differentiate the two non-trivial
monopole configurations.

The field strengths associated to the $Spin(5)$-invariant canonical
connections of $\pi_{\sigma_{l}}:P_{\sigma_{l}}\rightarrow\mathbb{S}^{4}$,
$l=1,2$, are given in Subsection \ref{examples field strength} \textbf{(ii)}.
Indeed, $\Omega_{\lambda_{l}}^{\widetilde{\mathbf{W}}}$, $l=1,2$, in
Subsection \ref{examples field strength} \textbf{(ii) }are the curvatures
associated to the homomorphisms $\lambda_{l}:SO(4)\rightarrow SO(3)$ which, in
turn, are induced from $\sigma_{l}:Spin(4)\rightarrow SU(2)$. Since in order
to compute de curvatures of the canonical connections we only need the tangent
maps $T_{e}\lambda_{l}:\mathfrak{so}(4)\rightarrow\mathfrak{so}(3)$ and the
Lie algebras $\mathfrak{spin}(4)$ and $\mathfrak{su}(2)$ coincide with
$\mathfrak{so}(4)$ and $\mathfrak{so}(3)$ respectively, $\Omega_{\lambda_{l}%
}^{\widetilde{\mathbf{W}}}:\mathbb{R}^{4}\times\mathbb{R}^{4}\rightarrow
\mathfrak{su}(2)$ are the $Spin(5)$-invariant curvatures evaluated at
$p=[e,e]^{\sim}\in P_{\sigma_{l}}$. However, observe that these field
strengths take values in two subalgebras of $\mathfrak{so}(4)$, those
generated by the matrices $\{A^{l},B^{l},C^{l}\}$, $l=1,2$, which are
isomorphic to $\mathfrak{su}(2)$. We need to implement these isomorphisms
explicitly since, in order to compute the second Chern class using
(\ref{monopole 30}), $\mathfrak{su}(2)$ must be regarded as Lie algebra of
complex matrices contained in $\mathfrak{u}(m)$ for some $m\in\mathbb{N}$. The
easiest solution is to establish the correspondence%
\[
A^{l}\mapsto\frac{i}{2}\sigma^{1}=%
\begin{pmatrix}
0 & i/2\\
i/2 & 0
\end{pmatrix}
\text{, \ }B^{l}\mapsto\frac{i}{2}\sigma^{2}=%
\begin{pmatrix}
0 & 1/2\\
-1/2 & 0
\end{pmatrix}
\text{, \ }C^{l}\mapsto-\frac{i}{2}\sigma^{3}=%
\begin{pmatrix}
-i/2 & 0\\
0 & i/2
\end{pmatrix}
,
\]
$l=1,2$, where $\{\sigma^{1},\sigma^{2},\sigma^{3}\}$ are the Pauli matrices.

The second Chern classes $cw\left(  S(c_{2}^{2}),P_{\sigma_{l}},\omega\right)
$, $l=1,2$, are proportional to the canonical volume element $\mu$ of
$\mathbb{S}^{4}$, $cw\left(  S(c_{2}^{2}),P_{\sigma_{l}},\omega\right)
=\mathbf{d}\mu$. The constant of proportionality $\mathbf{d}$ can be obtained
as%
\[
\mathbf{d}=cw\left(  S(c_{2}^{2}),P_{\sigma_{l}},\omega\right)  \left(
\upsilon_{1},\upsilon_{2},\upsilon_{3},\upsilon_{4}\right)  ,
\]
where $\{\upsilon_{1},\upsilon_{2},\upsilon_{3},\upsilon_{4}\}$ is the
canonical orthonormal basis of $\mathbb{R}^{4}\cong T_{o}\mathbb{S}^{4}$.
Identifying $\mathbb{R}^{4}=\mathfrak{m}\subset\mathfrak{so}(4)$ with the
horizontal space $\operatorname*{Hor}_{p}$ at $p=[e,e]^{\sim}\in P_{\sigma
_{l}}$, $\mathbf{d}$ can be computed inserting the explicit expressions for
$\Omega_{\lambda_{l}}^{\widetilde{\mathbf{W}}}$ in (\ref{monopole 30}). We
prefer, however, giving directly the charge $Q=\mathbf{d}\operatorname{vol}%
\left(  \mathbb{S}^{4}\right)  $, which equals $-\frac{1}{8}$ for the
homogeneous bundle $\pi_{\sigma_{1}}:P_{\sigma_{1}}\rightarrow\mathbb{S}^{4}$
and $\frac{1}{8}$ for $\pi_{\sigma_{2}}:P_{\sigma_{2}}\rightarrow
\mathbb{S}^{4}$. These results are in complete agreement with \cite{yang} and
we therefore omit explicit computations. If Yang gave the Chern number $-1$
and $1$ respectively to these bundles was because, in his definition of the
second Chern class, he chose a coefficient $8$ times greater than ours. Since
two principal $SU(2)$-bundles over $\mathbb{S}^{4}$ are isomorphic if and only
if they have the same Chern number (\cite{naber}), the two principal bundles
with non-vanishing charge we obtained are isomorphic to Yang's.

\subsection{$SO(2n)$-monopoles.\label{subsection Townsend monopole}}

As far as we know, $SO(2n)$-monopoles seem to be appeared for the first time
in \cite{nepomechie}, where the author tries to generalize the Dirac monopole
to Kalb-Ramond fields (\cite{kalb-ramond}), although they may have been
introduced in previous works under a different appearance. Since then,
Tchrakian has studied such monopoles in depth. We recommend two of his latest
works \cite{tchrakian,tchrakian 2000}, and references therein, for an approach
to $SO(2n)$-monopoles complementary to ours.

In his paper \cite{townsend}, Gibbons and Townsend study monopole
configurations over the sphere $\mathbb{S}^{2q}$, with gauge group $SO(2q)$,
$q\geq2$. However, they only deal with the principal bundle
$SO(2q+1)\rightarrow SO(2q+1)/SO(2q)$, which corresponds to the homogeneous
principal bundle $P_{\lambda_{\operatorname*{Id}}}$ given by the identity
homomorphism $\lambda_{\operatorname*{Id}}:SO(2q)\rightarrow SO(2q)$
(\cite[Section 4]{townsend}), and exhibit the corresponding $SO(2q+1)$%
-invariant (canonical) connection. In addition, they define the charge of the
monopole as the integral over $\mathbb{S}^{2q}$ of the $2q$-form%
\begin{equation}
\operatorname*{trace}\left(  \Omega^{\boldsymbol{\omega}}\wedge\overset
{q)}{...}\wedge\Omega^{\boldsymbol{\omega}}\right)  ,\label{monopole 26}%
\end{equation}
where $\Omega^{\omega}$ is the field strength associated to the canonical
principal connection. Up to a constant factor, this characteristic class
coincides with the Chern class for the case $k=2$. In our opinion, some
authors choose (\ref{monopole 26}) to define the charge (see for instance
\cite{meng}) because it is a straightforward generalization of the integrand
$\operatorname*{trace}\left(  \Omega^{\omega}\wedge\Omega^{\omega}\right)  $
used by Yang to compute the charge of his monopole. In \cite{yang}, Yang
points out that he deliberately chooses the second Chern class. Nevertheless,
it is not clear to which $\operatorname*{Ad}\nolimits_{SO(2q)}$-invariant
polynomial $f\in P(\mathfrak{so(}2q)^{\ast})^{SO(2q)}$ corresponds the
$2q$-form (\ref{monopole 26}). Moreover, it is claimed in \cite{townsend}, but
no proof is provided, that the field strength $\Omega^{\omega}$ can be written
in a suitable basis of $\mathfrak{so}(2q)$ such that%
\begin{equation}
\int_{\mathbb{S}^{2q}}\operatorname*{trace}\left(  \Omega^{\boldsymbol{\omega
}}\wedge\overset{q)}{...}\wedge\Omega^{\boldsymbol{\omega}}\right)
\neq0.\label{monopole 27}%
\end{equation}
In our opinion, this result is not correct. The argument against
(\ref{monopole 27}) works as follows: since $\operatorname*{trace}%
(\Omega^{\boldsymbol{\omega}}\wedge\overset{q)}{...}\wedge\Omega
^{\boldsymbol{\omega}})$ is proportional to the natural volume element $\mu$
of $\mathbb{S}^{2q}$, we only need to compute the constant of proportionality
$\mathbf{d}$ in order to value (\ref{monopole 27}). Furthermore, this
computation can be carried out at any point $m\in\mathbb{S}^{2q}$ of the
sphere. If $\{\upsilon_{1},...,\upsilon_{2q}\}$ is an orthonormal basis of
$T_{m}\mathbb{S}^{2q}\cong\mathbb{R}^{2q}$, then
\[
\mathbf{d}=\operatorname*{trace}\left(  \Omega^{\boldsymbol{\omega}}%
\wedge\overset{q)}{...}\wedge\Omega^{\boldsymbol{\omega}}\right)  \left(
\upsilon_{1},...,\upsilon_{2q}\right)  .
\]
Let $o\in\mathbb{S}^{2q}$. Since the charge is a topological invariant, we can
compute it using any field strength on $SO(2q+1)\rightarrow SO(2q+1)/SO(2q)$.
According to Subsection \ref{examples field strength} \textbf{(iv)}, the field
strength at $o$ is given by $\Omega_{\lambda_{\operatorname*{Id}}%
}^{\mathbf{\widetilde{W}}}\left(  \upsilon_{i},\upsilon_{j}\right)
=-\xi_{j,i}=\xi_{i,j}\in\mathfrak{so}(2q)$. The matrix $\xi_{j,i}$ has entries%
\begin{equation}
\left(  \xi_{j,i}\right)  _{\beta}^{\alpha}=\left(  -1\right)  ^{U(i-j)}%
\left(  -1\right)  ^{U(\beta-\alpha)}\delta_{j}^{\alpha}\delta_{i\beta
},\label{monopole 31}%
\end{equation}
where $U$ is the Heaviside step function, $U(x)=1$ if $x>0$ and $U(x)=0$ if
$x\leq0$. Therefore
\begin{align*}
&  \operatorname*{trace}\left(  \Omega^{\boldsymbol{\omega}}\wedge\overset
{q)}{...}\wedge\Omega^{\boldsymbol{\omega}}\right)  \left(  \upsilon
_{1},...,\upsilon_{2q}\right)  \\
&  =\frac{1}{2^{q}}\operatorname*{trace}\left(  \sum_{\sigma\in S_{2q}}\left(
-1\right)  ^{\left\vert \sigma\right\vert }\Omega_{\lambda_{\operatorname*{Id}%
}}^{\mathbf{\widetilde{W}}}\left(  \upsilon_{\sigma(1)},\upsilon_{\sigma
(2)}\right)  \cdots\Omega_{\lambda_{\operatorname*{Id}}}^{\mathbf{\widetilde
{W}}}\left(  \upsilon_{\sigma(2q-1)},\upsilon_{\sigma(2q)}\right)  \right)  \\
&  =\frac{\left(  -1\right)  ^{q}}{2^{q}}\operatorname*{trace}\left(
\sum_{\sigma\in S_{2q}}\left(  -1\right)  ^{\left\vert \sigma\right\vert }%
\xi_{\sigma(2),\sigma(1)}\cdots\xi_{\sigma(2q),\sigma(2q-1)}\right)
\end{align*}
but $\xi_{\sigma(2),\sigma(1)}\cdots\xi_{\sigma(2q),\sigma(2q-1)}=0$ for any
$\sigma\in S_{2q}$ because the matrix product $\xi_{j,i}\xi_{r,s}$ is zero if
the indices $\left(  j,i\right)  $ are different from $\left(  r,s\right)  $.
Thus, $\operatorname*{trace}(\Omega^{\boldsymbol{\omega}}\wedge\overset
{q)}{...}\wedge\Omega^{\boldsymbol{\omega}})=0$.

The Chern class is not useful to define the monopole charge either, since it
also vanishes. The details are given in Subsection
\ref{subsection appendix Townsend} in the Appendix for the sake of a clearer
exposition. Things are different as far as the Euler class is concerned.
Indeed, we also prove in Subsection \ref{subsection appendix Townsend} that
\[
Q=\frac{1}{\pi^{q}}\int_{\mathbb{S}^{2q}}cw\left(  \operatorname*{Sym}%
(\operatorname*{Pf}),P_{\lambda_{\operatorname*{Id}}},\boldsymbol{\omega
}\right)  =2,
\]
which is obviously the Euler-Poincar\'{e} characteristic of the sphere
$\mathbb{S}^{2q}$. Since $SO(2q+1)\rightarrow\mathbb{S}^{2q}$ can be regarded
as the orthogonal frame bundle, this equality is simply a restatement of one
of the possible versions of the Gauss-Bonet Theorem (see \cite[page
112]{Dupont}).

If $q\geq3$, then, up to isomorphism, there only exists another principal
bundle structure over $\mathbb{S}^{2q}$, $\pi:P_{\delta}\rightarrow
\mathbb{S}^{2q}$, that given by the homomorphism $\delta:SO(2q)\rightarrow
SO(2q)$ introduced in Subsection \ref{examples field strength} \textbf{(iv)}.
In order to obtain the Euler class $\chi\left(  P_{\delta}\right)  $, one can
repeat the same computations carried out in Subsection
\ref{subsection appendix Townsend} in the Appendix just replacing
$\Omega_{\lambda_{\operatorname*{Id}}}^{\mathbf{\widetilde{W}}}$ with
$\Omega_{\delta}^{\mathbf{\widetilde{W}}}$. If we do so, it is not difficult
to realize that a $-1$ appears in each term of (\ref{monopole 29}) and,
therefore, $\int_{\mathbb{S}^{2q}}\chi\left(  P_{\delta}\right)  =-2$. In
other words, $P_{\lambda_{\operatorname*{Id}}}$ and $P_{\delta}$ have the same
charge with opposite sign. The details are left to the reader.

\subsection{$SU(2^{n-1})$-monopoles}

In this last example, we are going to review monopole configurations over
$\mathbb{S}^{2n}$, $n\in\mathbb{N}$, with gauge group $SU(2^{n-1})$. They have
been recently introduced in \cite{meng} by G. Meng as a generalization of Yang
and Dirac monopoles to higher dimensions. However, our approach to
$SU(2^{n-1})$-monopoles will differ from Meng's as we avoid referring to
spinor bundles. Instead, we will use the language of homogeneous principal
bundles developed so far. Hopefully, this may shed some light on the arguments
used in \cite{meng} to prove the existence of $SU(2^{n-1})$-monopoles, which
hence turn out to be part of a more general picture. Furthermore, this last
example suggests that the theory of representations of Lie groups is very
useful to give other monopole configurations.

Let $Spin(2n+1)\rightarrow Spin(2n+1)/Spin(2n)\cong\mathbb{S}^{2n}$ be the
canonical $Spin(2n)$-principal bundle over the sphere $\mathbb{S}^{2n}$,
$n\in\mathbb{N}$, and let $\boldsymbol{\omega}\in\Omega^{1}\left(
Spin(2n+1);\mathfrak{so}(2n)\right)  $ its canonical connection as in Example
\ref{example maurer-cartan}. It is a well known result that $Spin(2n)$ has two
different irreducible complex representations of dimension $2^{n-1}$. More
explicitly, there exist complex Hermitian vector spaces $V_{i}$, $i=1,2$, of
real dimension $2^{n-1}$ and a couple of non-equivalent homomorphisms%
\[
\lambda_{i}:Spin(2n)\longrightarrow Gl\left(  V_{i}\right)  ,~\text{~}i=1,2,
\]
such that $\lambda_{i}\left(  g\right)  $ leaves the Hermitian structure
invariant for any $g\in Spin(2n)$. In particular, this means that%
\[
\lambda_{i}\left(  Spin(2n)\right)  \subseteq SU(2^{n-1})
\]
and $\lambda_{i}$ can be considered as homomorphisms from $Spin(2n)$ to
$SU(2^{n-1})$. Therefore, there exist two distinct principal bundles
$\pi_{\lambda_{i}}:P_{\lambda_{i}}\rightarrow\mathbb{S}^{2n}$ with gauge group
$SU(2^{n-1})$ supporting a (left) $Spin(2n+1)$ action. Their canonical
connections $\boldsymbol{\omega}_{\lambda_{i}}\in\Omega^{1}(P_{\lambda_{i}%
};\mathfrak{so}(2n))$ are then induced from $\boldsymbol{\omega}$ according to
Proposition \ref{prop 1}. Moreover, they are $Spin(2n+1)$-invariant, and give
rise to the unique non-trivial $SU(2^{n-1})$-monopole configurations. Although
Meng does not prove in \cite{meng} that $\boldsymbol{\omega}_{\lambda_{i}}$
are Yang-Mills connections, they are so by Proposition
\ref{prop canonical is YM} indeed. For $n=1$ and $n=2$, these $SU(2^{n-1}%
)$-monopoles reduce to Dirac's and Yang's, respectively.

In \cite{meng}, the charge of these monopoles is also computed. It is proved
that%
\[
\frac{1}{n!}\int_{\mathbb{S}^{2n}}\operatorname*{trace}\left(  -\frac
{\Omega^{\boldsymbol{\omega}_{\lambda_{i}}}}{2\pi}\wedge\overset{n)}%
{...}\wedge-\frac{\Omega^{\boldsymbol{\omega}_{\lambda_{i}}}}{2\pi}\right)
=\left(  -1\right)  ^{i}.
\]
Which monopole has charge positive or negative depends on how we labelled the
homomorphisms $\lambda_{i}$, $i=1,2$.

\appendix

\section{Appendix}

\subsection{Proof of Proposition \ref{proposition YM}}

Before proving Proposition \ref{proposition YM}, we need an auxiliary lemma:

\begin{lemma}
\label{lema operador hodge}Let $\alpha\in\Omega^{k}\left(  \mathbb{S}%
^{2n}\right)  $ and $f:\mathbb{R}^{2n+1}\backslash\{0\}\rightarrow
\mathbb{S}^{2n}$ as in Equation (\ref{eq proyeccion homotopica}). If $r\in
C^{\infty}\left(  \mathbb{R}^{2n+1}\backslash\{0\}\right)  $ is the radius
function, $r\left(  x\right)  =\left\Vert x\right\Vert $, then%
\begin{equation}
\ast f^{\ast}(\alpha)=r^{2(n-k)}f^{\ast}\left(  \ast\alpha\right)  \wedge dr.
\label{monopole 32}%
\end{equation}

\end{lemma}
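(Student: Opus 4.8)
The plan is to compute both sides of (\ref{monopole 32}) explicitly using adapted coordinates on $\mathbb{R}^{2n+1}\backslash\{0\}$ that separate the radial direction from the spherical directions. First I would introduce at a point $x\in\mathbb{R}^{2n+1}\backslash\{0\}$ an orthonormal coframe $\{dr,\theta^1,\dots,\theta^{2n}\}$ for the Euclidean metric, where $dr$ is the differential of the radius function and the $\theta^i$ span the cotangent directions tangent to the sphere of radius $r=\|x\|$. The key geometric observation is that the map $f(x)=x/\|x\|$ is the radial projection, so $f^\ast$ pulls back forms on $\mathbb{S}^{2n}$ to forms built purely from the spherical coframe. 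If $\{e^1,\dots,e^{2n}\}$ denotes an orthonormal coframe on $\mathbb{S}^{2n}$ at the point $f(x)$, then $f^\ast(e^i)$ is proportional to $\theta^i$, and the scaling factor is exactly $1/r$ because the sphere of radius $r$ is the unit sphere scaled by $r$ (so $\theta^i = r\, f^\ast(e^i)$, equivalently $f^\ast(e^i)=\tfrac{1}{r}\theta^i$).

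The core of the computation is then bookkeeping of these $1/r$ factors together with the action of the Hodge star. I would write a general $k$-form on $\mathbb{S}^{2n}$ in the orthonormal coframe as $\alpha = \sum_{I} \alpha_I\, e^{i_1}\wedge\cdots\wedge e^{i_k}$ and compute $f^\ast(\alpha) = \sum_I (\alpha_I\circ f)\, r^{-k}\,\theta^{i_1}\wedge\cdots\wedge\theta^{i_k}$. Applying the Euclidean Hodge star in dimension $2n+1$ to a form built from the $\theta^i$ only (no $dr$) produces the complementary spherical form wedged with $dr$: concretely $\ast_{\mathbb{R}^{2n+1}}(\theta^{i_1}\wedge\cdots\wedge\theta^{i_k}) = \pm\,(\theta^{j_1}\wedge\cdots\wedge\theta^{j_{2n-k}})\wedge dr$, where the complementary spherical indices and sign are exactly those produced by the $2n$-dimensional spherical Hodge star $\ast_{\mathbb{S}^{2n}}$. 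Matching this against $f^\ast(\ast\alpha)\wedge dr$, where $\ast\alpha$ is a $(2n-k)$-form on the sphere carrying its own factor $r^{-(2n-k)}$ under $f^\ast$, I collect the powers of $r$: the left side $\ast f^\ast(\alpha)$ carries $r^{-k}$ from $f^\ast(\alpha)$, while the right side has $r^{-(2n-k)}$ from $f^\ast(\ast\alpha)$, and the discrepancy $r^{-k}$ versus $r^{-(2n-k)}$ is precisely corrected by the prefactor $r^{2(n-k)}$ appearing in (\ref{monopole 32}).

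The step I expect to be the main obstacle is verifying that the signs and the index combinatorics produced by the $(2n+1)$-dimensional Euclidean Hodge star restricted to spherical-only forms agree exactly with the $2n$-dimensional spherical Hodge star, including the placement of $dr$ as the last factor. This requires care because $\ast_{\mathbb{R}^{2n+1}}$ and $\ast_{\mathbb{S}^{2n}}$ are defined relative to different volume forms, and the Euclidean volume form factors as $\mu_{\mathbb{R}^{2n+1}} = dr\wedge r^{2n}\mu_{\mathbb{S}^{2n}}$ once we account for the $r^{2n}$ from rescaling the spherical volume; tracking this factor is what ultimately produces the correct exponent $2(n-k)$. The cleanest way to pin down the sign is to test (\ref{monopole 32}) on the characterizing relation $\theta\wedge\ast\varphi=\langle\theta,\varphi\rangle\mu$ for both Hodge operators, or equivalently to verify it on a basis $k$-form $e^{i_1}\wedge\cdots\wedge e^{i_k}$ and check that both sides agree as $(2n+1-k)$-forms; the other claims then follow by $C^\infty$-linearity. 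Once the sign conventions are reconciled, the identity reduces to the elementary power-counting described above.
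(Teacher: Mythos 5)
Your proposal follows essentially the same route as the paper's proof: both compute pointwise in an adapted orthonormal frame that splits off the radial direction, use that the tangent map of $f$ scales spherical directions by $1/r$ (your $f^{\ast}(e^{i})=\frac{1}{r}\theta^{i}$ is exactly the paper's $T_{y}f=\frac{1}{r}\operatorname*{Id}\circ\left.\operatorname*{proj}\right\vert_{T_{y}S_{r(y)}}$), and obtain the exponent $2(n-k)$ from the power count $r^{-k}$ versus $r^{-(2n-k)}$ after matching the Euclidean Hodge star on spherical-only forms against the spherical star wedged with $dr$. The sign verification you flag as the main obstacle is resolved in the paper precisely in the manner you suggest, by an explicit check on basis forms via the Levi-Civita symbol, where the sign from moving $dx^{2n+1}$ to the last wedge slot cancels against the sign from moving the index $2n+1$ in the antisymmetric symbol.
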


\begin{proof}
Let $y\in\mathbb{R}^{2n+1}\backslash\{0\}$ be an arbitrary point and
$z=y/\left\Vert y\right\Vert \in\mathbb{S}^{2n}$. We can take global Euclidean
coordinates $\left(  x^{1},...,x^{2n+1}\right)  $ on $\mathbb{R}%
^{2n+1}\backslash\{0\}$ such that $y=(0,\overset{2n)}{...},0,r(y))$. Then
$z=(0,\overset{2n)}{...},0,1)\in\mathbb{S}^{2n}\subset\mathbb{R}%
^{2n+1}\backslash\{0\}$. That is, $z$ can be regarded as the north pole of the
sphere $\mathbb{S}^{2n}$. The tangent space $T_{y}(\mathbb{R}^{2n+1}%
\backslash\{0\})$ can be decomposed as the direct sum%
\[
T_{y}(\mathbb{R}^{2n+1}\backslash\{0\})=T_{y}S_{r(y)}\oplus W_{y}%
\]
where $T_{y}S_{r(y)}$ is the tangent space to the sphere $S_{r(y)}$ of radius
$r(y)$ at $y$ and $W_{y}$ is its orthogonal complement, in the radial
direction. The first $2n$ coordinates $\left(  x^{1},...,x^{2n+1}\right)  $ we
have on $\mathbb{R}^{2n+1}\backslash\{0\}$ can be used around $z$ on
$\mathbb{S}^{2n}$ by means of the local diffeomorphism%
\[
\left(  x^{1},...,x^{2n}\right)  \longmapsto\left(  x^{1},...,x^{2n}%
,\sqrt{1-\sum\nolimits_{i=1}^{2n}\left(  x^{i}\right)  ^{2}}\right)  .
\]
Observe that the vector fields $\left\{  \frac{\partial}{\partial x^{1}%
},...,\frac{\partial}{\partial x^{2n}}\right\}  $ form an orthonormal basis at
$z\in\mathbb{S}^{2n}$ and that, as a vector space, $T_{y}S_{r(y)}$ is
isomorphic to $T_{z}\mathbb{S}^{2n}$. In this context, it is easy to see that%
\[
T_{y}f=\frac{1}{r(y)}\operatorname*{Id}\circ\left.  \operatorname*{proj}%
\right\vert _{T_{y}S_{r(y)}},
\]
where $\left.  \operatorname*{proj}\right\vert _{T_{y}S_{r(y)}}:T_{y}%
(\mathbb{R}^{2n+1}\backslash\{0\})\rightarrow T_{y}S_{r(y)}$ denotes the
projection onto $T_{y}S_{r(y)}$ and the isomorphism $T_{y}S_{r(y)}\cong
T_{z}\mathbb{S}^{2n}$ has been used. Consequently, if $\alpha$ is locally
written as $\sum_{i_{1}<...<i_{k}}\alpha_{i_{1}...i_{k}}dx^{i_{1}}%
\wedge...\wedge dx^{i_{k}}$ around $z$, it is immediate to see that%
\[
f^{\ast}\left(  \alpha\right)  \left(  y\right)  =\frac{1}{r^{k}(y)}%
\sum_{i_{1}<...<i_{k}}\alpha_{i_{1}...i_{k}}\left(  z\right)  \left(
dx^{i_{1}}\wedge...\wedge dx^{i_{k}}\right)  (y).
\]
On the other hand,%
\begin{align*}
(\ast\alpha)(z)  &  =\frac{1}{k!}\sum_{j_{1}<...<j_{2n-k}}\sum_{i_{1}%
<...<i_{k}}\alpha^{i_{1}...i_{k}}\left(  z\right)  \varepsilon_{i_{1}%
...i_{k}j_{1}...j_{2n-k}}^{1...2n}\left(  dx^{j_{1}}\wedge...\wedge
dx^{j_{2n-k}}\right)  (z),\\
f^{\ast}\left(  \ast\alpha\right)  (y)  &  =\frac{1}{k!r^{2n-k}(y)}\sum
_{j_{1}<...<j_{2n-k}}\sum_{i_{1}<...<i_{k}}\alpha^{i_{1}...i_{k}}\left(
z\right)  \varepsilon_{i_{1}...i_{k}j_{1}...j_{2n-k}}^{1...2n}\left(
dx^{j_{1}}\wedge...\wedge dx^{j_{2n-k}}\right)  (y),
\end{align*}
and%
\begin{equation}
\ast\left(  f^{\ast}\alpha\right)  (y)=\frac{1}{k!r^{k}(y)}\sum_{j_{1}%
<...<j_{2n+1-k}}\sum_{i_{1}<...<i_{k}}\alpha^{i_{1}...i_{k}}\left(  z\right)
\varepsilon_{i_{1}...i_{k}j_{1}...j_{2n+1-k}}^{1...2n+1}\left(  dx^{j_{1}%
}\wedge...\wedge dx^{j_{2n+1-k}}\right)  (y).\smallskip\label{monopole 33}%
\end{equation}
In these equations, $\varepsilon_{i_{1}...i_{k}j_{1}...j_{2n-k}}^{1...2n}$
denotes the totally antisymmetric symbol and $\alpha^{i_{1}...i_{k}}\left(
z\right)  =\alpha_{i_{1}...i_{k}}\left(  z\right)  $ because, in the
coordinates we chose, the matrix of the Euclidean metric is diagonal on both
$y\in\mathbb{R}^{2n+1}\backslash\{0\}$ and $z\in\mathbb{S}^{2n}$. Now, in each
non-zero term on the right hand side of (\ref{monopole 33}), the differential
$dx^{2n+1}$ appears. We can move it to the last right position just taking
into account a possible additional $\left(  -1\right)  ^{\left\vert
\sigma\right\vert }$ for a suitable permutation $\sigma$. This $\left(
-1\right)  ^{\left\vert \sigma\right\vert }$, however, cancels with the same
$\left(  -1\right)  ^{\left\vert \sigma\right\vert }$ that comes from moving
the index $j_{l}=2n+1$ in $\varepsilon_{i_{1}...i_{k}j_{1}...j_{2n+1-k}%
}^{1...2n+1}$ to the last right position. In this case, $\varepsilon
_{i_{1}...i_{k}j_{1}...j_{l}}^{1...2n+1}=\varepsilon_{i_{1}...i_{k}%
j_{1}...j_{2n-k}}^{1...2n}$. Therefore, (\ref{monopole 33}) equals%
\begin{gather*}
\frac{1}{k!r^{k}(y)}\left(  \sum_{j_{1}<...<j_{2n-k}}\sum_{i_{1}<...<i_{k}%
}\alpha^{i_{1}...i_{k}}\left(  z\right)  \varepsilon_{i_{1}...i_{k}%
j_{1}...j_{2n-k}}^{1...2n}\left(  dx^{j_{1}}\wedge...\wedge dx^{j_{2n-k}%
}\right)  (y)\right)  \wedge dx^{2n+1}(y)\smallskip\\
=r^{2(n-k)}(y)\left(  f^{\ast}\left(  \ast\alpha\right)  (y)\right)  \wedge
dx^{2n+1}(y).
\end{gather*}
Finally, observe that $dr$ coincides with $dx^{2n+1}$ at $y$, $dr\left(
y\right)  =dx^{2n+1}(y)$, so%
\[
\ast f^{\ast}(\alpha)(y)=r^{2(n-k)}(y)\left(  f^{\ast}\left(  \ast
\alpha\right)  (y)\right)  \wedge dr(y).
\]
Since the point $y\in\mathbb{R}^{2n+1}\backslash\{0\}$ we chose was completely
arbitrary, we conclude that (\ref{monopole 32}) holds globally.\smallskip
\end{proof}

\begin{proof}
[Proof of Proposition \ref{proposition YM}]Let $\varphi\in\Omega_{equiv}%
^{k}\left(  P;\mathfrak{g}\right)  ^{\operatorname*{Hor}}$ and let $F^{\ast
}:f^{\ast}\left(  P\right)  \rightarrow P$ be the natural bundle homomorphism
from the pull-back of $\pi:P\rightarrow S^{2n}$ by $f$ (see
\ref{eq proyeccion homotopica}). $F^{\ast}\left(  \varphi\right)  $ can be
naturally seen as a form in $\Omega_{equiv}^{k}\left(  f^{\ast}%
(P);\mathfrak{g}\right)  ^{\operatorname*{Hor}}$. It is not difficult to
realize then from Lemma \ref{lema operador hodge} that%
\[
\ast F^{\ast}\left(  \varphi\right)  =\overline{\pi}^{\ast}\left(
r^{2(n-k)}\right)  F^{\ast}\left(  \ast\varphi\right)  ~\bar{\wedge}%
~\overline{\pi}^{\ast}(dr),
\]
where $\overline{\pi}:f^{\ast}\left(  P\right)  \rightarrow\mathbb{R}%
^{2n+1}\backslash\{0\}$ and the product $\bar{\wedge}$ of two forms $\beta
\in\Omega^{r}\left(  f^{\ast}(P);\mathfrak{g}\right)  $ and $\alpha\in
\Omega^{q}\left(  f^{\ast}(P)\right)  $ must be understood through the product
of an element of the vector space $\mathfrak{g}$ by a real number; that is,%
\[
\beta~\bar{\wedge}~\alpha\left(  Y_{1},...,Y_{r+q}\right)  =\frac{1}{r!q!}%
\sum_{\sigma\in S_{r+q}}(-1)^{\left\vert \sigma\right\vert }\underset
{\in\mathbb{R}}{\underbrace{\alpha\left(  Y_{\sigma(1)},...,Y_{\sigma
(r)}\right)  }~}\underset{\in\mathfrak{g}}{\underbrace{\beta\left(
Y_{\sigma(r+1)},...,Y_{\sigma(r+q)}\right)  }},
\]
for any $\{Y_{1},...,Y_{r+q}\}\subset\mathfrak{X}\left(  f^{\ast}(P)\right)  $.

Let now $\omega\in\Omega_{equiv}^{1}\left(  P;\mathfrak{g}\right)  $ be a
principal connection and $F^{\ast}\left(  \omega\right)  \in\Omega_{equiv}%
^{1}\left(  f^{\ast}\left(  P\right)  ;\mathfrak{g}\right)  $ the
corresponding principal connection on $\overline{\pi}:f^{\ast}\left(
P\right)  \rightarrow\mathbb{R}^{2n+1}\backslash\{0\}$. Observe that%
\[
T_{y}F(\operatorname*{Hor}\nolimits_{y})=T_{y}F\left(  \ker\left(  F^{\ast
}(\omega)(y)\right)  \right)  =\ker\omega(F(y))=\operatorname*{Hor}%
\nolimits_{F(y)}\subset T_{F(y)}P
\]
therefore, as far as their field strengths is concerned (\cite[17.5]{michor}),%
\begin{align*}
\Omega^{F^{\ast}(\omega)}  &  =D^{F^{\ast}(\omega)}(F^{\ast}(\omega))=\left.
d\circ F^{\ast}(\omega)\right\vert _{\operatorname*{Hor}\nolimits_{y}}\\
&  =\left.  F^{\ast}(d\circ\omega)\right\vert _{\operatorname*{Hor}%
\nolimits_{y}}=F^{\ast}\left(  \left.  d\circ\omega\right\vert
_{\operatorname*{Hor}\nolimits_{F(y)}}\right)  =F^{\ast}\left(  \Omega
^{\omega}\right)  .
\end{align*}
Then,
\begin{align*}
-\delta^{F^{\ast}(\omega)}\Omega^{F^{\ast}(\omega)}  &  =\ast\circ D^{F^{\ast
}(\omega)}\circ\ast\left(  \Omega^{F^{\ast}(\omega)}\right)  =\ast\circ
D^{F^{\ast}(\omega)}\circ\ast\left(  F^{\ast}\left(  \Omega^{\omega}\right)
\right) \\
&  =\ast\circ D^{F^{\ast}(\omega)}\left(  \overline{\pi}^{\ast}\left(
r^{2(n-2)}\right)  F^{\ast}\left(  \ast\Omega^{\omega}\right)  \hspace
{2.16pt}\bar{\wedge}\hspace{2.16pt}\overline{\pi}^{\ast}(dr)\right) \\
&  =\ast\circ\left.  d\left(  \overline{\pi}^{\ast}\left(  r^{2(n-2)}\right)
F^{\ast}\left(  \ast\Omega^{\omega}\right)  \hspace{2.16pt}\bar{\wedge}%
\hspace{2.16pt}\overline{\pi}^{\ast}(dr)\right)  \right\vert
_{\operatorname*{Hor}}\\
&  =\ast\left(  \overline{\pi}^{\ast}\left(  r^{2(n-2)}\right)  F^{\ast
}\left(  \left.  d\circ\ast(\Omega^{\omega})\right\vert _{\operatorname*{Hor}%
}\right)  \hspace{2.16pt}\bar{\wedge}\hspace{2.16pt}\overline{\pi}^{\ast
}(dr)\right)
\end{align*}
where in the last line we have used that $\overline{\pi}^{\ast}(dr)$ was
already a horizontal form. Thus,%
\[
\delta^{F^{\ast}(\omega)}\Omega^{F^{\ast}(\omega)}=-\ast\left(  \overline{\pi
}^{\ast}\left(  r^{2(n-2)}\right)  F^{\ast}\left(  D^{\omega}\circ\ast
(\Omega^{\omega})\right)  \hspace{2.16pt}\bar{\wedge}\hspace{2.16pt}%
\overline{\pi}^{\ast}(dr)\right)
\]
Now, if $\alpha\in\Omega_{equiv}^{k}\left(  P;\mathfrak{g}\right)
^{\operatorname*{Hor}}$, then%
\begin{equation}
\ast\circ\ast\left(  \alpha\right)  =\left(  -1\right)  ^{k(m-k)}\alpha,
\label{monopole 35}%
\end{equation}
where $m=2n+1$ or $m=2n$ if the base manifold is $\mathbb{R}^{2n+1}%
\backslash\{0\}$ or $\mathbb{S}^{2n}$ respectively. On the other hand, by
Lemma \ref{lema operador hodge},%
\begin{align}
\ast\left(  F^{\ast}\left(  \ast\circ D^{\omega}\circ\ast(\Omega^{\omega
})\right)  \right)   &  =\overline{\pi}^{\ast}\left(  r^{2(n-1)}\right)
F^{\ast}\left(  \ast\circ\ast\circ D^{\omega}\circ\ast(\Omega^{\omega
})\right)  \overline{\pi}^{\ast}\left(  dr\right) \nonumber\\
&  =\left(  -1\right)  ^{2n-1}\overline{\pi}^{\ast}\left(  r^{2(n-1)}\right)
F^{\ast}\left(  D^{\omega}\circ\ast(\Omega^{\omega})\right)  \hspace
{2.16pt}\bar{\wedge}\hspace{2.16pt}\overline{\pi}^{\ast}\left(  dr\right)  .
\label{monopole 34}%
\end{align}
Taking the Hodge operator in both sides of (\ref{monopole 34}) and using
(\ref{monopole 35}),%
\[
\left(  F^{\ast}\left(  \ast\circ D^{\omega}\circ\ast(\Omega^{\omega})\right)
\right)  =-\ast\left(  \overline{\pi}^{\ast}\left(  r^{2(n-1)}\right)
F^{\ast}\left(  D^{\omega}\circ\ast(\Omega^{\omega})\right)  \hspace
{2.16pt}\bar{\wedge}\hspace{2.16pt}\overline{\pi}^{\ast}\left(  dr\right)
\right)  ,
\]
so%
\[
\delta^{F^{\ast}(\omega)}\Omega^{F^{\ast}(\omega)}=\overline{\pi}^{\ast
}\left(  \frac{r^{2(n-2)}}{r^{2(n-1)}}\right)  F^{\ast}\left(  \ast\circ
D^{\omega}\circ\ast(\Omega^{\omega})\right)  =-\frac{1}{\overline{\pi}^{\ast
}\left(  r^{2}\right)  }F^{\ast}\left(  \delta^{\omega}\Omega^{\omega}\right)
.
\]

\end{proof}

\subsection{Proof of Proposition \ref{prop 1}\label{appendix prop 1}}

Sometimes, principal connections are more conveniently described by means of a
one form $\Phi\in\Omega^{1}\left(  P;VP\right)  $ with values on the vertical
bundle $VP=\cup_{p\in P}\operatorname*{Ver}_{p}$,%
\[
\Phi_{p}\left(  X\right)  =T_{e}R_{p}\circ\omega_{p}(X).
\]
In this expression $X\in\mathfrak{X}(P)$, $p\in P$, $e\in G$ denotes the unit
element, and $R_{p}:G\rightarrow P$ is the right action $R_{p}(g):=R(g,p)$ for
any $g\in G$. The principal connection $\Phi$ satisfies that $TR_{g}\circ
\Phi=\Phi\circ TR_{g}$ or, equivalently, $\Phi=TR_{g^{-1}}\circ\Phi\circ
TR_{g}$ for any $g\in G$.

In the particular case of homogeneous principal bundles $\pi:P_{\lambda
}\rightarrow K/H$, principal connections $\Phi_{\lambda}\in\Omega^{1}\left(
P_{\lambda};VP_{\lambda}\right)  $ can be built from principal connections
$\Phi\in\Omega^{1}\left(  K;VK\right)  $ on $K\rightarrow K/H$. In order to
show how this construction works, we are going to explicitly describe
$TP_{\lambda}$. First of all, it can be proved that $T\pi:TK\rightarrow
T(K/H)$ is again a principal bundle with structural group $TH$ with right
action,%
\begin{equation}%
\begin{array}
[c]{rrl}%
TR:TK\times TH & \longrightarrow & TK\\
\left(  \left(  k,X_{k}\right)  ,\left(  h,X_{h}\right)  \right)  &
\longmapsto & \left(  kh,T_{h}L_{k}(X_{h})+T_{k}R_{h}(X_{k})\right)  ,
\end{array}
\label{monopole 5}%
\end{equation}
where $X_{h}\in T_{h}H$ and $X_{k}\in T_{k}K$. In addition, if
$\operatorname*{inv}:H\rightarrow H$, $\operatorname*{inv}(h):=h^{-1}$ denotes
the inverse map of the Lie group $H$, $TH$ acts on $TG$ by the right action%
\begin{equation}%
\begin{array}
[c]{rrl}%
TG\times TH & \longrightarrow & TK\\
\left(  \left(  g,X_{g}\right)  ,\left(  h,X_{h}\right)  \right)  &
\longmapsto & \left(  \lambda(h)^{-1}g,T_{g}L_{\lambda(h)^{-1}}(X_{g}%
)+T_{\lambda(h)^{-1}}R_{g}\circ T_{h^{-1}}\lambda\circ T_{h}%
\operatorname*{inv}(X_{h})\right)  ,
\end{array}
\label{monopole 6}%
\end{equation}
so that the tangent space $TP_{\lambda}$ equals the associated bundle
$TK\times_{TH}TG$ (\cite[Theorem 18.18]{michor}). That is, $TP_{\lambda}$ is
the orbit space of $TK\times TG$ under the $TH$-action $T\Psi_{\lambda}$.
Using the fact that $TP_{\lambda}=TK\times_{TH}TG$, the connection
$\Phi_{\lambda}$ induced from $\Phi$ is defined by the following commutative
diagram:%
\begin{equation}%
\begin{array}
[c]{rrl}%
TK\times TG & \overset{\Phi\times\operatorname*{Id}}{\longrightarrow} &
TK\times TG\\
_{Tq}\downarrow &  & \downarrow~_{Tq}\smallskip\\
TK\times_{TH}TG & \underset{\Phi_{\lambda}}{\longrightarrow} & TK\times
_{TH}TG=T\left(  K\times_{H}G\right)  ,
\end{array}
\label{monopole 3}%
\end{equation}
where $q:K\times G\rightarrow K\times_{H}G$ sends each element to its
corresponding equivalent class in $K\times_{H}G$ and $Tq$ is its tangent
map.\smallskip

\begin{proof}
[Proof of Proposition \ref{prop 1}]Take $p=[e,e]^{\sim}\in P_{\lambda}$ on the
fiber $\pi^{-1}\left(  o\right)  $ and let $\xi\in\mathfrak{k}$. Since the
$K$-action on $P_{\lambda}$ is simply the left action $L_{\lambda}$ introduced
in (\ref{monopole 2}), the infinitesimal generator $\xi_{P_{\lambda}}$ at $p$
corresponds to the equivalent class $[\xi,0]_{p}^{\sim}$ in $TK\times_{TH}TG$.
Observe that $[\xi,0]_{p}^{\sim}$ denotes the orbit of $\left(  (e,\xi
),(e,0)\right)  \in TK\times TG$ under the action of $TH$. By
(\ref{monopole 5}) and (\ref{monopole 6}), $[\xi,0]_{p}^{\sim}$ is equivalent
to%
\[
\left[  X_{h}+T_{e}R_{h}(\xi),T_{h^{-1}}\lambda\circ T_{h}\operatorname*{inv}%
(X_{h})\right]  _{(h,\lambda(h)^{-1})}^{\sim}%
\]
for any $X_{h}\in T_{h}H$, $h\in H$. Taking $h=e\in H$, we have%
\begin{equation}
\lbrack\xi,0]_{p}^{\sim}=\left[  \eta+\xi,-T_{e}\lambda(\eta)\right]
_{p}^{\sim}. \label{monopole 7}%
\end{equation}
In (\ref{monopole 7}), we have written $\eta\in\mathfrak{h}=T_{e}H$ instead of
$X_{e}$ and have used $T_{e}\operatorname*{inv}=-\operatorname*{Id}$.

On the other hand, $\Phi\in\Omega^{1}\left(  K;VK\right)  $ coincides with the
projection $\operatorname*{proj}_{\mathfrak{h}}:\mathfrak{k}\rightarrow
\mathfrak{h}$ from $\mathfrak{k}$ to $\mathfrak{h}$ at $e\in K$. Therefore,
(\ref{monopole 3}) implies
\[
(\Phi_{\lambda})\left(  [e,e]^{\sim}\right)  (\xi_{P_{\lambda}})=(\Phi
_{\lambda})\left(  [e,e]^{\sim}\right)  \left(  [\xi,0]_{(e,e)}^{\sim}\right)
=[\operatorname*{proj}\nolimits_{\mathfrak{h}}(\xi),0]_{(e,e)}^{\sim}.
\]
By (\ref{monopole 7}) with $\eta=-\operatorname*{proj}\nolimits_{\mathfrak{h}%
}(\xi)$, $[\operatorname*{proj}\nolimits_{\mathfrak{h}}(\xi),0]_{(e,e)}^{\sim
}$ is equivalent to%
\[
\left[  0,T_{e}\lambda\left(  \operatorname*{proj}\nolimits_{\mathfrak{h}}%
(\xi)\right)  \right]  _{(e,e)}^{\sim}.
\]
Now, for any $\eta\in\mathfrak{g}$, $T_{e}(R_{\lambda})_{p}(\eta)\in
TP_{\lambda}$ equals $[0,\eta]_{p}^{\sim}$ in $TK\times_{TH}TG$. Hence, the
principal connection $\omega_{p}=(T_{e}(R_{\lambda})_{p})^{-1}\circ
(\Phi_{\lambda})_{p}$ satisfies%
\begin{align*}
\omega\left(  p\right)  (\xi_{P_{\lambda}})  &  =(T_{e}(R_{\lambda})_{p}%
)^{-1}\left(  [\operatorname*{proj}\nolimits_{\mathfrak{h}}(\xi),0]_{(e,e)}%
^{\sim}\right)  =(T_{e}(R_{\lambda})_{p})^{-1}\left(  \left[  0,T_{e}%
\lambda\left(  \operatorname*{proj}\nolimits_{\mathfrak{h}}(\xi)\right)
\right]  _{(e,e)}^{\sim}\right) \\
&  =T_{e}\lambda\left(  \operatorname*{proj}\nolimits_{\mathfrak{h}}%
(\xi)\right)  =\mathbf{W}\left(  \xi\right)
\end{align*}
if $\mathbf{W}:\mathfrak{k}\rightarrow\mathfrak{g}$ is the canonical connection.
\end{proof}

\subsection{Characteristic classes of
SO(2n)-monopoles\label{subsection appendix Townsend}}

According to what we said in Subsection \ref{subsection Townsend monopole}, we
are going to explicitly show that the $q$-th Chern class of the principal
bundle $SO(2q+1)\rightarrow\mathbb{S}^{2q}=SO(2q+1)/SO(2q)$ is zero. If $q$ is
odd, then the characteristic coefficient $c_{q}^{2q}$ is zero and,
consequently, so is the corresponding $q$-th Chern class. If $q$ is even then,
by (\ref{monopole 30}),
\begin{align*}
&  \left(  -1\right)  ^{q}\left(  2\pi i\right)  ^{q}q!\,\pi^{\ast}\left(
cw\left(  \operatorname*{Sym}(c_{q}^{2q}),P,\boldsymbol{\omega}\right)
\right)  \left(  \upsilon_{1},...,\upsilon_{2q}\right) \\
&  =\sum_{i_{1}<...<i_{q}}\sum_{\eta\in S_{q}}\left(  -1\right)  ^{\left\vert
\eta\right\vert }\left(  \Omega^{\boldsymbol{\omega}}\right)  _{\eta(i_{1}%
)}^{i_{1}}\wedge...\wedge\left(  \Omega^{\boldsymbol{\omega}}\right)
_{\eta(i_{q})}^{i_{q}}\left(  \upsilon_{1},...,\upsilon_{2q}\right) \\
&  =\frac{1}{2^{q}}\sum_{i_{1}<...<i_{q}}\sum_{\eta\in S_{q}}\left(
-1\right)  ^{\left\vert \eta\right\vert }\sum_{\sigma\in S_{2q}}\left(
-1\right)  ^{\left\vert \sigma\right\vert }\left(  \Omega_{\lambda
_{\operatorname*{Id}}}^{\mathbf{\widetilde{W}}}\left(  \upsilon_{\sigma
(1)},\upsilon_{\sigma(2)}\right)  \right)  _{\eta(i_{1})}^{i_{1}}\cdots\left(
\Omega_{\lambda_{\operatorname*{Id}}}^{\mathbf{\widetilde{W}}}\left(
\upsilon_{\sigma(2q-1)},\upsilon_{\sigma(2q)}\right)  \right)  _{\eta(i_{q}%
)}^{i_{q}}\\
&  =\frac{1}{2^{q}}\sum_{i_{1}<...<i_{q}}\sum_{\eta\in S_{q}}\left(
-1\right)  ^{\left\vert \eta\right\vert }\sum_{\sigma\in S_{2q}}\left(
-1\right)  ^{\left\vert \sigma\right\vert }\left(  \xi_{\sigma(1),\sigma
(2)}\right)  _{\eta(i_{1})}^{i_{1}}\cdots\left(  \xi_{\sigma(2q-1),\sigma
(2q)}\right)  _{\eta(i_{q})}^{i_{q}}.
\end{align*}
Using (\ref{monopole 31}), $\left(  \xi_{\sigma(1),\sigma(2)}\right)
_{\eta(i_{1})}^{i_{1}}\cdots\left(  \xi_{\sigma(2q-1),\sigma(2q)}\right)
_{\eta(i_{q})}^{i_{q}}$ equals%
\[
\prod_{r\in\{1,2,...,q\}}\left(  -1\right)  ^{U(\sigma(2r)-\sigma
(2r-1))}\left(  -1\right)  ^{U(\eta(i_{r})-i_{r})}\delta_{\sigma(2r-1)}%
^{i_{r}}\delta_{\sigma(2r)\eta(i_{r})}.
\]
But $\delta_{\sigma(2r-1)}^{i_{r}}\delta_{\sigma(2r)\eta(i_{r})}$ must be zero
for some $r\in\{1,2,...,q\}$ for any $\sigma\in S_{2q}$ because $\{\sigma
(2r-1),\sigma(2r)\}$ cover all the indices in $\{1,2,....,2q\}$ as $r$ ranges
from $1$ to $q$ but $\{i_{r},\eta(i_{r})\}$ only $q$ of them. Thus,%
\[
\pi^{\ast}\left(  cw\left(  \operatorname*{Sym}(c_{q}^{2q}%
),P,\boldsymbol{\omega}\right)  \right)  \left(  \upsilon_{1},...,\upsilon
_{2q}\right)  =0
\]
and the Chern class vanishes.

The same argument applied to the Euler class (\ref{monopole 28}) shows that%
\begin{align}
&  2^{q}q!\,\pi^{\ast}\left(  cw\left(  \operatorname*{Sym}(\operatorname*{Pf}%
),P_{\lambda_{\operatorname*{Id}}},\boldsymbol{\omega}\right)  \right)
\left(  \upsilon_{1},...,\upsilon_{2q}\right)  \nonumber\\
&  =\left(  \sum_{\eta\in S_{2q}}\left(  -1\right)  ^{\left\vert
\eta\right\vert }\left(  \Omega^{\boldsymbol{\omega}}\right)  _{\eta(2)}%
^{\eta(1)}\wedge\cdots\wedge\left(  \Omega^{\boldsymbol{\omega}}\right)
_{\eta(2q)}^{\eta(2q-1)}\right)  \left(  \upsilon_{1},...,\upsilon
_{2q}\right)  \nonumber\\
&  =\frac{1}{2^{q}}\sum_{\eta\in S_{2q}}\left(  -1\right)  ^{\left\vert
\eta\right\vert }\sum_{\sigma\in S_{2q}}\left(  -1\right)  ^{\left\vert
\sigma\right\vert }\left(  \Omega_{\lambda_{\operatorname*{Id}}}%
^{\mathbf{\widetilde{W}}}\left(  \upsilon_{\sigma(1)},\upsilon_{\sigma
(2)}\right)  \right)  _{\eta(2)}^{\eta(1)}\cdots\left(  \Omega_{\lambda
_{\operatorname*{Id}}}^{\mathbf{\widetilde{W}}}\left(  \upsilon_{\sigma
(2q-1)},\upsilon_{\sigma(2q)}\right)  \right)  _{\eta(2q)}^{\eta
(2q-1)}\nonumber\\
&  =\frac{1}{2^{q}}\sum_{\eta\in S_{2q}}\left(  -1\right)  ^{\left\vert
\eta\right\vert }\sum_{\sigma\in S_{2q}}\left(  -1\right)  ^{\left\vert
\sigma\right\vert }\left(  \xi_{\sigma(1),\sigma(2)}\right)  _{\eta(2)}%
^{\eta(1)}\cdots\left(  \xi_{\sigma(2q-1),\sigma(2q)}\right)  _{\eta
(2q)}^{\eta(2q-1)}.\label{monopole 29}%
\end{align}
Using (\ref{monopole 31}), (\ref{monopole 29}) equals%
\begin{align*}
&  \frac{1}{2^{q}}\sum_{\eta\in S_{2q}}\left(  -1\right)  ^{\left\vert
\eta\right\vert }\sum_{\sigma\in S_{2q}}\left(  -1\right)  ^{\left\vert
\sigma\right\vert }\prod_{i\in\{1,3,...,2q-1\}}\left(  -1\right)
^{U(\sigma(i+1)-\sigma(i))}\left(  -1\right)  ^{U(\eta(i+1)-\eta(i))}%
\delta_{\sigma(i)}^{\eta(i)}\delta_{\sigma(i+1)\eta(i+1)}\\
&  =\frac{1}{2^{q}}\sum_{\eta\in S_{2q}}\left(  -1\right)  ^{\left\vert
\eta\right\vert }\left(  -1\right)  ^{\left\vert \eta\right\vert }%
=\frac{\left(  2q\right)  !}{2^{q}}.
\end{align*}
Since $\operatorname{vol}\left(  \mathbb{S}^{2q}\right)  =\frac{2^{2q+1}%
\pi^{q}q!}{(2q)!}$, we conclude that the charge $Q$ of the monopole is%
\[
Q=\frac{1}{\pi^{q}}\int_{\mathbb{S}^{2q}}cw\left(  \operatorname*{Sym}%
(\operatorname*{Pf}),P_{\lambda_{\operatorname*{Id}}},\boldsymbol{\omega
}\right)  =\frac{1}{2^{2q}\pi^{q}q!}\left(  2q\right)  !\operatorname{vol}%
\left(  \mathbb{S}^{2q}\right)  =2
\]

\bigskip

\noindent\textbf{Acknowledgments.} The authors are indebted to Manuel Asorey,
Luis Joaqu\'{\i}n Boya, Adil Belhaj, Gregory Naber, Antonio Segu\'{\i}, and
Paul K. Townsend for their valuable comments and suggestions. They also thank
Juan-Pablo Ortega for his critical reading of the manuscript. J.-A. L.-C.
acknowledges partial support from MEC grant BFM2006-10531 and Gobierno de
Arag\'{o}n grant DGA-Grupos Consolidados 225-206. P. D. has been supported by
Grupo Te\'{o}rico de Altas Energ\'{\i}as (Grupo de Excelencia 2008 DGA) and
MEC grant FPA2006-02315.

\end{document}